\documentclass[a4paper,cleveref, autoref, thm-restate,svgnames]{lipics-v2019}
\newtheorem*{thm1}{Theorem 1}
\newtheorem*{thm2}{Theorem 2}

%This is a template for producing LIPIcs articles. 
%See lipics-manual.pdf for further information.
%for A4 paper format use option "a4paper", for US-letter use option "letterpaper"
%for british hyphenation rules use option "UKenglish", for american hyphenation rules use option "USenglish"
%for section-numbered lemmas etc., use "numberwithinsect"
%for enabling cleveref support, use "cleveref"
%for enabling autoref support, use "autoref"
%for anonymousing the authors (e.g. for double-blind review), add "anonymous"
%for enabling thm-restate support, use "thm-restate"

%\graphicspath{{./graphics/}}%helpful if your graphic files are in another directory
\hideLIPIcs{}
\nolinenumbers

\bibliographystyle{apalike}% the mandatory bibstyle

\title{Eternal Vertex Cover on Bipartite and Co-Bipartite Graphs} %TODO Please add

\titlerunning{Eternal Vertex Cover on Bipartite and Co-Bipartite Graphs} %TODO optional, please use if title is longer than one line

\author{Neeldhara Misra}{Department of Computer Science and Engineering,\and Indian Institute of Technology, Gandhinagar \and \url{http://www.neeldhara.com} }{neeldhara.m@iitgn.ac.in}{https://orcid.org/0000-0003-1727-5388}{}%TODO mandatory, please use full name; only 1 author per \author macro; first two parameters are mandatory, other parameters can be empty. Please provide at least the name of the affiliation and the country. The full address is optional

\author{Saraswati Girish Nanoti}{Department of Mathematics,\and Indian Institute of Technology, Gandhinagar}{nanoti\_saraswati@iitgn.ac.in}{}{}

\authorrunning{N. Misra and S. Nanoti} %TODO mandatory. First: Use abbreviated first/middle names. Second (only in severe cases): Use first author plus 'et al.'

\Copyright{N. Misra and S. Nanoti} %TODO mandatory, please use full first names. LIPIcs license is "CC-BY";  http://creativecommons.org/licenses/by/3.0/

\begin{CCSXML}
<ccs2012>
   <concept>
       <concept_id>10003752.10003809.10010052</concept_id>
       <concept_desc>Theory of computation~Parameterized complexity and exact algorithms</concept_desc>
       <concept_significance>500</concept_significance>
       </concept>
   <concept>
       <concept_id>10003752.10003809.10003635</concept_id>
       <concept_desc>Theory of computation~Graph algorithms analysis</concept_desc>
       <concept_significance>500</concept_significance>
       </concept>
   <concept>
       <concept_id>10003752.10003809.10011254</concept_id>
       <concept_desc>Theory of computation~Algorithm design techniques</concept_desc>
       <concept_significance>500</concept_significance>
       </concept>
 </ccs2012>
\end{CCSXML}

\ccsdesc[500]{Theory of computation~Parameterized complexity and exact algorithms}
\ccsdesc[500]{Theory of computation~Graph algorithms analysis}
\ccsdesc[500]{Theory of computation~Algorithm design techniques}

%TODO mandatory: Please choose ACM 2012 classifications from https://dl.acm.org/ccs/ccs_flat.cfm 

\keywords{eternal vertex cover, vertex cover, kernelization, polynomial compression, bipartite, cobipartite, polynomial time algorithms} %TODO mandatory; please add comma-separated list of keywords

\category{} %optional, e.g. invited paper

\relatedversion{} %optional, e.g. full version hosted on arXiv, HAL, or other respository/website
%\relatedversion{A full version of the paper is available at \url{...}.}

\supplement{}%optional, e.g. related research data, source code, ... hosted on a repository like zenodo, figshare, GitHub, ...

%\funding{(Optional) general funding statement \dots}%optional, to capture a funding statement, which applies to all authors. Please enter author specific funding statements as fifth argument of the \author macro.

% \acknowledgements{I want to thank \dots}%optional

%\nolinenumbers %uncomment to disable line numbering

%\hideLIPIcs  %uncomment to remove references to LIPIcs series (logo, DOI, ...), e.g. when preparing a pre-final version to be uploaded to arXiv or another public repository

%Editor-only macros:: begin (do not touch as author)%%%%%%%%%%%%%%%%%%%%%%%%%%%%%%%%%%
\EventEditors{John Q. Open and Joan R. Access}
\EventNoEds{2}
\EventLongTitle{42nd Conference on Very Important Topics (CVIT 2016)}
\EventShortTitle{CVIT 2016}
\EventAcronym{CVIT}
\EventYear{2016}
\EventDate{December 24--27, 2016}
\EventLocation{Little Whinging, United Kingdom}
\EventLogo{}
\SeriesVolume{42}
\ArticleNo{23}
%%%%%%%%%%%%%%%%%%%%%%%%%%%%%%%%%%%%%%%%%%%%%%%%%%%%%%

\usepackage{etoolbox}
\makeatletter
\patchcmd{\BR@backref}{\newblock}{\newblock($\uparrow$~}{}{}
\patchcmd{\BR@backref}{\par}{)\par}{}{}
\makeatother

\usepackage{natbib}
\hypersetup{
    colorlinks=true,
    linkcolor=IndianRed,
    filecolor=magenta,      
    urlcolor=DodgerBlue,
    citecolor = SeaGreen
}

\usepackage{caption}
\usepackage{subcaption}
\usepackage{charter}
\usepackage[backgroundcolor=SkyBlue,linecolor=SkyBlue,obeyFinal]{todonotes}

\makeatletter
\providecommand\@dotsep{5}
\def\listtodoname{}
\def\listoftodos{\@starttoc{tdo}\listtodoname}
\makeatother

\newcounter{nmcomment}

\usepackage{parskip}
\usepackage{comment}
\excludecomment{shortver}
\includecomment{longver}

\usepackage{wrapfig}
\usepackage[font=small]{caption, subcaption}
\usepackage{tikz}
\usetikzlibrary{positioning}
\usetikzlibrary{arrows}
\usetikzlibrary{snakes}
\usetikzlibrary{decorations.pathmorphing}
\usetikzlibrary{decorations.markings}
\usetikzlibrary{decorations.pathreplacing}
\usetikzlibrary{shapes.geometric}
\tikzset{
    small circles/.style={circle,inner sep=2pt,fill=#1},
    hollow circles/.style n args={2}{circle,inner sep=#1,draw=#2,thick},
    stars/.style={star,inner sep=2pt}
}
    
\usepackage{latexsym,amssymb,amsmath,mathtools,eulervm,xspace}

%%%% MACROS
%Common Macros

\newcommand{\defparproblem}[4]{
\vspace{1mm}
\begin{center}
\noindent\fbox{

  \begin{minipage}{.9\linewidth}
  \begin{tabular*}{\linewidth}{@{\extracolsep{\fill}}lr} \textsc{#1}  \\ \end{tabular*}
  {\bf{Input:}} #2  \\
  {\bf{Parameter:}} #3  \\
  {\bf{Question:}} #4
  \end{minipage}

  }
\end{center}
  \vspace{1mm}
}

\newcommand{\NPH}{\ensuremath{\mathsf{NP}}-hard\xspace}

% Paper-specific macros

\newcommand{\EVC}{\textsc{Eternal Vertex Cover}}
\newcommand{\ECVC}{\textsc{Eternal Connected Vertex Cover}}

\newcommand{\RBDS}{\textsc{Red Blue Dominating Set}}

\begin{document}

\maketitle

%TODO mandatory: add short abstract of the document
\begin{abstract}
The \EVC{} problem is a dynamic variant of the vertex cover problem. We have a two player game in which guards are placed on some vertices of a graph. In every move, one player (the attacker) attacks an edge. In response to the attack, the second player (the defender) moves some of the guards along the edges of the graph in such a manner that at least one guard moves along the attacked edge. If such a movement is not possible, then the attacker wins. If the defender can defend the graph against an infinite sequence of attacks, then the defender wins. 

The minimum number of guards with which the defender has a winning strategy is called the eternal vertex cover number of the graph $G$. On general graphs, the computational problem of determining the minimum eternal vertex cover number is \NPH{} and admits a $2$-approximation algorithm and an exponential kernel. The complexity of the problem on bipartite graphs is open, as is the question of whether the problem admits a polynomial kernel. 

We settle both these questions by showing that Eternal Vertex Cover is \NPH{} and does not admit a polynomial compression even on bipartite graphs of diameter six. We also show that the problem admits a polynomial time algorithm on the class of cobipartite graphs.
\end{abstract}

\newpage 

\section{Introduction}
The \EVC{} problem is a dynamic variant of the vertex cover problem introduced by~\cite{KM09}. The setting is the following. We have a two player game --- between players whom we will refer to as the \emph{attacker} and \emph{defender} --- on a simple, undirected graph $G$. In the beginning, the defender can choose to place guards on some of the vertices of $G$. The attacker's move involve choosing an edge to ``attack''. The defender is able to ``defend'' this attack if she can move the guards along the edges of the graph in such a way that at least one guard moves along the attacked edge. If such a movement is not possible, then the attacker wins. If the defender can defend the graph against an infinite sequence of attacks, then the defender wins (see~\Cref{fig:intro}). The minimum number of guards with which the defender has a winning strategy is called the \emph{eternal vertex cover number} of the graph $G$ and is denoted by $evc(G)$. 

\begin{figure}[ht]
\centering
\begin{subfigure}[b]{0.42\textwidth}
\resizebox{\textwidth}{!}{%
   \begin{tikzpicture}[scale=0.25]
%\draw[step=1cm] (-5,-3) grid (5,3);
%\draw (-5,-3) rectangle (5,3);
\node[small circles=DodgerBlue] (A1) at (0,0) {};
\node[small circles=DodgerBlue] (B1) [below of=A1] {};

\foreach \x in {2,3,...,6}{
    \pgfmathtruncatemacro{\y}{\x-1}
    \node[small circles=DodgerBlue] (A\x) [right of=A\y] {};
}
\foreach \x in {2,3,...,5}{
    \pgfmathtruncatemacro{\y}{\x-1}
    \node[small circles=DodgerBlue] (B\x) [right of=B\y] {};
};
\node (B6) [right of = B5] {};    
\node[small circles=DodgerBlue] (B7) [right of = B6] {};

\draw (A1) -- (B1) -- (B2) -- (A2) -- (B3) -- (A3) -- (A4) -- (B4)  -- (B5) -- (A4) -- (A5) -- (A6) -- (B5) -- (A6) -- (B7);
\draw (A1) -- (A2);
\draw (B1) -- (A2);
\draw (A1) -- (B2);

\node[star,fill=DarkOrange,inner sep=2pt] at (A1) {};
\node[star,fill=DarkOrange,inner sep=2pt] at (A2) {};
\node[star,fill=DarkOrange,inner sep=2pt] at (B1) {};
\node[star,fill=DarkOrange,inner sep=2pt] at (A3) {};
\node[star,fill=DarkOrange,inner sep=2pt] at (A4) {};
\node[star,fill=DarkOrange,inner sep=2pt] at (B5) {};
\node[star,fill=DarkOrange,inner sep=2pt] at (A6) {};
\node[star,fill=DarkOrange,inner sep=2pt] at (B4) {};
\end{tikzpicture}
}
   \caption{The intial positions of the guards are denoted by the star-shaped vertices.}
   \label{fig:intro1} 
\end{subfigure}
\begin{subfigure}[b]{0.42\textwidth}
\resizebox{\textwidth}{!}{%
   \begin{tikzpicture}[scale=0.25]

%\draw[step=1cm] (-5,-3) grid (5,3);
%\draw (-5,-3) rectangle (5,3);

\node[small circles=DodgerBlue] (A1) at (0,0) {};
\node[small circles=DodgerBlue] (B1) [below of=A1] {};

\foreach \x in {2,3,...,6}{
    \pgfmathtruncatemacro{\y}{\x-1}
    \node[small circles=DodgerBlue] (A\x) [right of=A\y] {};
}
\foreach \x in {2,3,...,5}{
    \pgfmathtruncatemacro{\y}{\x-1}
    \node[small circles=DodgerBlue] (B\x) [right of=B\y] {};
};
\node (B6) [right of = B5] {};    
\node[small circles=DodgerBlue] (B7) [right of = B6] {};

\draw (A1) -- (B1) -- (B2) -- (A2) -- (B3) -- (A3) -- (A4) -- (B4)  -- (B5) -- (A4) -- (A5) -- (A6) -- (B5) -- (A6) -- (B7);
\draw (A1) -- (A2);
\draw (B1) -- (A2);
\draw (A1) -- (B2);

\node[star,fill=DarkOrange,inner sep=2pt] at (A1) {};
\node[star,fill=DarkOrange,inner sep=2pt] at (A2) {};
\node[star,fill=DarkOrange,inner sep=2pt] at (B1) {};
\node[star,fill=DarkOrange,inner sep=2pt] at (A3) {};
\node[star,fill=DarkOrange,inner sep=2pt] at (A4) {};
\node[star,fill=DarkOrange,inner sep=2pt] at (B5) {};
\node[star,fill=DarkOrange,inner sep=2pt] at (A6) {};
\node[star,fill=DarkOrange,inner sep=2pt] at (B4) {};
\draw[thick,color=IndianRed,decoration = {zigzag,segment length = 3pt,amplitude=1pt},decorate] (A6) -- (B7);
\end{tikzpicture}
}
   \caption{The attackers move targets the edge to the far-right, highighted by a wavy red line.}
   \label{fig:intro2}
   \end{subfigure}
\begin{subfigure}[b]{0.42\textwidth}
\resizebox{\textwidth}{!}{%
   \begin{tikzpicture}

%\draw[step=1cm] (-5,-3) grid (5,3);
%\draw (-5,-3) rectangle (5,3);

\node[small circles=DodgerBlue] (A1) at (0,0) {};
\node[small circles=DodgerBlue] (B1) [below of=A1] {};

\foreach \x in {2,3,...,6}{
    \pgfmathtruncatemacro{\y}{\x-1}
    \node[small circles=DodgerBlue] (A\x) [right of=A\y] {};
}
\foreach \x in {2,3,...,5}{
    \pgfmathtruncatemacro{\y}{\x-1}
    \node[small circles=DodgerBlue] (B\x) [right of=B\y] {};
};
\node (B6) [right of = B5] {};    
\node[small circles=DodgerBlue] (B7) [right of = B6] {};

\draw (A1) -- (B1) -- (B2) -- (A2) -- (B3) -- (A3) -- (A4) -- (B4)  -- (B5) -- (A4) -- (A5) -- (A6) -- (B5) -- (A6) -- (B7);
\draw (A1) -- (A2);
\draw (B1) -- (A2);
\draw (A1) -- (B2);

\node[star,fill=DarkOrange,inner sep=2pt] at (A1) {};
\node[star,fill=DarkOrange,inner sep=2pt] at (A2) {};
\node[star,fill=DarkOrange,inner sep=2pt] at (B1) {};
\node[star,fill=DarkOrange,inner sep=2pt] at (A3) {};
\node[star,fill=DarkOrange,inner sep=2pt] at (A4) {};
\node[star,fill=DarkOrange,inner sep=2pt] at (B5) {};
\node[star,fill=DarkOrange,inner sep=2pt] at (B7) {};
\node[star,fill=DarkOrange,inner sep=2pt] at (B4) {};
\draw[thick,color=SeaGreen,decoration = {zigzag,segment length = 3pt,amplitude=1pt},decorate] (A6) -- (B7);

\end{tikzpicture}
}
   \caption{The defender moves to defend the attack by moving a guard along the attacked edge.}
   \label{fig:Ng2}
   \end{subfigure}
\begin{subfigure}[b]{0.42\textwidth}
\resizebox{\textwidth}{!}{%
\begin{tikzpicture}

%\draw[step=1cm] (-5,-3) grid (5,3);
%\draw (-5,-3) rectangle (5,3);

\node[small circles=DodgerBlue] (A1) at (0,0) {};
\node[small circles=DodgerBlue] (B1) [below of=A1] {};

\foreach \x in {2,3,...,6}{
    \pgfmathtruncatemacro{\y}{\x-1}
    \node[small circles=DodgerBlue] (A\x) [right of=A\y] {};
}
\foreach \x in {2,3,...,5}{
    \pgfmathtruncatemacro{\y}{\x-1}
    \node[small circles=DodgerBlue] (B\x) [right of=B\y] {};
};
\node (B6) [right of = B5] {};    
\node[small circles=DodgerBlue] (B7) [right of = B6] {};

\draw (A1) -- (B1) -- (B2) -- (A2) -- (B3) -- (A3) -- (A4) -- (B4)  -- (B5) -- (A4) -- (A5) -- (A6) -- (B5) -- (A6) -- (B7);
\draw (A1) -- (A2);
\draw (B1) -- (A2);
\draw (A1) -- (B2);

\node[star,fill=DarkOrange,inner sep=2pt] at (A1) {};
\node[star,fill=DarkOrange,inner sep=2pt] at (A2) {};
\node[star,fill=DarkOrange,inner sep=2pt] at (B1) {};
\node[star,fill=DarkOrange,inner sep=2pt] at (A3) {};
\node[star,fill=DarkOrange,inner sep=2pt] at (A5) {};
\node[star,fill=DarkOrange,inner sep=2pt] at (B5) {};
\node[star,fill=DarkOrange,inner sep=2pt] at (B7) {};
\node[star,fill=DarkOrange,inner sep=2pt] at (B4) {};
\end{tikzpicture}
}
   \caption{The defender moves another guard to ensure that no edges are left vulnerable. This is the final position of the guards.}
   \label{fig:intro4}
   \end{subfigure}
\caption{An attack that is defended by moving two guards.}
\label{fig:intro}
\end{figure}
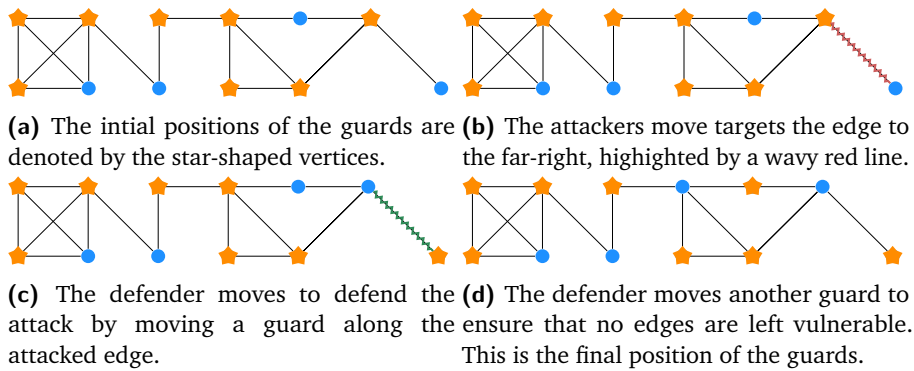

If $S_\ell$ is the subset of vertices that have guards on them after the defender has played her $\ell$-th move, and $S_\ell$ is not a vertex cover of $G$, then the attacker can target any of the uncovered edges to win the game. Therefore, when the defender has a winning strategy, it implies that she can always ``reconfigure'' one vertex cover into another in response to any attack, where the reconfiguration is constrained by the rules of how the guards can move and the requirement that at least one of these guards needs to move along the attacked edge. Therefore, it is clear that $evc(G)\geq mvc(G)$, where $mvc(G)$ denotes the minimum size of a vertex cover of $G$. It also turns out that twice as many vertices as the $mvc(G)$ also suffice the defend against any sequence of attacks --- this might be achieved, for example, by placing guards on both endpoints of any maximum matching. Therefore, we have that $mvc(G)\leq evc(G) \leq 2mvc(G)$.

\cite{KM09} gave a characterization of the graphs for which the upper bound is achieved. A characterization for graphs for which lower bound is achieved remains open, but several special cases have been addressed in the literature~\citep[see, for instance][]{BCFPRW2021}. Also,~\cite{KM2011} study graphs with equal eternal vertex cover and eternal domination numbers, which is a closely related dynamic variant of the dominating set problem.

The natural computational question associated with this parameter is the following: given a graph $G$ and a positive integer $k$, determine if $evc(G) \leq k$. The problem is only known to be in PSPACE in general. \cite{FGGKS2010} show that this problem is NP-hard by a reduction from vertex cover, and admits a $2$-approximation algorithm based on both endpoints of a matching. They also study the problem from a parameterized perspective. In parameterized complexity, one asks if for an instance of size $n$ and a parameter $k$, a problem can be solved in time $f(k) n^{\mathcal{O}(1)}$ where $f$ is an arbitrary computable function independent of~$n$. Problems that can be solved in that time are said to be fixed parameter tractable, and the corresponding complexity class is called FPT. They show that the problem is fixed parameter tractable when parameterized by the number of available guards $k$, by demonstrating an algorithm with running time $\mathcal{O}\left(2^{O\left(k^{2}\right)}+n m\right)$ for \EVC{}, where $n$ is the number of vertices and $m$ the number of edges of the input graph. This work leaves open the question of whether \EVC{} admits a polynomial kernel\footnote{We refer the reader to Section 2 for the definition of the notion of a polynomial kernel.}.  

The comptuational question of \EVC{} is also well studied on special classes of graphs. For instance, it is known to be NP-complete when restricted to locally connected graphs, a graph class which includes all biconnected internally triangulated planar graphs~\citep{BCFPRW2021}. It can also be solved in linear time on the class of cactus graphs~\citep{BPS2020b}, quadratic time on chordal graphs~\citep{BP2020,BPS2020b} and in polynomial time on ``generalized'' trees~\citep{AFI2015}. However, the complexity of the problem on biparitite graphs remains open, and is an intriguing question especially considering that the vertex cover problem is tractable on biparitite graphs. 

\subsection{Our Contributions}

We resolve the question of the complexity of \EVC{} on bipartite graphs by showing NP-hardness even on bipartite graphs of constant diameter. It turns out that the same result can also be used to argue the likely non-existence of a polynomial compression, which resolves the question of whether~\EVC{} has a polynomial kernel in the negative. Finally, we also observe that the hardness results carry over to the related problem of~\ECVC{}~\citep{FN2020}, where we would like the vertex covers at every step to induce connected subgraphs. 

Summarizing, our main result is the following:

\begin{thm1}
Both the \EVC{} and \ECVC{} problems are \NPH{} and do not admit a polynomial compression parameterized by the number of guards (unless $\mathrm{NP} \subseteq \operatorname{coNP}/\text{poly}$), even on bipartite graphs of diameter six. 
\end{thm1}

We also show that~\EVC{} is tractable on the class of cobipartite graphs, a result that is not implied by any of the polynomial time algorithms known so far. 

\begin{thm2}
There is a polynomial-time algorithm for \EVC{} on the class of cobipartite graphs. 
\end{thm2}

\paragraph*{Organization}

We establish notation and provide relevant definitions in~\Cref{sec:prelims}. The proof of Theorem~1 follows from the construction described in~\Cref{lem:bipartite}, and is the main focus of~\Cref{sec:bip}, while the proof of Theorem~2 can be found in~\Cref{sec:cobip} (with an overall schematic of the cases given in~\Cref{fig:overview,fig:overview1,fig:overview2,fig:overview3}). In~\Cref{sec:concl}, we suggest some directions for further work.

\newpage 

\section{Preliminaries and Notations}
\label{sec:prelims}

All graphs in this paper are finite, undirected and without multiple edges and loops. For notation and terminology not defined in this paper we refer to~\cite{graphsbook}. 

Let $G=(V,E)$ be a graph. The set of neighbours of a vertex $v$ in $G$ is denoted by $N_{G}(v)$, or briefly by $N(v)$\footnote{Here, as elsewhere, we drop the index referring to the underlying graph if the reference is clear.}. More generally, for $U \subseteq V$, the neighbours in $V \backslash U$ of vertices in $U$ are called neighbours of $U$; their set is denoted by $N(U)$. A subset $S \subseteq V$ is said to be \emph{independent} if for all $u, v \in S$, $(u,v) \notin E$. 

% The degree $d_{G}(v)=d(v)$ of a vertex $v$ is the number of edges at $v$ and for simple graphs, is equal to the number of neighbours of $v$. A vertex of degree 0 is isolated. The number $\delta(G):=\min \{d(v) \mid v \in V\}$ is the minimum degree of $G$, the number $\Delta(G):=\max \{d(v) \mid v \in V\}$ its maximum degree. 

\sloppypar
A path is a non-empty graph $P=(V,E)$ of the form $V=\left\{x_{0}, x_{1}, \ldots, x_{k}\right\}$ and~$E=\left\{x_{0} x_{1}, x_{1} x_{2}, \ldots, x_{k-1} x_{k}\right\}$, where the $x_{i}$'s are all distinct. The number of edges of a path is its length, and the path of length $k$ is denoted by $P^{k}$. The distance $d_{G}(x, y)$ in $G$ of two vertices $x, y$ is the length of a shortest $x-y$ path in $G$; if no such path exists, we set $d(x, y):=\infty$. The greatest distance between any two vertices in $G$ is the diameter of $G$, denoted by $\operatorname{diam}(G)$. 

A \emph{vertex cover} of a graph $G = (V,E)$ is a subset $S$ of the vertex set such that every edge has at least one of its endpoints in $S$. Note that $V \setminus S$ is an independent set. We use $\operatorname{mvc}(G)$ to denote the size of a minimum vertex cover of $G$. A \emph{dominating set} of a graph $G$ is a subset $X$ of the vertex set such that every vertex of $G$ either belongs to $X$ or has a neighbor in $X$. 

Consider a graph $G=(V, E)$ on $n$ vertices and $m$ edges. Guards are placed on the vertices of the graph in order to protect it from an infinite sequence (which is not known to the guards in advance) of attacks on the edges of the graph. In each round, one edge $u v \in E$ is attacked, and each guard either stays on the vertex it is occupying or moves to a neighboring vertex. 

Moreover, the guards are bound to move in such a way that at least one guard moves from $u$ to $v$ or from $v$ to $u$. The minimum number of guards which can protect all the edges of $G$ is called the eternal vertex cover number of $G$ and is denoted by $\operatorname{evc}(G)$.

A \emph{bipartite} graph is a graph whose vertex set can be partitioned into at most two independent sets. A \textit{co-bipartite graph} is a graph which is the complement of a bipartite graph. In other words, a co-bipartite graph is a graph whose vertex set can be partitioned into at most two cliques.

\paragraph*{Parameterized Complexity.} 
A parameterized problem $L$ is a subset of $\Sigma^{*} \times \mathbb{N}$ for some finite alphabet $\Sigma$. An instance of a parameterized problem consists of $(x, k)$, where $k$ is called the parameter. A central notion in parameterized complexity is fixed parameter tractability (FPT), which means for a given instance $(x, k)$ solvability in time $f(k) \cdot p(|x|)$, where $f$ is an arbitrary function of $k$ and $p$ is a polynomial in the input size. The notions of kernelization and compression are defined as follows.

\begin{definition} A kernelization algorithm, or in short, a kernel for a parameterized problem $Q \subseteq \Sigma^{*} \times \mathbb{N}$ is an algorithm that, given $(x, k) \in \Sigma^{*} \times \mathbb{N}$, outputs in time polynomial in $|x|+k$ a pair $\left(x^{\prime}, k^{\prime}\right) \in \Sigma^{*} \times \mathbb{N}$ such that (a) $(x, k) \in Q$ if and only if $\left(x^{\prime}, k^{\prime}\right) \in Q$ and (b) $\left|x^{\prime}\right|+k^{\prime} \leq g(k)$, where $g$ is an arbitrary computable function. The function $g$ is referred to as the size of the kernel. If $g$ is a polynomial function then we say that $Q$ admits a polynomial kernel.
\end{definition} 

\begin{definition}A polynomial compression of a parameterized language $Q \subseteq \Sigma^{*} \times \mathbb{N}$ into a language $R \subseteq \Sigma^{*}$ is an algorithm that takes as input an instance $(x, k) \in \Sigma^{*} \times \mathbb{N}$, works in time polynomial in $|x|+k$, and returns a string $y$ such that:
\begin{enumerate}
    \item $|y| \leq p(k)$ for some polynomial $p(\cdot)$, and
    \item $y \in R$ if and only if $(x, k) \in Q$.
\end{enumerate} 
\end{definition}
Our focus in this paper is the \EVC{} problem, in which we are interested in computing $\operatorname{evc}(G)$ for a graph $G$, and its parameterized complexity with respect to the number of guards:

\defparproblem{Eternal Vertex Cover}{A graph $G = (V,E)$ and a positive integer $k \in \mathbb{Z}^+$.}{$k$}{Does $G$ have an eternal vertex cover of size at most $k$?} 

\EVC{} is known to admit an exponential kernel of size $4^{k}(k+1)+2 k$~\citep{FGGKS2010}. We use the following standard framework to show that it is unlikely to admit a polynomial compression. 

\begin{definition}Let $P$ and $Q$ be parameterized problems. We say that $P$ is polynomial parameter reducible to $Q$, written $P \leq_{p p t} Q$, if there exists a polynomial time computable function $f: \Sigma^{*} \times \mathbb{N} \rightarrow \Sigma^{*} \times \mathbb{N}$ and a polynomial $p$, such that for all $(x, k) \in \Sigma^{*} \times \mathbb{N}$ (a) $(x, k) \in P$ if and only $\left(x^{\prime}, k^{\prime}\right)=f(x, k) \in Q$ and (b) $k^{\prime} \leq p(k)$. The function $f$ is called polynomial parameter transformation.
\end{definition}

\begin{proposition}
\label{prop:npc} Let $P$ and $Q$ be parameterized problems such that there is a polynomial parameter transformation from $P$ to $Q$. If $Q$ has a polynomial compression, then $P$ also has a polynomial compression.
\end{proposition}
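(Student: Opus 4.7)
The plan is to prove Proposition~\ref{prop:npc} by straightforward composition: given a polynomial parameter transformation $f$ from $P$ to $Q$ and a polynomial compression $C$ from $Q$ into some language $R$, the composition $C \circ f$ will serve as a polynomial compression of $P$ into $R$. So the proof is essentially a matter of verifying that the three properties required of a polynomial compression (polynomial running time, polynomial output size in $k$, and equivalence of membership) are all preserved under this composition.

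First I would unpack the data. Let $p$ be the polynomial controlling the parameter blow-up of $f$, so that on input $(x,k) \in \Sigma^{*} \times \mathbb{N}$ we obtain $(x',k') = f(x,k)$ with $k' \le p(k)$ in time polynomial in $|x|+k$; in particular $|x'|$ is bounded by some polynomial $r(|x|+k)$. Let $C$ be the polynomial compression of $Q$ into $R$, with output-size polynomial $q$, so that on input $(x',k')$ it produces $y \in \Sigma^{*}$ with $|y| \le q(k')$ and $y \in R \iff (x',k') \in Q$, in time polynomial in $|x'|+k'$. My algorithm on input $(x,k)$ simply runs $f$ to obtain $(x',k')$ and then runs $C$ on $(x',k')$ to output $y$.

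The three verifications are then immediate. For the output size, $|y| \le q(k') \le q(p(k))$, and since the composition of two polynomials is a polynomial, this is polynomial in $k$ as required. For correctness, chaining the two equivalences gives
\[
(x,k) \in P \iff (x',k') \in Q \iff y \in R,
\]
which is exactly the required biconditional. For the running time, the first stage takes time polynomial in $|x|+k$ and produces $(x',k')$ with $|x'|+k' \le r(|x|+k) + p(k)$, so the second stage runs in time polynomial in this quantity, hence polynomial in $|x|+k$; the total is a polynomial in $|x|+k$.

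I do not expect any real obstacle here: the statement is essentially the observation that polynomial parameter transformations compose with polynomial compressions in the forward direction, and each of the three defining conditions is preserved by one short calculation. The only minor care needed is to remember that the size bound for a compression is measured purely in terms of the parameter $k$, not the full input length, which is precisely why we need $k' \le p(k)$ rather than just a polynomial-time reduction; this is exactly what the notion of polynomial parameter transformation provides.
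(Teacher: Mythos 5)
Your proof is correct and is exactly the standard composition argument that this proposition relies on; the paper itself states the proposition without proof, treating it as folklore. Your verification of the three conditions (output size $q(p(k))$, chained equivalences, and polynomial running time via the bound on $|x'|+k'$) is complete and matches the canonical argument from the kernelization literature.
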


In the \RBDS{} problem, we are given a bipartite graph $G=(B \cup R, E)$ and an integer $k$ and asked whether there exists a vertex set $S \subseteq R$ of size at most $k$ such that every vertex in $B$ has at least one neighbor in $S$. In the literature, the sets $B$ and $R$ are called ``blue vertices'' and ``red vertices'', respectively. It is known~\citep[see][Theorem 4.1]{DLS14} that RBDS parameterized by $(|B|, k)$ does not have a polynomial kernel, and more generally, a polynomial compression~\citep[see][Corollary 19.6]{kernelbook}:

\begin{proposition}[Corollary 19.6~\cite{kernelbook}]\label{prop:rbds-npc} The \RBDS{} problem, parameterized by $|B|+k$, does not admit a polynomial compression unless $\mathrm{coNP} \subseteq \mathrm{NP}/$poly.
\end{proposition}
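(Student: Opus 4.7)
The plan is to prove the Proposition by exhibiting an OR-cross-composition into \RBDS{} parameterized by $|B|+k$, which by the standard Bodlaender--Jansen--Kratsch cross-composition framework rules out a polynomial compression unless $\mathrm{coNP}\subseteq\mathrm{NP}/\mathrm{poly}$. Since \RBDS{} is NP-hard (it generalises \textsc{Dominating Set} via the standard incidence-graph reduction), it suffices to cross-compose any NP-hard source; the cleanest choice is to cross-compose \RBDS{} into itself, collapsing many instances sharing a small common blue set into one instance whose blue set grows only by $O(\log t)$.

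The polynomial equivalence relation I would use declares two \RBDS{} inputs equivalent when they have the same number of blue vertices $n$, the same number of red vertices $m$, and the same budget $k$. Since any input of bitlength $N$ yields only $\mathrm{poly}(N)$ possible triples $(n,m,k)$, this relation is polynomial. Given $t$ equivalent instances $(G_1,k),\ldots,(G_t,k)$ with blue sets of size $n$ and red sets of size $m$, the goal is to build a single instance $(G^\star,k^\star)$ with $|B^\star| = n + O(\log t)$ and $k^\star = k + O(\log t)$, satisfying $(G^\star,k^\star) \in \RBDS$ if and only if some $(G_i,k) \in \RBDS$.

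The construction would identify the blue sides of all $G_i$ with a common canonical blue set $\widehat{B}$ of size $n$, and place, for each $i \in [t]$ and each $r \in R_i$, a red copy $r^{(i)}$ whose edges to $\widehat{B}$ mirror those of $r$ in $G_i$. On top of this I would add a \emph{selection gadget}: for each bit position $j \in [\lceil \log_2 t \rceil]$, a constant-size bipartite pattern of blue and red vertices that, at unit cost, pins one bit of the chosen index $i^\star$. Each red copy $r^{(i)}$ is additionally attached to the selection gadget in a manner consistent with the binary encoding of $i$, so that it can ``participate'' in the selection only at bit positions where $i$ agrees with $i^\star$. The budget $k^\star = k + O(\log t)$ is set tightly so that $O(\log t)$ units are forced onto the selection gadget and exactly $k$ units remain for ``real'' red copies.

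I expect the technical crux to be \emph{coherence}: proving that any feasible solution of size $k^\star$ must pick all $k$ real red copies from one and the same instance $i^\star$, so that these project to an RBDS of size $k$ in $G_{i^\star}$. The completeness direction (a YES-input yields a YES-output) is routine. For soundness, one typically adds per-instance ``witness'' blue vertices that can be dominated only by copies carrying the correct index tag, and supplements the analysis with a tight counting argument showing that any solution mixing copies from multiple instances fails either to dominate $\widehat{B}$ or to satisfy the selection gadget within budget. Once coherence is verified and the parameter bound $|B^\star|+k^\star = \mathrm{poly}(n,k,\log t)$ is checked, the cross-composition lemma delivers Proposition~\ref{prop:rbds-npc} immediately.
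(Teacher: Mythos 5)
First, note that the paper does not prove this proposition at all: it is imported verbatim as Corollary~19.6 of the cited kernelization book, and ultimately goes back to the ``colors and IDs'' composition of Dom, Lokshtanov and Saurabh. So you are attempting to reprove a black-box citation. Your overall strategy --- an OR-cross-composition of many \RBDS{} instances sharing a common blue set, with an $O(\log t)$-bit index-selection mechanism --- is indeed the shape of the known proof, and the framing (polynomial equivalence relation on $(n,m,k)$, identifying the blue sides, disjoint union of red sides, parameter bound $\mathrm{poly}(\max_i|x_i|+\log t)$) is all fine.

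However, there is a genuine gap exactly at the point you flag as the ``technical crux,'' and the specific design you sketch cannot work as stated. Domination constraints are \emph{monotone}: adding further red vertices to a solution never destroys feasibility. Hence no collection of blue ``checker'' vertices can directly express the exclusivity condition ``all chosen real red copies carry the same index,'' and a constant-size per-bit gadget that ``pins one bit at unit cost'' with red copies merely ``attached consistently'' gives the adversarial mixed solution nothing to violate. The standard fix is to first pass to \emph{Colored} \RBDS{} (red vertices partitioned into $k$ classes, exactly one vertex per class forced by $k$ extra blue vertices each adjacent to one whole class, combined with the tight budget $k$); only with this exactly-one-per-class structure in hand can one add, for each ordered pair of classes and each bit position, a consistency blue vertex adjacent to the bit-$0$ reds of one class and the bit-$1$ reds of the other, which forces all $k$ chosen IDs to coincide. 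This costs $\Theta(k^2\log t)$ extra blue vertices, not $O(\log t)$, and keeps $k^\star=k$ rather than $k+O(\log t)$ --- still within the cross-composition parameter bound, but a materially different gadget from the one you describe. Since your proposal neither introduces the colored intermediate problem nor gives any concrete mechanism that survives the monotonicity obstacle, and explicitly defers the soundness argument, the proof is incomplete at its essential step.
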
 

Note that based on~\Cref{prop:npc,prop:rbds-npc}, to show that a polynomial compression for \EVC{} parameterized by the number of guards implies $\mathrm{coNP} \subseteq \mathrm{NP}/$poly, it suffices to show a polynomial parameter transformation from \RBDS{} to \EVC{}.  

For more background on parameterized complexity and algorithms, the reader is referred to the books~\cite{pcbook,kernelbook,pcinvitation,flumgrohe,downeyfellows}.

\section{Hardness on Bipartite Graphs}
\label{sec:bip}

In this section we demonstrate the intractability of~\EVC{} on the class of bipartite graphs of diameter six. Our key tool is a reduction from~\RBDS{} which also happens to be a polynomial parameter transformation.

\begin{lemma}
\label{lem:bipartite}
There is a polynomial parameter transformation from \RBDS{} parameterized by $|B|+k$ to \EVC{} parameterized by solution size.
\end{lemma}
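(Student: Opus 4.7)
The strategy is to construct, from an instance $(G=(B \cup R, E), k)$ of \RBDS{}, a bipartite graph $G'$ together with a budget $k'$ polynomial in $|B|+k$, such that $G$ admits a red dominating set $S \subseteq R$ of size at most $k$ if and only if $\operatorname{evc}(G') \leq k'$. I would keep the original bipartition of $G$ and attach auxiliary gadgets that serve two purposes: (i) \emph{forcing} vertices --- small bipartite structures (pendants, or several pendants hanging off a common hub) that any winning guard configuration is compelled to occupy at every time step; and (ii) a pool of at most $k$ \emph{selector} guards that remain free to sit on red vertices representing the chosen dominating set. The target parameter $k'$ would be of the form $|B| + k + O(1)$, which is polynomial in $|B|+k$.

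For the \emph{forward direction}, given an RBDS $S \subseteq R$ with $|S|\leq k$, I would place guards initially on $S$, on every $b \in B$, and on every forced gadget vertex, and then describe, attack by attack, a movement pattern that maintains a vertex cover after each round. The key invariant is that the guards on $B$ never wander far from $B$, while the guards on $S$ may shuffle among red neighbours; since $S$ dominates $B$, any attack on an edge $br$ with $r \in R$ can be answered by first moving the guard on $b$ to $r$ and then compensating via a guard on some $r' \in S \cap N_G(b)$ stepping onto $b$ (followed by appropriate returns to restore the invariant).

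The \emph{backward direction} is the main obstacle. Given any winning strategy with $k'$ guards, I would launch a probing sequence of attacks designed to force the defender to keep guards on all of $B$ and on all forced gadget vertices at every step. This leans on the classical pendant-pinning idea: repeated attacks on edges incident to a degree-one (or similarly forced) vertex compel the defender to restore a guard on its hub after each round. Once this pinning is established, only $k$ guards remain unaccounted for, and they must lie in $R$. A further attack targeting the gadget associated with an arbitrary $b \in B$ would then force one of these $k$ red guards to have been sitting on $N_G(b) \cap R$; aggregating such witnesses across all $b$ would yield a subset of $R$ of size at most $k$ dominating $B$, i.e., an RBDS.

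Finally, I would verify that $G'$ is bipartite, that its diameter is at most six (the gadgets are attached through $B$, $R$, and a constant number of global hubs, so all pairwise distances stay bounded), and that $k'$ is polynomial in $|B|+k$, completing the polynomial parameter transformation. The chief difficulty I anticipate is ruling out \emph{alternative} winning strategies that, instead of pinning guards on $B$, try to keep them parked on gadget vertices or on red vertices not belonging to any dominating set; this is typically handled by choosing the multiplicities of the forced gadgets carefully so that any such deviation would either uncover a pendant edge at some future round or exceed the guard budget $k'$.
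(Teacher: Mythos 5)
Your overall architecture matches the paper's: pendant-style forced gadgets of large multiplicity attached to each blue vertex, a constant number of hub vertices, and a budget of the form $|B|+k+O(1)$, so that every small vertex cover is forced to contain $B$ plus the hubs and only $k$ ``free'' guards remain. However, as written the plan has two concrete gaps. First, in the backward direction you assert that the $k$ unaccounted-for guards ``must lie in $R$.'' This is false in any construction of this shape: a free guard can perfectly well be parked on a dependent/gadget vertex hanging off some blue vertex $u_j$, and no choice of gadget multiplicities rules this out (the multiplicities only force the \emph{hubs} $B\cup\{\star\}$ into every small cover, not the distribution of the surplus guards). The paper's fix is not combinatorial forcing but an extraction argument from a \emph{single, fixed} configuration $\mathcal{S}$: a free guard sitting on a red vertex contributes that vertex, and a free guard sitting on a dependent vertex of $u_j$ contributes an arbitrary red neighbour of $u_j$ (which exists by a preprocessing assumption); if some blue vertex $u_t$ gets no contribution at all, one single attack on a structural edge $(u_t,v_q)$ forces the guard off $u_t$ with no neighbour able to replace it, exposing $u_t$'s $b^2+3$ sliding edges. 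Your ``probing sequence'' and ``aggregating witnesses across all $b$'' is both unnecessary (the pinning of $B\cup\{\star\}$ is a static property of every cover of size $\le\ell$, not something you must enforce dynamically) and risky: witnesses harvested from different post-attack configurations need not coexist in one set of size $k$.

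Second, the forward direction is under-specified at exactly the point where the construction earns its keep. After the guard on $r'\in S\cap N_G(b)$ steps onto $b$, the vertex $r'$ is vacated, and ``appropriate returns to restore the invariant'' is not available --- each guard moves at most one edge per round, so you must replenish $r'$ \emph{within the same round} from somewhere. The paper does this with a universal vertex $\star$ adjacent to all red vertices together with a backup vertex $\dagger$ adjacent only to $\star$, creating a one-round supply chain $\dagger\to\star\to r'\to b\to r$; the extra $+2$ in the budget $\ell=b+k+2$ pays for exactly these two vertices, and the bulk of the proof is then a case analysis over the edge types (structural, sliding, supplier, bridge) and the three ``nice'' cover types showing the invariant is maintained. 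Your sketch names the hubs but not the mechanism, and without it the invariant that a dominating set remains occupied cannot be maintained round after round. So the approach is the right one, but both directions need these missing pieces before they constitute a proof.
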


\begin{proof}Let $\langle G = (V,E),b+k \rangle$ be an instance of \RBDS{}. We have $V= R\cup B$. We denote the vertices in $R$ by $\{v_1, \ldots, v_r\}$, the vertices in $B$ by $\{u_1, \ldots, u_b\}$  and use $m$ to denote $|E|$. We assume that $G$ is connected, since \RBDS{} does not have a polynomial sized kernel even for connected graphs. We assume that every blue vertex has at least one red neighbour and by returning a trivial \textsc{No}-instance of \EVC{} if some blue vertex has no red neighbour. The correctness of this follows from the fact that if some blue vertex does not have a red neighbour then it cannot be dominated by any subset of $R$. Further, we assume that $k < b$ by returning a trivial \textsc{Yes}-instance of \EVC{} if $k \geq b$. Also we assume $b>1$, since when $b=1$, the instance is easily resolved and we may return an appropriate instance of~\EVC{} (a trivial~\textsc{Yes} instance if $k \geq 1$ and a trivial~\textsc{No} instance otherwise).

% \nmtodo{Add a definition of Red Blue Dominating Set and a reference to it's not admitting a polynomial sized kernel.}

\paragraph*{The Construction.} We will develop an instance of \EVC{} which we denote by $\langle H, \ell \rangle$ based on $\langle G,k \rangle$ as follows. First, we introduce $r$ \emph{red} vertices, denoted by $A := \{v_i~|~1 \leq i \leq r\}$ and $n$ \emph{blue} vertices, denoted by $B := \{u_i~|~1 \leq i \leq b\}$. Next, for all $i \in [b]$, we add $b^2+3$ \emph{dependent} vertices of type $i$, denoted by $C_i := \{w^i_j~|~1\leq j \leq b^2+3\}$. Now, we add $b^2+3$ \emph{dependent} vertices of type $\star$, denoted by $D := \{w^i_j~|~1\leq j \leq b^2+3\}$. Finally, we add two special vertices denoted by $\star$ and $\dagger$, which we will refer to as the \emph{universal} and \emph{backup} vertices respectively. To summarize, the vertex set consists of the following $r+(b^3+b^2+4b+5)$ vertices:

$$ V(H) := A \cup B \cup C_1 \cup \cdots \cup C_n \cup D \cup \{\star,\dagger\}.$$

We now describe the edges in $H$:

\begin{itemize}

\item There are $m$ \emph{structural} edges given by $(v_p,u_q)$ for every pair $(p,q)$ such that $(v_p,u_q) \in E(G)$. In other words, for every edge $(v_p,u_q)$ in the graph $G$, the original vertex $v_p$ is adjacent to the partner vertex $u_q$.

\item The dependent vertices of type $i$ are adjacent to the $i^{th}$ blue vertex, i.e, for every $i \in [b]$, we have a \emph{sliding} edge $(u_i,w)$ for each $w \in C_i$.

\item The dependent vertices of type $\star$ are adjacent to the universal vertex, i.e., we have a \emph{sliding} edge $(\star,w)$ for each $w \in D$.

\item The universal vertex $\star$ is adjacent to every red vertex via a \emph{supplier} edge. In particular, for every $i \in [r]$, we have the edge $(v_i,\star)$. 

\item Finally, we have the edge $(\star,\dagger)$, indicating that the backup vertex $\dagger$ is adjacent to the universal vertex. We call this edge a \emph{bridge}. 

\end{itemize}

To summarize, we have the following edges in $H$:

\begin{equation} \label{eq1}
  \begin{split}
  E(H) & = \{(v_p,u_q)~|~ 1 \leq p \leq r; 1 \leq q \leq b; \mbox{ and } (v_p,u_q) \in E(H)\}~\longleftarrow \mbox{ the structural edges}\\
  & \cup \{(u_1,w)~|~ w \in C_i)\} \longleftarrow \mbox{ the type } 1 \mbox{ sliding edges}\\
  & \cup \vdots \\
  & \cup \{(u_i,w)~|~ w \in C_i)\} \longleftarrow \mbox{ the type } i \mbox{ sliding edges}\\
  & \cup \vdots \\
  & \cup \{(u_n,w)~|~ w \in C_i)\}~\longleftarrow \mbox{ the type } n \mbox{ sliding edges}\\ 
  & \cup \{(\star,w)~|~ w \in D)\}~\longleftarrow \mbox{ the type } \star \mbox{ sliding edges}\\ 
  & \cup \{(v_i,\star)~|~ 1 \leq i \leq n\} ~\longleftarrow \mbox{ the supplier edges} \\
  & \cup \{(\star,\dagger)\}~\longleftarrow \mbox{the bridge edge}.
  \end{split}
  \end{equation}

We now let $\ell := b+k+2$, and this completes the description of the reduced instance $\langle H, \ell \rangle$. 

% \[ \cup_{1 \leq i \leq n} \{u_i, v_i\} \cup \cup_{1 \leq i \leq n} \{w_j^{i} ~|~ 1 \leq j \leq n^2 \} \cup \{\star,\dagger\} \] 

% \begin{itemize}
% \item $n$ \emph{original} vertices, denoted by $A := \{v_i~|~1 \leq i \leq n\}$ and $n$ \emph{partner} vertices, denoted by $B := \{u_i~|~1 \leq i \leq n\}$, 
% \item for each $1 \leq i \leq n$, $n^2$ \emph{guard} vertices of type $i$, denoted by $C_i := \{w^i_j~|~1 \leq j \leq n^2\}$, 
% \end{itemize}

\begin{claim}
  The vertex cover number of $H$ is $b+1$. 
\end{claim}

\begin{proof}
  This follows from the fact that there is a matching of size $b+1$ in $H$, consisting of edges joing each blue vertex and $\star$ to one of their adjacent dependent vertices. (showing the lower bound), and that $B \cup \{\star\}$ is a vertex cover in $H$ (which implies the upper bound).
\end{proof}

\begin{claim}
  \label{claim-invariant}
  Any vertex cover of $H$ that has at most $\ell$ vertices must contain $B \cup \{\star\}$. 
\end{claim}

\begin{proof}
Consider a vertex cover $S \subseteq V(H)$ that does not contain some blue vertex $u_i \in B$. Then $S$ must contain all the dependent vertices in $C_i$, but since $|C_i| = b^2+3$, this contradicts our assumption that $|S| \leq \ell$. 
Consider a vertex cover $S \subseteq V(H)$ that does not contain the universal vertex $\star$. Then $S$ must contain all the dependent vertices in $C_i$, but since $|C_i| = b^2+3$, this contradicts our assumption that $|S| \leq \ell$. 
\end{proof}

% We now turn to a proof of equivalence. We first argue the backward direction followed by the forward direction. 

\paragraph*{The Backward Direction.} Suppose $\langle H, \ell \rangle$ is a  \textsc{Yes}-instance of \EVC{}. We argue that $\langle G = (V,E),k \rangle$ is a \textsc{Yes}-instance of \RBDS.

We know that any sequence of edge attacks in $H$ can be defended by deploying at most $\ell = n + k + 2$ guards. Let $\mathcal{S}$ denote the initial placement of guards. 

We now consider two cases:

\paragraph*{Case 1. $\mathcal{S}$ contains the backup vertex.}

We already know that $\mathcal{S}$ contains all the blue vertices and the universal vertex by~\Cref{claim-invariant}. This accounts for the positions of $(n+1)$ guards. Additionally, because of the case we are in, we have one guard on the backup vertex. So the remaining $k$ guards occupy either red or dependent vertices. We propose that the corresponding vertices in $G$ form a dominating set. Specifically, let 

$$A^\prime := \{j ~|~ 1 \leq j \leq r \mbox{ and } v_i \in \mathcal{S}\} \mbox{ and } B^\prime := \{j ~|~ 1 \leq j \leq b \mbox{ and } C_j \cap \mathcal{S} \neq \emptyset\}.$$ 

For each $j \in B^\prime$, let $\ell_j$ be such that $v_{\ell_j}$ is an arbitrarily chosen neighbor of $u_j$ in $G$. Note that it is possible that $j_1 \neq j_2$ in $B^\prime$ but $\ell_{j_1} = \ell_{j_2}$. We now define $C^\prime := \{\ell_j ~|~ j\in B^\prime\}$.

Intuitively speaking, our choice of dominating set is made by choosing all red vertices in $G$ for whom the corresponding vertices in $H$ have a guard on them, and additionally, for all blue vertices who have a guard on a dependent neighbor vertex in $H$, we choose an arbitrary red neighbor in $G$ --- while this choice may coincide for some blue vertices, we note that the total number of chosen vertices is no more than the number of guards who are positioned on dependent and red vertices, i.e, $k$. In other words, we have that $|A^\prime \cup C^\prime| \leq k$. 

% $$C^\prime=\{\ell_j ~|~ 1 \leq \ell \leq r \mbox{ such that } v_\ell \mbox{ is an arbitrary neighbour of } u_j \mbox{ for each } j\in B^\prime.\}$$

We now claim that $S := \{v_i ~|~ i \in A^\prime \cup C^\prime\}$ is a  dominating set for the blue vertices in $G$. Suppose not. Then, let $u_t \in B$ be a vertex that is not dominated by $S$. Let us attack a structural edge $(u_t,v_q)$. Note that $v_q$ is not occupied by a guard, and the guard on $u_t$ is forced to move to $v_q$ to defend this attack. However, observe that our assumption that $u_t$ is not dominated in $G$ implies that no neighbor of $u_t$ has a guard in $\mathcal{S}$. Therefore, this configuration now cannot be extended to a vertex cover, contradicting our assumption that every attack can be defended. Therefore, $S$ is indeed a dominating set in $G$ of size at most $k$.

% , and it is straightforward to verify that $S$ 

\paragraph*{Case 2. $\mathcal{S}$ does not contain the backup vertex.}

In this case, we attack the bridge. Let $\mathcal{S}^\prime$ denote the placement of the guards obtained by defending this attack. Note that $\mathcal{S}^\prime$ must contain the backup vertex. Now we argue as we did in the previous case. This concludes the proof in the reverse direction. 

\paragraph*{The Forward Direction.} Suppose $\langle G = (V,E),k \rangle$ is a \textsc{Yes}-instance of \RBDS. We argue that $\langle H, \ell \rangle$ is a \textsc{Yes}-instance of \EVC{}.

Let $S \subseteq V(G)$ be a dominating set of $G$. Without loss of generality (by renaming), we assume that $S = \{v_1, \ldots, v_k\}$. Let $S^\star := \{v_1, \ldots, v_k\} \subseteq A$, i.e, the red vertices of $H$ corresponding to the solution in $G$. 

Note that all the following are vertex covers of size $b + k + 2$ for $H$:

\begin{itemize}
\item A backup vertex cover is given by $X := B \cup \{\star\} \cup S^\star \cup \{\dagger\}$.
\item A dependent vertex cover of type $i$ (or type $\star$) is given by $Y_i := B \cup \{\star\} \cup S^\star \cup \{w\}$, where $w$ is a dependent vertex from $C_i$ (or $D$).
\item A red vertex cover is given by $Z_j := B \cup \{\star\} \cup S^\star \cup \{v_j\}$, for some $j > k$.
\end{itemize}

\begin{figure}[t]
    \centering
    \begin{tikzpicture}
    \tikz{

    \draw [DodgerBlue,thick,rounded corners] (1,0) rectangle (7,1);
    \draw [IndianRed,thick,rounded corners] (0,2.5) rectangle (8,3.5);
    
    \node [circle,draw,thick,OliveDrab,fill=white] (global) at (10,0.25) {$\star$}; 
    \node [circle,draw,thick,OliveDrab,fill=white] (backup) at (10,3.25) {$\dagger$};

    \tikzset{decoration={snake,amplitude=.4mm,segment length=2mm,
                       post length=0mm,pre length=0mm}}
    
    \draw [thick,decorate] (global) -- (backup);

    % Lines/edges
    \foreach \x in {1,...,7}
        \draw[dotted] (global) -- (\x, 3);
    
    % Edges to dependent vertices (first set)
    \draw[Sienna,dashed] (2,0.5) -- (2,-1.5);
    \draw[Sienna,dashed]  (1.75, -1.5) -- (2,0.5);
    \draw[Sienna,dashed]  (2.25, -1.5) -- (2,0.5);
    \draw[Sienna,dashed]  (1.5, -1.5) -- (2,0.5);
    \draw[Sienna,dashed]  (2.5, -1.5) -- (2,0.5);
    
    % Edges to dependent vertices (second set)
    \draw[Sienna,dashed]  (6, -1.5) -- (6,0.5);
    \draw[Sienna,dashed]  (5.75, -1.5) -- (6,0.5);
    \draw[Sienna,dashed]  (6.25, -1.5) -- (6,0.5);
    \draw[Sienna,dashed]  (5.5, -1.5) -- (6,0.5);
    \draw[Sienna,dashed]  (6.5, -1.5) -- (6,0.5);
    
    % Edges to dependent vertices of the global vertex

    \draw[Sienna,dashed]  (10, -1.5) -- (global);
    \draw[Sienna,dashed]  (9.75, -1.5) -- (global);
    \draw[Sienna,dashed]  (10.25, -1.5) -- (global);
    \draw[Sienna,dashed]  (9.5, -1.5) -- (global);
    \draw[Sienna,dashed]  (10.5, -1.5) -- (global);
    
    % Some original edges
    
    \draw[thick,LightSeaGreen] (2,0.5) -- (3,3); 
    \draw[thick,LightSeaGreen] (6,0.5) -- (7,3);
    \draw[thick,LightSeaGreen] (4,0.5) -- (4,3);
    \draw[thick,LightSeaGreen] (3,0.5) -- (2,3);
    \draw[thick,LightSeaGreen] (5,0.5) -- (4,3);
    \draw[thick,LightSeaGreen] (5,0.5) -- (6,3);
    \draw[thick,LightSeaGreen] (2,0.5) -- (6,3);
    \draw[thick,LightSeaGreen] (6,0.5) -- (2,3);
    \draw[thick,LightSeaGreen] (3,0.5) -- (6,3);
    \draw[thick,LightSeaGreen] (3,0.5) -- (1,3);
    \draw[thick,LightSeaGreen] (4,0.5) -- (5,3);
    \draw[thick,LightSeaGreen] (3,0.5) -- (3,3);
    \draw[thick,LightSeaGreen] (4,0.5) -- (1,3);
    \draw[thick,LightSeaGreen] (5,0.5) -- (2,3);

    \foreach \x in {2,...,6}
        \draw[fill=SteelBlue!42]  (\x, 0.5) circle (0.15cm);
    
    \foreach \x in {1,...,7}
        \draw[fill=Crimson!77] (\x, 3) circle (0.15cm);
        
    % First set of dependent vertices
    \draw[fill=OrangeRed!42]  (2, -1.5) circle (0.07cm);
    \draw[fill=OrangeRed!42]  (1.75, -1.5) circle (0.07cm);
    \draw[fill=OrangeRed!42]  (2.25, -1.5) circle (0.07cm);
    \draw[fill=OrangeRed!42]  (1.5, -1.5) circle (0.07cm);
    \draw[fill=OrangeRed!42]  (2.5, -1.5) circle (0.07cm);
    
    % Last set of dependent vertices
    \draw[fill=OrangeRed!42]  (6, -1.5) circle (0.07cm);
    \draw[fill=OrangeRed!42]  (5.75, -1.5) circle (0.07cm);
    \draw[fill=OrangeRed!42]  (6.25, -1.5) circle (0.07cm);
    \draw[fill=OrangeRed!42]  (5.5, -1.5) circle (0.07cm);
    \draw[fill=OrangeRed!42]  (6.5, -1.5) circle (0.07cm);

    % Dependent for global 
    \draw[fill=OrangeRed!42]  (10, -1.5) circle (0.07cm);
    \draw[fill=OrangeRed!42]  (9.75, -1.5) circle (0.07cm);
    \draw[fill=OrangeRed!42]  (10.25, -1.5) circle (0.07cm);
    \draw[fill=OrangeRed!42]  (9.5, -1.5) circle (0.07cm);
    \draw[fill=OrangeRed!42]  (10.5, -1.5) circle (0.07cm);

    \node at (0,5.2) {};
    
    % \fill [blue] (0.5,0) rectangle (1.5,0.25); 
    
    % % First k vertices are the dominating set
    % \draw[thick] (3.5,3.5) -- (3.5,2.5);
    % \draw [thick,decorate,decoration={brace,amplitude=4pt}]  (0.5,4) -- (3.5,4);
    % \node[scale=0.7] at (2,4.5) {Dominating Set};

    }
    \end{tikzpicture}
    \caption{A schematic depicting the construction of $(H,\ell)$ starting with an instance $(G,k)$ of \RBDS{}. The red vertices from $G$ instance are shown in the red rectangle on the top while the blue vertices are in the blue rectangle positioned at the bottom. The solid green lines correspond to edges in $E(G)$. The small orange vertices are the dependent vertices (some of them are omitted for clarity), while the global and backup vertices are shown by nodes labeled $\star$ and $\dagger$ respectively. The wavy line shows the bridge, the dotted lines shows the supplier edges while the dashed lines show the sliding edges.}
    \label{fig:my_label}
\end{figure}
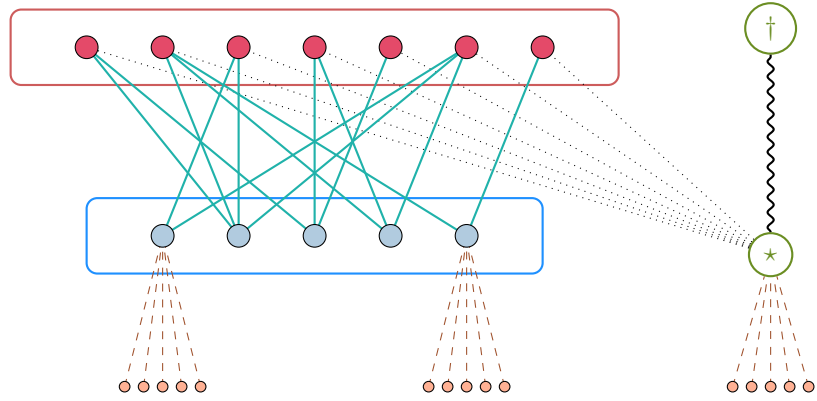

We remark that there may other vertex covers of size $n + k + 2$ that don't ``fit'' into any of the categories listed above. A vertex cover that is either a backup, dependent, or red vertex cover is called a \emph{nice} vertex cover.

We argue that if guards are occupying a nice vertex cover, then any attack can be defended by moving guards along edges in such a way that the new configuration also corresponds to a nice vertex cover. This implies that any sequence of attacks can be defended, by starting with an arbitrary nice vertex cover.

\begin{figure}[t]
    \centering
    \begin{tikzpicture}[square/.style={regular polygon,regular polygon sides=4}]
    \tikz{

    \draw [DodgerBlue,thick,rounded corners] (1,0) rectangle (7,1);
    \draw [IndianRed,thick,rounded corners] (0,2.5) rectangle (8,3.5);
    
    % \node [circle,draw,thick,OliveDrab,fill=white] (global) at (10,0.25) {$\star$}; 
    \node [circle,draw,thick,OliveDrab,fill=white] (backup) at (10,3.25) {$\dagger$}; 
    
    \node at (10,0.25) [square,draw,fill=YellowGreen] (global) {$\star$};

    \tikzset{decoration={snake,amplitude=.4mm,segment length=2mm,
                       post length=0mm,pre length=0mm}}
    
    \draw [thick,decorate] (global) -- (backup);

    % Lines/edges
    \foreach \x in {1,...,7}
        \draw[dotted] (global) -- (\x, 3);
    
    % Edges to dependent vertices (first set)
    \draw[Sienna,dashed] (2,0.5) -- (2,-1.5);
    \draw[Sienna,dashed]  (1.75, -1.5) -- (2,0.5);
    \draw[Sienna,dashed]  (2.25, -1.5) -- (2,0.5);
    \draw[Sienna,dashed]  (1.5, -1.5) -- (2,0.5);
    \draw[Sienna,dashed]  (2.5, -1.5) -- (2,0.5);
    
    % Edges to dependent vertices (second set)
    \draw[Sienna,dashed]  (6, -1.5) -- (6,0.5);
    \draw[Sienna,dashed]  (5.75, -1.5) -- (6,0.5);
    \draw[Sienna,dashed]  (6.25, -1.5) -- (6,0.5);
    \draw[Sienna,dashed]  (5.5, -1.5) -- (6,0.5);
    \draw[Sienna,dashed]  (6.5, -1.5) -- (6,0.5);
    
    % Edges to dependent vertices of the global vertex

    \draw[Sienna,dashed]  (10, -1.5) -- (global);
    \draw[Sienna,dashed]  (9.75, -1.5) -- (global);
    \draw[Sienna,dashed]  (10.25, -1.5) -- (global);
    \draw[Sienna,dashed]  (9.5, -1.5) -- (global);
    \draw[Sienna,dashed]  (10.5, -1.5) -- (global);
    
    % Some original edges
    
    \draw[thick,LightSeaGreen] (2,0.5) -- (3,3); 
    \draw[thick,LightSeaGreen] (6,0.5) -- (7,3);
    \draw[thick,LightSeaGreen] (4,0.5) -- (4,3);
    \draw[thick,LightSeaGreen] (3,0.5) -- (2,3);
    \draw[thick,LightSeaGreen] (5,0.5) -- (4,3);
    \draw[thick,LightSeaGreen] (5,0.5) -- (6,3);
    \draw[thick,LightSeaGreen] (2,0.5) -- (6,3);
    \draw[thick,LightSeaGreen] (6,0.5) -- (2,3);
    \draw[thick,LightSeaGreen] (3,0.5) -- (6,3);
    \draw[thick,LightSeaGreen] (3,0.5) -- (1,3);
    \draw[thick,LightSeaGreen] (4,0.5) -- (5,3);
    \draw[thick,LightSeaGreen] (3,0.5) -- (3,3);
    \draw[thick,LightSeaGreen] (4,0.5) -- (1,3);
    \draw[thick,LightSeaGreen] (5,0.5) -- (2,3);

    \foreach \x in {2,...,6}
        \node at (\x, 0.5) [square,draw,fill=YellowGreen] () {};
    
    \foreach \x in {4,...,7}
        \draw[fill=Crimson!77] (\x, 3) circle (0.15cm);

    \foreach \x in {1,2,3}
        \node at (\x, 3) [square,draw,fill=YellowGreen] () {};
        
    % First set of dependent vertices
    \draw[fill=OrangeRed!42]  (2, -1.5) circle (0.07cm);
    \draw[fill=OrangeRed!42]  (1.75, -1.5) circle (0.07cm);
    \draw[fill=OrangeRed!42]  (2.25, -1.5) circle (0.07cm);
    \draw[fill=OrangeRed!42]  (1.5, -1.5) circle (0.07cm);
    \draw[fill=OrangeRed!42]  (2.5, -1.5) circle (0.07cm);
    
    % Last set of dependent vertices
    \draw[fill=OrangeRed!42]  (6, -1.5) circle (0.07cm);
    \draw[fill=OrangeRed!42]  (5.75, -1.5) circle (0.07cm);
    \draw[fill=OrangeRed!42]  (6.25, -1.5) circle (0.07cm);
    \draw[fill=OrangeRed!42]  (5.5, -1.5) circle (0.07cm);
    \draw[fill=OrangeRed!42]  (6.5, -1.5) circle (0.07cm);

    % Dependent for global 
    \draw[fill=OrangeRed!42]  (10, -1.5) circle (0.07cm);
    \draw[fill=OrangeRed!42]  (9.75, -1.5) circle (0.07cm);
    \draw[fill=OrangeRed!42]  (10.25, -1.5) circle (0.07cm);
    \draw[fill=OrangeRed!42]  (9.5, -1.5) circle (0.07cm);
    \draw[fill=OrangeRed!42]  (10.5, -1.5) circle (0.07cm);

    \node at (0,7) {};
    
    % \fill [blue] (0.5,0) rectangle (1.5,0.25); 
    
    % First k vertices are the dominating set
    \draw[thick] (3.5,3.5) -- (3.5,2.5);
    \draw [thick,decorate,decoration={brace,amplitude=4pt}]  (0.5,4) -- (3.5,4);
    \node[scale=0.7] at (2,4.5) {Dominating Set};

    }
    \end{tikzpicture}
    \caption{This figure depicts a possible initial position of $n+k+1$ guards in a reduced instance of \EVC{} that is based on a \textsc{Yes}-instance of \RBDS{}. Note that in the reduced instance we may deploy up to $\ell = n + k + 2$ guards, and the position of the last guard (not shown) determines the type of the vertex cover that we are working with.}
    \label{fig:my_label2}
\end{figure}
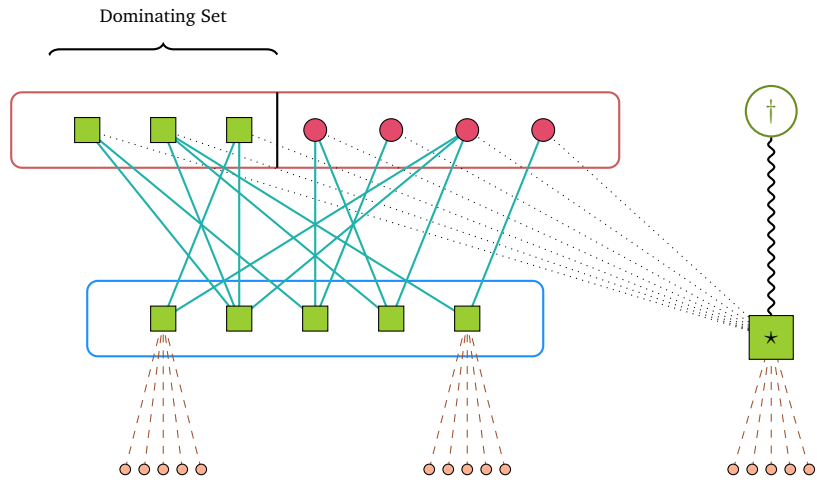

% \begin{claim}
  % Let $\mathcal{S} \subseteq V(H)$ be a nice vertex cover, and let $e \in E(H)$. If the guards are currently occupying $\mathcal{S}$, it is possible to move every guard at most once to a neighboring vertex so that the attack on $e$ is defended, and the new positions of the guards also corresponds to a nice vertex cover. 
% \end{claim}

\begin{claim}
  Let $\mathcal{S} \subseteq V(H)$ be a nice vertex cover, and let $e \in E(H)$. If the guards are currently occupying $\mathcal{S}$, there is a legal move that defends $e$ such that the new position of the guards also corresponds to a nice vertex cover. 
\end{claim}

\begin{proof}

  We argue this by an exhaustive case analysis on $(\mathcal{S},e)$.

  \paragraph*{Case 1. $\mathcal{S}$ is a backup vertex cover and $e$ is a structural edge.}

  Let $e = (v_q,u_p)$. If $q \leq k$, then the guards on $v_q$ and $u_p$ exchange positions. Otherwise, if $q>k$, there exists $r\leq k$ such that $(u_p,v_r)\in E(G)$ because $S$ dominates $B$. We perform the following sequence of moves:

  \begin{itemize}
    \item The guard on $u_p$ moves to $v_q$ along the structural edge that was attacked;
    \item the guard on $v_r$ moves to $u_p$ along a structural edge; \item the guard on the universal vertex moves to $v_r$ along a supplier edge; and 
    \item the guard on $\dagger$ moves to $\star$ along the bridge.
  \end{itemize}
  
  The new configuration corresponds to a red vertex cover.

  \paragraph*{Case 2. $\mathcal{S}$ is a backup vertex cover and $e$ is a sliding edge.}

Let $i \in [b]$ and suppose $e = (u_i,w)$ for some $w \in C_i$. 
Let $q \in [k]$ be such that $(v_q,u_i) \in E(G)$. We now perform the following sequence of moves:
  \begin{itemize}
    \item the guard on $u_i$ moves to $w$ along the sliding edge that was attacked;
    \item the guard on $v_q$ moves to $u_i$ along a structural edge;
    \item the guard on the universal vertex moves to $v_q$ along a supplier edge; and 
    \item the guard on $\dagger$ moves to $\star$ along the bridge.
  \end{itemize}

Suppose $e=(\star,w)$ is the sliding edge which is attacked, then the guard on $\star$ moves to $w$ and the guard on $\dagger$ moves to $\star$.
   
The new configuration corresponds to a dependent vertex cover (c.f.~\Cref{fig:case2bip}).

\begin{figure}
    \centering
    \begin{subfigure}[b]{\textwidth}
    \begin{tikzpicture}[square/.style={regular polygon,regular polygon sides=4}]
    \tikz{

    \draw [DodgerBlue,thick,rounded corners] (1,0) rectangle (7,1);
    \draw [IndianRed,thick,rounded corners] (0,2.5) rectangle (8,3.5);
    
    % \node [circle,draw,thick,OliveDrab,fill=white] (global) at (10,0.25) {$\star$}; 
    % \node [circle,draw,thick,OliveDrab,fill=white] (backup) at (10,3.25) {$\dagger$}; 
    
    \node at (10,0.25) [square,draw,fill=YellowGreen] (global) {$\star$};
    \node at (10,3.25) [square,draw,fill=YellowGreen] (backup) {$\dagger$};

    \tikzset{decoration={snake,amplitude=.4mm,segment length=2mm,
                       post length=0mm,pre length=0mm}}
    
    \draw [thick,decorate] (global) -- (backup);

    % Lines/edges
    \foreach \x in {1,...,7}
        \draw[dotted] (global) -- (\x, 3);
    
    % Edges to dependent vertices (first set)
    \draw[Sienna,dashed] (2,0.5) -- (2,-1.5);
    \draw[Sienna,dashed]  (1.75, -1.5) -- (2,0.5);
    \draw[Sienna,dashed]  (2.25, -1.5) -- (2,0.5);
    \draw[MediumVioletRed,snake=zigzag,thick]  (1.5, -1.5) -- (2,0.5);
    \draw[Sienna,dashed]  (2.5, -1.5) -- (2,0.5);
    
    % Edges to dependent vertices (second set)
    \draw[Sienna,dashed]  (6, -1.5) -- (6,0.5);
    \draw[Sienna,dashed]  (5.75, -1.5) -- (6,0.5);
    \draw[Sienna,dashed]  (6.25, -1.5) -- (6,0.5);
    \draw[Sienna,dashed]  (5.5, -1.5) -- (6,0.5);
    \draw[Sienna,dashed]  (6.5, -1.5) -- (6,0.5);
    
    % Edges to dependent vertices of the global vertex

    \draw[Sienna,dashed]  (10, -1.5) -- (global);
    \draw[Sienna,dashed]  (9.75, -1.5) -- (global);
    \draw[Sienna,dashed]  (10.25, -1.5) -- (global);
    \draw[Sienna,dashed]  (9.5, -1.5) -- (global);
    \draw[Sienna,dashed]  (10.5, -1.5) -- (global);

    % Arrows showing the defense
    
    \begin{scope}[xshift=-0.42cm]
    % blue guard slides down to dep neighbor
    \draw [->,Red,thick] (2,0.4) to [out=190,in=170] (1.5, -1.5);
    
    % DS neighbor comes to blue vertex 
    \draw [->,Red,thick] (3,3) to [out=190,in=170] (2,0.6);
    \end{scope}
    
    % DS neighbor comes to blue vertex 
    \draw [->,Red,thick] (global) to (3.25,2.9);
    
    % Backup guard moves to global vertex
    \draw [->,Red,thick,transform canvas={xshift = 0.5cm}] (backup) to (global);
    
    \node [circle,draw,thick,White,fill=Red] at (0,-1) {1};
    \node [circle,draw,thick,White,fill=Red] at (0.5,1.5) {2};
    \node [circle,draw,thick,White,fill=Red] at (8.1,1.5) {3};
    \node [circle,draw,thick,White,fill=Red] at (11,1.5) {4};
    
    % Some original edges
    
    \draw[thick,LightSeaGreen!42] (2,0.5) -- (3,3); 
    \draw[thick,LightSeaGreen!42] (6,0.5) -- (7,3);
    \draw[thick,LightSeaGreen!42] (4,0.5) -- (4,3);
    \draw[thick,LightSeaGreen!42] (3,0.5) -- (2,3);
    \draw[thick,LightSeaGreen!42] (5,0.5) -- (4,3);
    \draw[thick,LightSeaGreen!42] (5,0.5) -- (6,3);
    \draw[thick,LightSeaGreen!42] (2,0.5) -- (6,3);
    \draw[thick,LightSeaGreen!42] (6,0.5) -- (2,3);
    \draw[thick,LightSeaGreen!42] (3,0.5) -- (6,3);
    \draw[thick,LightSeaGreen!42] (3,0.5) -- (1,3);
    \draw[thick,LightSeaGreen!42] (4,0.5) -- (5,3);
    \draw[thick,LightSeaGreen!42] (3,0.5) -- (3,3);
    \draw[thick,LightSeaGreen!42] (4,0.5) -- (1,3);
    \draw[thick,LightSeaGreen!42] (5,0.5) -- (2,3);

    \foreach \x in {2,...,6}
        \node at (\x, 0.5) [square,draw,fill=YellowGreen] () {};
    
    \foreach \x in {4,...,7}
        \draw[fill=Crimson!77] (\x, 3) circle (0.15cm);

    \foreach \x in {1,2,3}
        \node at (\x, 3) [square,draw,fill=YellowGreen] () {};
        
    % First set of dependent vertices
    \draw[fill=OrangeRed!42]  (2, -1.5) circle (0.07cm);
    \draw[fill=OrangeRed!42]  (1.75, -1.5) circle (0.07cm);
    \draw[fill=OrangeRed!42]  (2.25, -1.5) circle (0.07cm);
    \draw[fill=OrangeRed!42]  (1.5, -1.5) circle (0.07cm);
    \draw[fill=OrangeRed!42]  (2.5, -1.5) circle (0.07cm);
    
    % Last set of dependent vertices
    \draw[fill=OrangeRed!42]  (6, -1.5) circle (0.07cm);
    \draw[fill=OrangeRed!42]  (5.75, -1.5) circle (0.07cm);
    \draw[fill=OrangeRed!42]  (6.25, -1.5) circle (0.07cm);
    \draw[fill=OrangeRed!42]  (5.5, -1.5) circle (0.07cm);
    \draw[fill=OrangeRed!42]  (6.5, -1.5) circle (0.07cm);

    % Dependent for global 
    \draw[fill=OrangeRed!42]  (10, -1.5) circle (0.07cm);
    \draw[fill=OrangeRed!42]  (9.75, -1.5) circle (0.07cm);
    \draw[fill=OrangeRed!42]  (10.25, -1.5) circle (0.07cm);
    \draw[fill=OrangeRed!42]  (9.5, -1.5) circle (0.07cm);
    \draw[fill=OrangeRed!42]  (10.5, -1.5) circle (0.07cm);

    \node at (0,7) {};
    
    % \fill [blue] (0.5,0) rectangle (1.5,0.25); 
    
    % First k vertices are the dominating set
    \draw[thick] (3.5,3.5) -- (3.5,2.5);
    \draw [thick,decorate,decoration={brace,amplitude=4pt}]  (0.5,4) -- (3.5,4);
    \node[scale=0.7] at (2,4.5) {Dominating Set};
    
    }
    \end{tikzpicture}
    \caption{This figure demonstrates a defense for when $\mathcal{S}$ is a backup vertex cover and a sliding edge is attacked.}
    \label{fig:case2a}
\end{subfigure}

\begin{subfigure}[b]{\textwidth}
    \centering
    \begin{tikzpicture}[square/.style={regular polygon,regular polygon sides=4}]
    \tikz{ 
    
    \draw [DodgerBlue,thick,rounded corners] (1,0) rectangle (7,1);
    \draw [IndianRed,thick,rounded corners] (0,2.5) rectangle (8,3.5);
    
    % \node [circle,draw,thick,OliveDrab,fill=white] (global) at (10,0.25) {$\star$}; 
    \node [circle,draw,thick,OliveDrab,fill=white] (backup) at (10,3.25) {$\dagger$}; 
    
    \node at (10,0.25) [square,draw,fill=YellowGreen] (global) {$\star$};
    % \node at (10,3.25) [square,draw,fill=YellowGreen] (backup) {$\dagger$};

    \tikzset{decoration={snake,amplitude=.4mm,segment length=2mm,
                       post length=0mm,pre length=0mm}}
    
    \draw [thick,decorate] (global) -- (backup);

    % Lines/edges
    \foreach \x in {1,...,7}
        \draw[dotted] (global) -- (\x, 3);
    
    % Edges to dependent vertices (first set)
    \draw[Sienna,dashed] (2,0.5) -- (2,-1.5);
    \draw[Sienna,dashed]  (1.75, -1.5) -- (2,0.5);
    \draw[Sienna,dashed]  (2.25, -1.5) -- (2,0.5);
    \draw[Gray,snake=zigzag,thick]  (1.25, -1.5) -- (2,0.5);
    \draw[Sienna,dashed]  (2.5, -1.5) -- (2,0.5);
    
    % Edges to dependent vertices (second set)
    \draw[Sienna,dashed]  (6, -1.5) -- (6,0.5);
    \draw[Sienna,dashed]  (5.75, -1.5) -- (6,0.5);
    \draw[Sienna,dashed]  (6.25, -1.5) -- (6,0.5);
    \draw[Sienna,dashed]  (5.5, -1.5) -- (6,0.5);
    \draw[Sienna,dashed]  (6.5, -1.5) -- (6,0.5);
    
    % Edges to dependent vertices of the global vertex

    \draw[Sienna,dashed]  (10, -1.5) -- (global);
    \draw[Sienna,dashed]  (9.75, -1.5) -- (global);
    \draw[Sienna,dashed]  (10.25, -1.5) -- (global);
    \draw[Sienna,dashed]  (9.5, -1.5) -- (global);
    \draw[Sienna,dashed]  (10.5, -1.5) -- (global);
    
    % Some original edges

    \draw[thick,LightSeaGreen] (2,0.5) -- (3,3); 
    \draw[thick,LightSeaGreen] (6,0.5) -- (7,3);
    \draw[thick,LightSeaGreen] (4,0.5) -- (4,3);
    \draw[thick,LightSeaGreen] (3,0.5) -- (2,3);
    \draw[thick,LightSeaGreen] (5,0.5) -- (4,3);
    \draw[thick,LightSeaGreen] (5,0.5) -- (6,3);
    \draw[thick,LightSeaGreen] (2,0.5) -- (6,3);
    \draw[thick,LightSeaGreen] (6,0.5) -- (2,3);
    \draw[thick,LightSeaGreen] (3,0.5) -- (6,3);
    \draw[thick,LightSeaGreen] (3,0.5) -- (1,3);
    \draw[thick,LightSeaGreen] (4,0.5) -- (5,3);
    \draw[thick,LightSeaGreen] (3,0.5) -- (3,3);
    \draw[thick,LightSeaGreen] (4,0.5) -- (1,3);
    \draw[thick,LightSeaGreen] (5,0.5) -- (2,3);

    \foreach \x in {2,...,6}
        \node at (\x, 0.5) [square,draw,fill=YellowGreen] () {};
    
    \foreach \x in {4,...,7}
        \draw[fill=Crimson!77] (\x, 3) circle (0.15cm);

    \foreach \x in {1,2,3}
        \node at (\x, 3) [square,draw,fill=YellowGreen] () {};
        
    % First set of dependent vertices
    \draw[fill=OrangeRed!42]  (2, -1.5) circle (0.07cm);
    \draw[fill=OrangeRed!42]  (1.75, -1.5) circle (0.07cm);
    \draw[fill=OrangeRed!42]  (2.25, -1.5) circle (0.07cm);
    
    % \draw[fill=OrangeRed!42]  (1.5, -1.5) circle (0.07cm);
    \node at (1.25, -1.5) [square,draw,fill=YellowGreen] () {};
    
    \draw[fill=OrangeRed!42]  (2.5, -1.5) circle (0.07cm);
    
    % Last set of dependent vertices
    \draw[fill=OrangeRed!42]  (6, -1.5) circle (0.07cm);
    \draw[fill=OrangeRed!42]  (5.75, -1.5) circle (0.07cm);
    \draw[fill=OrangeRed!42]  (6.25, -1.5) circle (0.07cm);
    \draw[fill=OrangeRed!42]  (5.5, -1.5) circle (0.07cm);
    \draw[fill=OrangeRed!42]  (6.5, -1.5) circle (0.07cm);

    % Dependent for global 
    \draw[fill=OrangeRed!42]  (10, -1.5) circle (0.07cm);
    \draw[fill=OrangeRed!42]  (9.75, -1.5) circle (0.07cm);
    \draw[fill=OrangeRed!42]  (10.25, -1.5) circle (0.07cm);
    \draw[fill=OrangeRed!42]  (9.5, -1.5) circle (0.07cm);
    \draw[fill=OrangeRed!42]  (10.5, -1.5) circle (0.07cm);

    \node at (0,7) {};
    
    % \fill [blue] (0.5,0) rectangle (1.5,0.25); 
    
    % First k vertices are the dominating set
    \draw[thick] (3.5,3.5) -- (3.5,2.5);
    \draw [thick,decorate,decoration={brace,amplitude=4pt}]  (0.5,4) -- (3.5,4);
    \node[scale=0.7] at (2,4.5) {Dominating Set};

    }
    \end{tikzpicture}
    \caption{This figure demonstrates a the positions of the guards after the defence in ~\Cref{fig:case2a} is executed.}
    \label{fig:case2b}
    \end{subfigure}
    \caption{This figure demonstrates the case when $\mathcal{S}$ is a backup vertex cover and a sliding edge is attacked.}
    \label{fig:case2bip}
\end{figure}
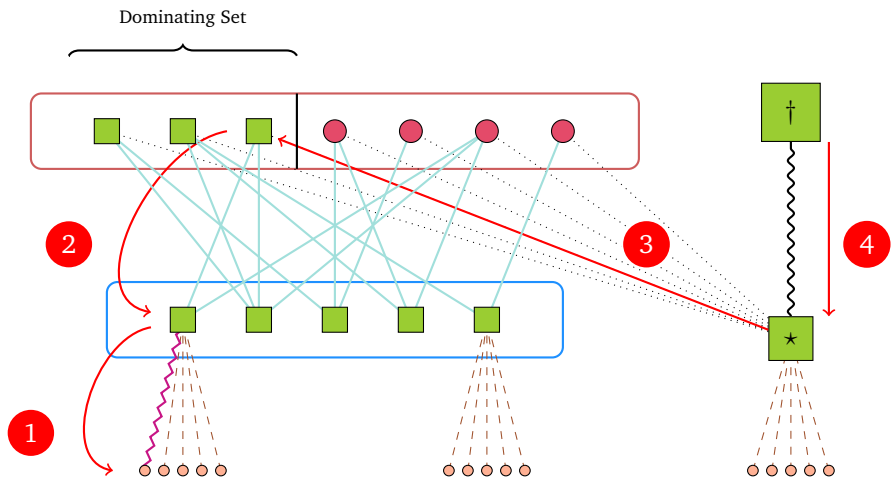
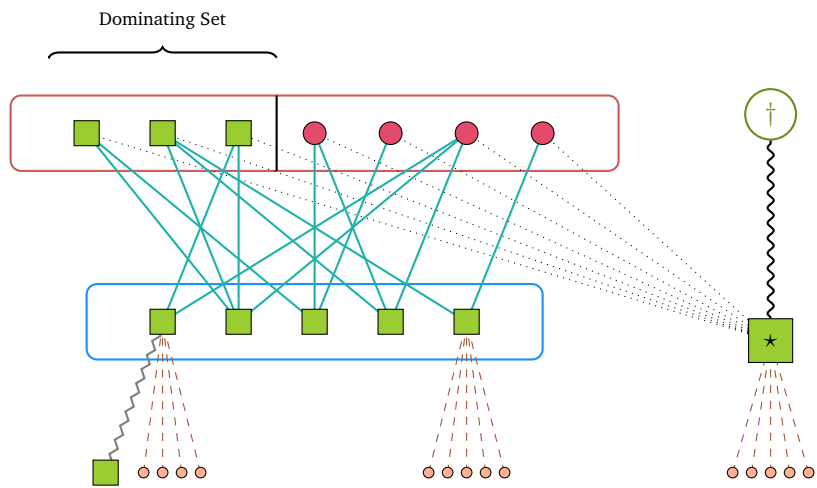

\paragraph*{Case 3. $\mathcal{S}$ is a backup vertex cover and $e$ is a supplier edge.}

  Let $e = (v_i,\star)$. If $i \leq k$, the guards on $v_i$ and $\star$ exchange positions. If $i > k$, we have the guard on the universal vertex move to $v_i$ via the supplier edge that was attacked. The guard on the backup vertex then moves to $\star$ along the bridge, leading to a red vertex cover. 

\paragraph*{Case 4. $\mathcal{S}$ is a backup vertex cover and $e$ is a bridge edge.}

  The guards on $\dagger$ and $\star$ exchange places and the resulting configuration is again a backup vertex cover. 

\paragraph*{Case 5. $\mathcal{S}$ is a red vertex cover and $e$ is a structural edge.}
  
   Let $e = (v_q,u_p)$. If $q \leq k$, then the guards on $v_q$ and $u_p$ exchange positions. If $q > k$, and if $v_q$ has a guard, then the guards on $u_p$ and $v_q$ exchange positions. Otherwise let $j\neq q$ be such that $v_j$ has a guard. Note that $j > k$. 
   Let $r \in [k]$ be such that $(v_r,u_p) \in E(G)$.
  \begin{itemize}
    \item The guard on $u_p$ moves to $v_q$ along the structural edge that was attacked;
    \item the guard on $v_r$ moves to $u_p$ along a structural edge; \item the guard on the universal vertex moves to $v_r$ along a supplier edge; and 
    \item the guard on $v_j$ moves to $\star$ along a supplier edge.
  \end{itemize}

  Note that this new configuration of guards corresponds to a red vertex cover (c.f.~\Cref{fig:case5bip}).

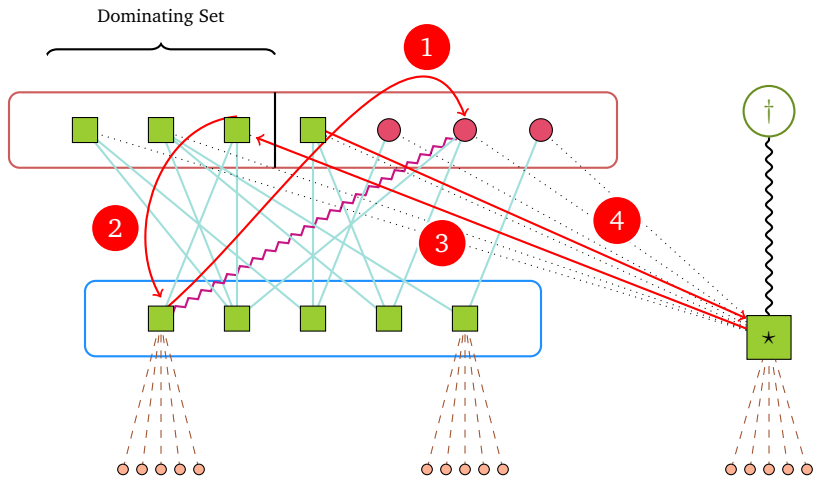
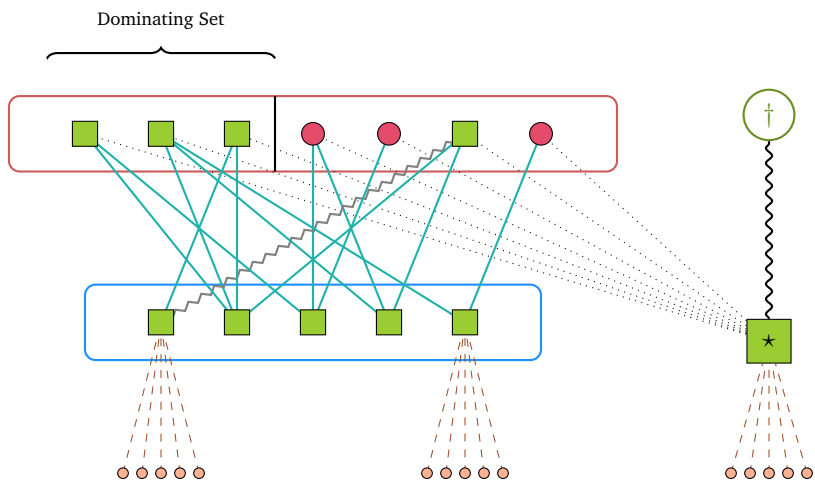
\begin{figure}
    \centering
    \begin{subfigure}[b]{\textwidth}
    \begin{tikzpicture}[square/.style={regular polygon,regular polygon sides=4}]
    \tikz{

    \draw [DodgerBlue,thick,rounded corners] (1,0) rectangle (7,1);
    \draw [IndianRed,thick,rounded corners] (0,2.5) rectangle (8,3.5);
    
    % \node [circle,draw,thick,OliveDrab,fill=white] (global) at (10,0.25) {$\star$}; 
    \node [circle,draw,thick,OliveDrab,fill=white] (backup) at (10,3.25) {$\dagger$}; 
    
    \node at (10,0.25) [square,draw,fill=YellowGreen] (global) {$\star$};
    %\node at (10,3.25) [square,draw,fill=YellowGreen] (backup) {$\dagger$};

    \tikzset{decoration={snake,amplitude=.4mm,segment length=2mm,
                       post length=0mm,pre length=0mm}}
    
    \draw [thick,decorate] (global) -- (backup);
    
    % Some original edges
    
    \draw[thick,LightSeaGreen!42] (2,0.5) -- (3,3); 
    \draw[thick,LightSeaGreen!42] (6,0.5) -- (7,3);
    \draw[thick,LightSeaGreen!42] (4,0.5) -- (4,3);
    \draw[thick,LightSeaGreen!42] (3,0.5) -- (2,3);
    \draw[thick,LightSeaGreen!42] (5,0.5) -- (4,3);
    \draw[thick,LightSeaGreen!42] (5,0.5) -- (6,3);
    \draw[MediumVioletRed,snake=zigzag,thick] (2,0.5) -- (6,3);
    \draw[thick,LightSeaGreen!42] (6,0.5) -- (2,3);
    \draw[thick,LightSeaGreen!42] (3,0.5) -- (6,3);
    \draw[thick,LightSeaGreen!42] (3,0.5) -- (1,3);
    \draw[thick,LightSeaGreen!42] (4,0.5) -- (5,3);
    \draw[thick,LightSeaGreen!42] (3,0.5) -- (3,3);
    \draw[thick,LightSeaGreen!42] (4,0.5) -- (1,3);
    \draw[thick,LightSeaGreen!42] (5,0.5) -- (2,3);

    % Lines/edges
    \foreach \x in {1,...,7}
        \draw[dotted] (global) -- (\x, 3);
    
    % Edges to dependent vertices (first set)
    \draw[Sienna,dashed] (2,0.5) -- (2,-1.5);
    \draw[Sienna,dashed]  (1.75, -1.5) -- (2,0.5);
    \draw[Sienna,dashed]  (2.25, -1.5) -- (2,0.5);
    \draw[Sienna,dashed]  (1.5, -1.5) -- (2,0.5);
    \draw[Sienna,dashed]  (2.5, -1.5) -- (2,0.5);
    
    % Edges to dependent vertices (second set)
    \draw[Sienna,dashed]  (6, -1.5) -- (6,0.5);
    \draw[Sienna,dashed]  (5.75, -1.5) -- (6,0.5);
    \draw[Sienna,dashed]  (6.25, -1.5) -- (6,0.5);
    \draw[Sienna,dashed]  (5.5, -1.5) -- (6,0.5);
    \draw[Sienna,dashed]  (6.5, -1.5) -- (6,0.5);
    
    % Edges to dependent vertices of the global vertex

    \draw[Sienna,dashed]  (10, -1.5) -- (global);
    \draw[Sienna,dashed]  (9.75, -1.5) -- (global);
    \draw[Sienna,dashed]  (10.25, -1.5) -- (global);
    \draw[Sienna,dashed]  (9.5, -1.5) -- (global);
    \draw[Sienna,dashed]  (10.5, -1.5) -- (global);

    % Arrows showing the defense
    
    \begin{scope}[yshift=0.18cm]
    % blue guard moves to red vertex
   \draw [->,Red,thick] (2,0.4) to [out=40,in=110] (6,3);
    
    % DS neighbor comes to blue vertex 
   \draw [->,Red,thick] (3,3) to [out=185,in=120] (2,0.6);
    \end{scope}
    
    % Global neighbor comes to DS vertex
    \draw [->,Red,thick] (global) to (3.25,2.9);
    
    % Red guard moves to global vertex
    \draw [->,Red,thick,transform canvas={xshift=-0.1cm,yshift = 0.1}] (4,3.1) to (9.8,0.5);
    
    \node [circle,draw,thick,White,fill=Red] at (1.4,1.7) {2};
    \node [circle,draw,thick,White,fill=Red] at (5.5,4.1) {1};
    \node [circle,draw,thick,White,fill=Red] at (5.7,1.5) {3};
    \node [circle,draw,thick,White,fill=Red] at (8,1.8) {4};

    \foreach \x in {2,...,6}
        \node at (\x, 0.5) [square,draw,fill=YellowGreen] () {};
    
    \foreach \x in {5,...,7}
        \draw[fill=Crimson!77] (\x, 3) circle (0.15cm);

    \foreach \x in {1,2,3,4}
        \node at (\x, 3) [square,draw,fill=YellowGreen] () {};
        
    % First set of dependent vertices
    \draw[fill=OrangeRed!42]  (2, -1.5) circle (0.07cm);
    \draw[fill=OrangeRed!42]  (1.75, -1.5) circle (0.07cm);
    \draw[fill=OrangeRed!42]  (2.25, -1.5) circle (0.07cm);
    \draw[fill=OrangeRed!42]  (1.5, -1.5) circle (0.07cm);
    \draw[fill=OrangeRed!42]  (2.5, -1.5) circle (0.07cm);
    
    % Last set of dependent vertices
    \draw[fill=OrangeRed!42]  (6, -1.5) circle (0.07cm);
    \draw[fill=OrangeRed!42]  (5.75, -1.5) circle (0.07cm);
    \draw[fill=OrangeRed!42]  (6.25, -1.5) circle (0.07cm);
    \draw[fill=OrangeRed!42]  (5.5, -1.5) circle (0.07cm);
    \draw[fill=OrangeRed!42]  (6.5, -1.5) circle (0.07cm);

    % Dependent for global 
    \draw[fill=OrangeRed!42]  (10, -1.5) circle (0.07cm);
    \draw[fill=OrangeRed!42]  (9.75, -1.5) circle (0.07cm);
    \draw[fill=OrangeRed!42]  (10.25, -1.5) circle (0.07cm);
    \draw[fill=OrangeRed!42]  (9.5, -1.5) circle (0.07cm);
    \draw[fill=OrangeRed!42]  (10.5, -1.5) circle (0.07cm);

    \node at (0,7) {};
    
    % \fill [blue] (0.5,0) rectangle (1.5,0.25); 
    
    % First k vertices are the dominating set
    \draw[thick] (3.5,3.5) -- (3.5,2.5);
    \draw [thick,decorate,decoration={brace,amplitude=4pt}]  (0.5,4) -- (3.5,4);
    \node[scale=0.7] at (2,4.5) {Dominating Set};
    
    }
    \end{tikzpicture}
    \caption{This figure demonstrates a defense for when $\mathcal{S}$ is a red vertex cover and a structural edge is attacked.}
    \label{fig:case2c}
\end{subfigure}
\begin{subfigure}[b]{\textwidth}
    \begin{tikzpicture}[square/.style={regular polygon,regular polygon sides=4}]
    \tikz{ 
    
    \draw [DodgerBlue,thick,rounded corners] (1,0) rectangle (7,1);
    \draw [IndianRed,thick,rounded corners] (0,2.5) rectangle (8,3.5);
    
    % \node [circle,draw,thick,OliveDrab,fill=white] (global) at (10,0.25) {$\star$}; 
    \node [circle,draw,thick,OliveDrab,fill=white] (backup) at (10,3.25) {$\dagger$}; 
    
    \node at (10,0.25) [square,draw,fill=YellowGreen] (global) {$\star$};
    % \node at (10,3.25) [square,draw,fill=YellowGreen] (backup) {$\dagger$};

    \tikzset{decoration={snake,amplitude=.4mm,segment length=2mm,
                       post length=0mm,pre length=0mm}}
    
    \draw [thick,decorate] (global) -- (backup);

    % Lines/edges
    \foreach \x in {1,...,7}
        \draw[dotted] (global) -- (\x, 3);
    
    % Edges to dependent vertices (first set)
    \draw[Sienna,dashed] (2,0.5) -- (2,-1.5);
    \draw[Sienna,dashed]  (1.75, -1.5) -- (2,0.5);
    \draw[Sienna,dashed]  (2.25, -1.5) -- (2,0.5);
    \draw[Sienna,dashed]  (1.5, -1.5) -- (2,0.5);
    \draw[Sienna,dashed]  (2.5, -1.5) -- (2,0.5);
    
    % Edges to dependent vertices (second set)
    \draw[Sienna,dashed]  (6, -1.5) -- (6,0.5);
    \draw[Sienna,dashed]  (5.75, -1.5) -- (6,0.5);
    \draw[Sienna,dashed]  (6.25, -1.5) -- (6,0.5);
    \draw[Sienna,dashed]  (5.5, -1.5) -- (6,0.5);
    \draw[Sienna,dashed]  (6.5, -1.5) -- (6,0.5);
    
    % Edges to dependent vertices of the global vertex

    \draw[Sienna,dashed]  (10, -1.5) -- (global);
    \draw[Sienna,dashed]  (9.75, -1.5) -- (global);
    \draw[Sienna,dashed]  (10.25, -1.5) -- (global);
    \draw[Sienna,dashed]  (9.5, -1.5) -- (global);
    \draw[Sienna,dashed]  (10.5, -1.5) -- (global);
    
    % Some original edges

    \draw[thick,LightSeaGreen] (2,0.5) -- (3,3); 
    \draw[thick,LightSeaGreen] (6,0.5) -- (7,3);
    \draw[thick,LightSeaGreen] (4,0.5) -- (4,3);
    \draw[thick,LightSeaGreen] (3,0.5) -- (2,3);
    \draw[thick,LightSeaGreen] (5,0.5) -- (4,3);
    \draw[thick,LightSeaGreen] (5,0.5) -- (6,3);
    \draw[Gray,snake=zigzag,thick] (2,0.5) -- (6,3);
    \draw[thick,LightSeaGreen] (6,0.5) -- (2,3);
    \draw[thick,LightSeaGreen] (3,0.5) -- (6,3);
    \draw[thick,LightSeaGreen] (3,0.5) -- (1,3);
    \draw[thick,LightSeaGreen] (4,0.5) -- (5,3);
    \draw[thick,LightSeaGreen] (3,0.5) -- (3,3);
    \draw[thick,LightSeaGreen] (4,0.5) -- (1,3);
    \draw[thick,LightSeaGreen] (5,0.5) -- (2,3);

    \foreach \x in {2,...,6}
        \node at (\x, 0.5) [square,draw,fill=YellowGreen] () {};
    
    \foreach \x in {4,5,7}
        \draw[fill=Crimson!77] (\x, 3) circle (0.15cm);

    \foreach \x in {1,2,3,6}
        \node at (\x, 3) [square,draw,fill=YellowGreen] () {};
        
    % First set of dependent vertices
    \draw[fill=OrangeRed!42]  (2, -1.5) circle (0.07cm);
    \draw[fill=OrangeRed!42]  (1.75, -1.5) circle (0.07cm);
    \draw[fill=OrangeRed!42]  (2.25, -1.5) circle (0.07cm);
    
    \draw[fill=OrangeRed!42]  (1.5, -1.5) circle (0.07cm);

    \draw[fill=OrangeRed!42]  (2.5, -1.5) circle (0.07cm);
    
    % Last set of dependent vertices
    \draw[fill=OrangeRed!42]  (6, -1.5) circle (0.07cm);
    \draw[fill=OrangeRed!42]  (5.75, -1.5) circle (0.07cm);
    \draw[fill=OrangeRed!42]  (6.25, -1.5) circle (0.07cm);
    \draw[fill=OrangeRed!42]  (5.5, -1.5) circle (0.07cm);
    \draw[fill=OrangeRed!42]  (6.5, -1.5) circle (0.07cm);

    % Dependent for global 
    \draw[fill=OrangeRed!42]  (10, -1.5) circle (0.07cm);
    \draw[fill=OrangeRed!42]  (9.75, -1.5) circle (0.07cm);
    \draw[fill=OrangeRed!42]  (10.25, -1.5) circle (0.07cm);
    \draw[fill=OrangeRed!42]  (9.5, -1.5) circle (0.07cm);
    \draw[fill=OrangeRed!42]  (10.5, -1.5) circle (0.07cm);

    \node at (0,7) {};
    
    % \fill [blue] (0.5,0) rectangle (1.5,0.25); 
    
    % First k vertices are the dominating set
    \draw[thick] (3.5,3.5) -- (3.5,2.5);
    \draw [thick,decorate,decoration={brace,amplitude=4pt}]  (0.5,4) -- (3.5,4);
    \node[scale=0.7] at (2,4.5) {Dominating Set};

    }
    \end{tikzpicture}
    \caption{This figure demonstrates the positions of the guards after the defence in ~\Cref{fig:case2c} is executed.}
    \label{fig:case2d}
    \end{subfigure}
    \caption{This figure demonstrates the case when $\mathcal{S}$ is a red vertex cover and a structural edge is attacked.}
    \label{fig:case5bip}
\end{figure}

  \paragraph*{Case 6. $\mathcal{S}$ is a red vertex cover and $e$ is a sliding edge.}
  
   Let $i \in [b]$ and suppose $e = (u_i,w)$ for some $w \in C_i$. Let $r \in [k]$ be such that $(v_r,u_i) \in E(G)$. Let $j>k$ such that $v_j$ has a guard. Then we have the following sequence of moves. 

  \begin{itemize}
    \item the guard on $u_i$ moves to $w$ along the sliding edge that was attacked;
    \item the guard on $v_r$ moves to $u_i$ along a structural edge;
    \item the guard on the universal vertex moves to $v_r$ along a supplier edge; and 
    \item the guard on $v_j$ moves to $\star$ along a supplier edge.
  \end{itemize}
  
  Now suppose $e = (\star,w)$ for some $w \in D$. Let $j>k$ such that $v_j$ has a guard.
  
  \begin{itemize}
    \item the guard on $\star$ moves to $w$ along the sliding edge that was attacked and
    \item the guard on $v_j$ moves to $\star$ along a supplier edge.
  \end{itemize}

  In both cases, the new configuration of the guards corresponds to a dependent vertex cover.
  
  \paragraph*{Case 7. $\mathcal{S}$ is a red vertex cover and $e$ is a supplier edge.}
  
  Let $e = (v_i,\star)$. If $i \leq k$, the guards on $v_i$ and $\star$ exchange positions. If $i > k$ and $v_i$ has a guard, the guard on $v_i$ and $\star$ exchange positions. If $i>k$ and $v_i$ does not have a guard, let $j>k$ and $j\neq i$ be such that $v_j$ has a guard. We have the guard on the universal vertex move to $v_i$ via the supplier edge that was attacked. The guard on $v_j$ then moves to $\star$ along a supplier edge, leading to another red vertex cover. 

  \paragraph*{Case 8. $\mathcal{S}$ is a red vertex cover and $e$ is a bridge edge.}

  Let $j>k$ be such that $v_j$ has a guard. The guard on $\star$ moves to $\dagger$ along the attacked bridge edge. The guard on $v_j$ moves to $\star$ along a supplier edge and the resulting configuration is a backup vertex cover. 

  \paragraph*{Case 9. $\mathcal{S}$ is a dependent vertex cover and $e$ is a structural edge.}
  
  Suppose $w\in C_j$ is such that $w$ has a guard.
  Suppose $e=(v_q,u_j)$. If $q<k$, guards on $v_q$ and $u_j$ exchange and we have the same dependent vertex cover. If $q>k$, guard on $u_j$ moves to $v_q$ along the attacked structural edge and guard on $w$ moves to $u_j$ along a sliding edge. Thus we have a red vertex cover.
  
  Let $e = (v_q,u_p)$ for some $p\neq j$. If $q \leq k$, then the guards on $v_q$ and $u_p$ exchange positions and we have the same dependent vertex cover. 
  If $q > k$, let $r \in [k]$ be such that $(v_r,u_p) \in E(G)$. If $(v_r,u_j)\in E(G)$, we perform the following sequence of moves:

  \begin{itemize}
    \item The guard on $u_p$ moves to $v_q$ along the structural edge that was attacked;
    \item the guard on $v_r$ moves to $u_p$ along a structural edge;
    \item the guard on $u_j$ moves to $v_r$ along a structural edge; and
    \item the guard on $w$ moves to $u_j$ along a sliding edge.
  \end{itemize}

  Note that this new configuration of guards corresponds to a red vertex cover.
   
  If $(v_r,u_j)\notin E(G)$, there exists $s\in [k]$ such that $(v_r,u_s) \in E(G)$. We perform the following sequence of moves:
  \begin{itemize}
    \item The guard on $u_p$ moves to $v_q$ along the structural edge that was attacked;
    \item the guard on $v_r$ moves to $u_p$ along a structural edge;
    \item the guard on $\star$ moves to $v_r$ along a supplier edge;
    \item the guard on $v_s$ moves to $\star$ along a supplier edge;
    \item the guard on $u_j$ moves to $v_s$ along a structural edge;
    and
    \item the guard on $w$ moves to $u_j$ along a sliding edge.
  \end{itemize}

  Note that this new configuration of guards corresponds to a red vertex cover.
  
  Suppose $w\in D$ is such that $w$ has a guard. Let $e = (v_q,u_p)$. If $q \leq k$, then the guards on $v_q$ and $u_p$ exchange positions and we have the same dependent vertex cover.  If $q > k$, let $r \in [k]$ be such that $(v_r,u_p) \in E(G)$. We perform the following sequence of moves:

  \begin{itemize}
    \item The guard on $u_p$ moves to $v_q$ along the structural edge that was attacked;
    \item the guard on $v_r$ moves to $u_p$ along a structural edge;
    \item the guard on $\star$ moves to $v_r$ along a supplier edge; and
    \item the guard on $w$ moves to $\star$ along a sliding edge.
  \end{itemize}
  
  Note that this new configuration of guards corresponds to a red vertex cover.
  
\paragraph*{Case 10. $\mathcal{S}$ is a dependent vertex cover and $e$ is a sliding edge.}
  
  Let $w\in C_j$ be such that $w$ has a guard.
  Suppose $e= (u_j,w)$ is the sliding edge which was attacked, then the guards exchange their positions and we have the same dependent vertex cover.
  Suppose $e= (u_j,z)$ for some $z\neq w$ is the sliding edge which was attacked, then the guard on $u_j$ moves to $z$ to defend the attack and the guard from $w$ moves to $u_j$, leading to a different dependent vertex cover. 
  Let $i \in [b]$ and suppose $e = (u_i,z)$ for some $z \in C_i$.  Let $q \in [k]$ be such that $(v_q,u_i) \in E(G)$ and $r\in [k]$ such that $(v_r,u_j)\in E(G)$. If $r\neq q$, we perform the following sequence of moves:
  
  \begin{itemize}
    \item the guard on $u_i$ moves to $z$ along the sliding edge that was attacked;
    \item the guard on $v_q$ moves to $u_i$ along a structural edge;
    \item the guard on the universal vertex moves to $v_q$ along a supplier edge; 
    \item the guard on $v_r$ moves to $\star$ along a supplier edge;
    \item the guard on $u_j$ moves to $v_r$ along a structural edge; and
    \item the guard on $w$ moves to $u_j$ along a sliding edge.
  \end{itemize}
  Note that the new configuration corresponds to a dependent vertex cover (c.f.~\Cref{fig:case10bip}).
  
  If $u_i$ and $u_j$ are dominated by a common vertex in $G$, that is, there exists $q \in [k]$ be such that $(v_q,u_i) \in E(G)$ and $(v_q,u_j) \in E(G)$, then we perform the following sequence of moves:
  
  \begin{itemize}
    \item the guard on $u_i$ moves to $z$ along the sliding edge that was attacked;
    \item the guard on $v_q$ moves to $u_i$ along a structural edge;
    \item the guard on $u_j$ moves to $v_q$ along another structural edge;
    \item the guard on $w$ moves to $u_j$.
  \end{itemize}
 The new configuration corresponds to a dependent vertex cover. 
 
 Let $w\in D$ be such that $w$ has a guard. Suppose $e= (\star,w)$ is the sliding edge which was attacked, then the guards exchange their positions and we have the same dependent vertex cover.
  Suppose $e= (\star,z)$ for some $z\neq w$ is the sliding edge which was attacked, then the guard on $\star$ moves to $z$ to defend the attack and the guard from $w$ moves to $\star$, leading to a different dependent vertex cover. 
  Let $i \in [b]$ and suppose $e = (u_i,z)$ for some $z \in C_i$.  Let $q \in [k]$ be such that $(v_q,u_i) \in E(G)$. We perform the following sequence of moves:
  
  \begin{itemize}
    \item the guard on $u_i$ moves to $z$ along the sliding edge that was attacked;
    \item the guard on $v_q$ moves to $u_i$ along a structural edge;
    \item the guard on the universal vertex moves to $v_q$ along a supplier edge;  and
    \item the guard on $w$ moves to $\star$ along a sliding edge.
  \end{itemize}
  Note that the new configuration corresponds to a dependent vertex cover.

\begin{figure}
    \centering
    \begin{subfigure}[b]{0.8\textwidth}
    \begin{tikzpicture}[square/.style={regular polygon,regular polygon sides=4}]
    \tikz{

    \draw [DodgerBlue,thick,rounded corners] (1,0) rectangle (7,1);
    \draw [IndianRed,thick,rounded corners] (0,2.5) rectangle (8,3.5);
    
    % \node [circle,draw,thick,OliveDrab,fill=white] (global) at (10,0.25) {$\star$}; 
    \node [circle,draw,thick,OliveDrab,fill=white] (backup) at (10,3.25) {$\dagger$}; 
    
    \node at (10,0.25) [square,draw,fill=YellowGreen] (global) {$\star$};
    %\node at (10,3.25) [square,draw,fill=YellowGreen] (backup) {$\dagger$};

    % Some original edges
    
    \draw[thick,LightSeaGreen!42] (2,0.5) -- (3,3); 
    \draw[thick,LightSeaGreen!42] (6,0.5) -- (7,3);
    \draw[thick,LightSeaGreen!42] (4,0.5) -- (4,3);
    \draw[thick,LightSeaGreen!42] (3,0.5) -- (2,3);
    \draw[thick,LightSeaGreen!42] (5,0.5) -- (4,3);
    \draw[thick,LightSeaGreen!42] (5,0.5) -- (6,3);
    \draw[thick,LightSeaGreen!42] (2,0.5) -- (6,3);
    \draw[thick,LightSeaGreen!42] (6,0.5) -- (2,3);
    \draw[thick,LightSeaGreen!42] (3,0.5) -- (6,3);
    \draw[thick,LightSeaGreen!42] (3,0.5) -- (1,3);
    \draw[thick,LightSeaGreen!42] (4,0.5) -- (5,3);
    \draw[thick,LightSeaGreen!42] (3,0.5) -- (3,3);
    \draw[thick,LightSeaGreen!42] (4,0.5) -- (1,3);
    \draw[thick,LightSeaGreen!42] (5,0.5) -- (2,3);
    
    \tikzset{decoration={snake,amplitude=.4mm,segment length=2mm,
                       post length=0mm,pre length=0mm}}
    
    \draw [thick,decorate] (global) -- (backup);

    % Lines/edges
    \foreach \x in {1,...,7}
        \draw[dotted] (global) -- (\x, 3);
    
    % Edges to dependent vertices (first set)
    \draw[Sienna,dashed] (2,0.5) -- (2,-1.5);
    \draw[Sienna,dashed]  (1.75, -1.5) -- (2,0.5);
    \draw[Sienna,dashed]  (2.25, -1.5) -- (2,0.5);
    \draw[Sienna,dashed]  (1.25, -1.5) -- (2,0.5);
    \draw[Sienna,dashed]  (2.5, -1.5) -- (2,0.5);
    
    % Edges to dependent vertices (second set)
    \draw[Sienna,dashed]  (6, -1.5) -- (6,0.5);
    \draw[Sienna,dashed]  (5.75, -1.5) -- (6,0.5);
    \draw[Sienna,dashed]  (6.25, -1.5) -- (6,0.5);
    \draw[MediumVioletRed,snake=zigzag,thick]  (5.5, -1.5) -- (6,0.5);
    \draw[Sienna,dashed]  (6.5, -1.5) -- (6,0.5);
    
    % Edges to dependent vertices of the global vertex

    \draw[Sienna,dashed]  (10, -1.5) -- (global);
    \draw[Sienna,dashed]  (9.75, -1.5) -- (global);
    \draw[Sienna,dashed]  (10.25, -1.5) -- (global);
    \draw[Sienna,dashed]  (9.5, -1.5) -- (global);
    \draw[Sienna,dashed]  (10.5, -1.5) -- (global);

    % Arrows showing the defense
    
    \begin{scope}[yshift=0.1cm]
    % blue guard moves to dependent vertex
   \draw [->,Red,thick] (6,0.4) to [out=60,in=120] (5.25,-1.5);
    
    % DS neighbor comes to blue vertex 
  \draw [->,Red,thick] (2.1,2.8) to [out=330,in=120] (6,0.6);
    \end{scope}
    
    % Global neighbor comes to DS vertex
    \draw [->,Red,thick] (global) to (2.25,2.9);
    
    % DS guard moves to global vertex
    \draw [->,Red,thick,transform canvas={xshift=-0.1cm,yshift = 0.1}] (3,3.1) to (9.8,0.5);
    
    %Blue guard moves to DS
    \draw [->,Red,thick] (2,0.5) to [out=70,in=120] (3,3.2);
    
    %dependent guard moves to blue vertex
    \draw [->,Red,thick] (1.3,-1.5) to [out=150,in=190] (1.7,0.5);
    
    \node [circle,draw,thick,White,fill=Red] at (4.8,-0.7) {1};
    \node [circle,draw,thick,White,fill=Red] at (5,1.3) {2};
    \node [circle,draw,thick,White,fill=Red] at (8,0.5) {3};
    \node [circle,draw,thick,White,fill=Red] at (9,1.2) {4};
    \node [circle,draw,thick,White,fill=Red] at (1.7,1.5) {5};
    \node [circle,draw,thick,White,fill=Red] at (0.5,-0.5) {6};

    \foreach \x in {2,...,6}
        \node at (\x, 0.5) [square,draw,fill=YellowGreen] () {};
    
    \foreach \x in {4,...,7}
        \draw[fill=Crimson!77] (\x, 3) circle (0.15cm);

    \foreach \x in {1,2,3}
        \node at (\x, 3) [square,draw,fill=YellowGreen] () {};
        
    % First set of dependent vertices
    \draw[fill=OrangeRed!42]  (2, -1.5) circle (0.07cm);
    \draw[fill=OrangeRed!42]  (1.75, -1.5) circle (0.07cm);
    \draw[fill=OrangeRed!42]  (2.25, -1.5) circle (0.07cm);
    %\draw[fill=OrangeRed!42]  (1.5, -1.5) circle (0.07cm);
    \draw[fill=OrangeRed!42]  (2.5, -1.5) circle (0.07cm);
    \node at (1.25, -1.5) [square,draw,fill=YellowGreen] () {};
    
    % Last set of dependent vertices
    \draw[fill=OrangeRed!42]  (6, -1.5) circle (0.07cm);
    \draw[fill=OrangeRed!42]  (5.75, -1.5) circle (0.07cm);
    \draw[fill=OrangeRed!42]  (6.25, -1.5) circle (0.07cm);
    \draw[fill=OrangeRed!42]  (5.5, -1.5) circle (0.07cm);
    \draw[fill=OrangeRed!42]  (6.5, -1.5) circle (0.07cm);

    % Dependent for global 
    \draw[fill=OrangeRed!42]  (10, -1.5) circle (0.07cm);
    \draw[fill=OrangeRed!42]  (9.75, -1.5) circle (0.07cm);
    \draw[fill=OrangeRed!42]  (10.25, -1.5) circle (0.07cm);
    \draw[fill=OrangeRed!42]  (9.5, -1.5) circle (0.07cm);
    \draw[fill=OrangeRed!42]  (10.5, -1.5) circle (0.07cm);

    \node at (0,7) {};
    
    % \fill [blue] (0.5,0) rectangle (1.5,0.25); 
    
    % First k vertices are the dominating set
    \draw[thick] (3.5,3.5) -- (3.5,2.5);
    \draw [thick,decorate,decoration={brace,amplitude=4pt}]  (0.5,4) -- (3.5,4);
    \node[scale=0.7] at (2,4.5) {Dominating Set};
    
    }
    \end{tikzpicture}
    \caption{This figure demonstrates a defense for when $\mathcal{S}$ is a dependent vertex cover and a sliding edge is attacked.}
    \label{fig:case2e}
\end{subfigure}
\begin{subfigure}[b]{0.8\textwidth}
    \centering
    \begin{tikzpicture}[square/.style={regular polygon,regular polygon sides=4}]
    \tikz{ 
    
    \draw [DodgerBlue,thick,rounded corners] (1,0) rectangle (7,1);
    \draw [IndianRed,thick,rounded corners] (0,2.5) rectangle (8,3.5);
    
    % \node [circle,draw,thick,OliveDrab,fill=white] (global) at (10,0.25) {$\star$}; 
    \node [circle,draw,thick,OliveDrab,fill=white] (backup) at (10,3.25) {$\dagger$}; 
    
    \node at (10,0.25) [square,draw,fill=YellowGreen] (global) {$\star$};
    % \node at (10,3.25) [square,draw,fill=YellowGreen] (backup) {$\dagger$};

    \tikzset{decoration={snake,amplitude=.4mm,segment length=2mm,
                       post length=0mm,pre length=0mm}}
    
    \draw [thick,decorate] (global) -- (backup);

    % Lines/edges
    \foreach \x in {1,...,7}
        \draw[dotted] (global) -- (\x, 3);
    
    % Edges to dependent vertices (first set)
    \draw[Sienna,dashed] (2,0.5) -- (2,-1.5);
    \draw[Sienna,dashed]  (1.75, -1.5) -- (2,0.5);
    \draw[Sienna,dashed]  (2.25, -1.5) -- (2,0.5);
    \draw[Sienna,dashed]  (1.5, -1.5) -- (2,0.5);
    \draw[Sienna,dashed]  (2.5, -1.5) -- (2,0.5);
    
    % Edges to dependent vertices (second set)
    \draw[Sienna,dashed]  (6, -1.5) -- (6,0.5);
    \draw[Sienna,dashed]  (5.75, -1.5) -- (6,0.5);
    \draw[Sienna,dashed]  (6.25, -1.5) -- (6,0.5);
    \draw[Gray,snake=zigzag,thick]  (5.25, -1.5) -- (6,0.5);
    \draw[Sienna,dashed]  (6.5, -1.5) -- (6,0.5);
    
    % Edges to dependent vertices of the global vertex

    \draw[Sienna,dashed]  (10, -1.5) -- (global);
    \draw[Sienna,dashed]  (9.75, -1.5) -- (global);
    \draw[Sienna,dashed]  (10.25, -1.5) -- (global);
    \draw[Sienna,dashed]  (9.5, -1.5) -- (global);
    \draw[Sienna,dashed]  (10.5, -1.5) -- (global);
    
    % Some original edges

    \draw[thick,LightSeaGreen] (2,0.5) -- (3,3); 
    \draw[thick,LightSeaGreen] (6,0.5) -- (7,3);
    \draw[thick,LightSeaGreen] (4,0.5) -- (4,3);
    \draw[thick,LightSeaGreen] (3,0.5) -- (2,3);
    \draw[thick,LightSeaGreen] (5,0.5) -- (4,3);
    \draw[thick,LightSeaGreen] (5,0.5) -- (6,3);
    \draw[thick,LightSeaGreen] (2,0.5) -- (6,3);
    \draw[thick,LightSeaGreen] (6,0.5) -- (2,3);
    \draw[thick,LightSeaGreen] (3,0.5) -- (6,3);
    \draw[thick,LightSeaGreen] (3,0.5) -- (1,3);
    \draw[thick,LightSeaGreen] (4,0.5) -- (5,3);
    \draw[thick,LightSeaGreen] (3,0.5) -- (3,3);
    \draw[thick,LightSeaGreen] (4,0.5) -- (1,3);
    \draw[thick,LightSeaGreen] (5,0.5) -- (2,3);

    \foreach \x in {2,...,6}
        \node at (\x, 0.5) [square,draw,fill=YellowGreen] () {};
    
    \foreach \x in {4,...,7}
        \draw[fill=Crimson!77] (\x, 3) circle (0.15cm);

    \foreach \x in {1,2,3}
        \node at (\x, 3) [square,draw,fill=YellowGreen] () {};
        
    % First set of dependent vertices
    \draw[fill=OrangeRed!42]  (2, -1.5) circle (0.07cm);
    \draw[fill=OrangeRed!42]  (1.75, -1.5) circle (0.07cm);
    \draw[fill=OrangeRed!42]  (2.25, -1.5) circle (0.07cm);
    
    \draw[fill=OrangeRed!42]  (1.5, -1.5) circle (0.07cm);

    \draw[fill=OrangeRed!42]  (2.5, -1.5) circle (0.07cm);
    
    % Last set of dependent vertices
    \draw[fill=OrangeRed!42]  (6, -1.5) circle (0.07cm);
    \draw[fill=OrangeRed!42]  (5.75, -1.5) circle (0.07cm);
    \draw[fill=OrangeRed!42]  (6.25, -1.5) circle (0.07cm);
    %\draw[fill=OrangeRed!42]  (5.5, -1.5) circle (0.07cm);
    \draw[fill=OrangeRed!42]  (6.5, -1.5) circle (0.07cm);
    \node at (5.25, -1.5) [square,draw,fill=YellowGreen] () {};
    
    % Dependent for global 
    \draw[fill=OrangeRed!42]  (10, -1.5) circle (0.07cm);
    \draw[fill=OrangeRed!42]  (9.75, -1.5) circle (0.07cm);
    \draw[fill=OrangeRed!42]  (10.25, -1.5) circle (0.07cm);
    \draw[fill=OrangeRed!42]  (9.5, -1.5) circle (0.07cm);
    \draw[fill=OrangeRed!42]  (10.5, -1.5) circle (0.07cm);

    \node at (0,7) {};
    
    % \fill [blue] (0.5,0) rectangle (1.5,0.25); 
    
    % First k vertices are the dominating set
    \draw[thick] (3.5,3.5) -- (3.5,2.5);
    \draw [thick,decorate,decoration={brace,amplitude=4pt}]  (0.5,4) -- (3.5,4);
    \node[scale=0.7] at (2,4.5) {Dominating Set};

    }
    \end{tikzpicture}
    \caption{This figure demonstrates a the positions of the guards after the defence in ~\Cref{fig:case2e} is executed.}
    \label{fig:case2f}
    \end{subfigure}
    \caption{This figure demonstrates the case when $\mathcal{S}$ is a dependent vertex cover and a sliding edge is attacked.}
    \label{fig:case10bip}
\end{figure}
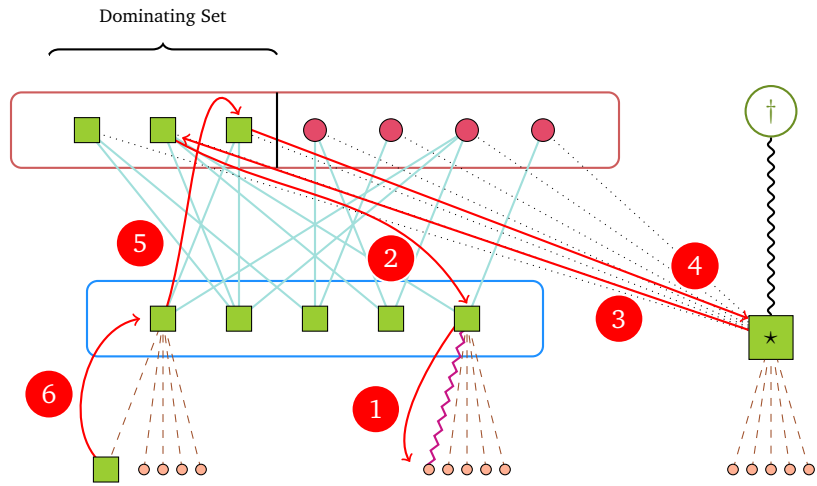
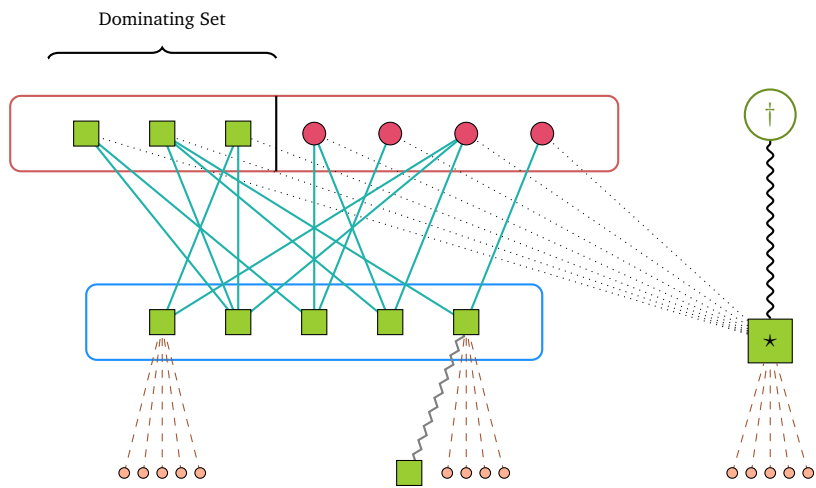

 \paragraph*{Case 11. $\mathcal{S}$ is a dependent vertex cover and $e$ is a supplier edge.}
  
  Let $w\in C_j$ be such that $w$ has a guard. 
  Let $e = (v_i,\star)$. If $i \leq k$, the guards on $v_i$ and $\star$ exchange positions. Suppose $i>k$, there exists $q\in[k]$ such that $(v_q,u_j)\in E(G)$. We have the following sequence of moves:
  \begin{itemize}
      \item The guard on the universal vertex moves to $v_i$ via the supplier edge that was attacked.
      \item The guard on $v_q$ moves to $\star$ along a supplier edge;
      \item The guard on $u_j$ moves to $v_q$ along a structural edge; and 
      \item The guard on $w$ moves to $u_j$ along a sliding edge.
  \end{itemize}
  
  Note that the resulting configuration is a red vertex cover.
   
  Let $w\in D$ be such that $w$ has a guard. Let $e = (v_i,\star)$. If $i \leq k$, the guards on $v_i$ and $\star$ exchange positions. Let $i>k$. We now perform the following sequence of moves:
  \begin{itemize}
      \item The guard on $\star$ moves to $v_i$ along the attacked supplier edge; and
      \item The guard on $w$ moves to $\star$ along a sliding edge.
  \end{itemize} 
  
  Note that the resulting configuration is a red vertex cover.
  
\paragraph*{Case 12. $\mathcal{S}$ is a dependent vertex cover and $e$ is a bridge edge.}
 Let $w\in C_j$ be such that $w$ has a guard. There exists $q\in[k]$ such that $(v_q,u_j)\in E(G)$. We now perform the following sequence of moves:
 \begin{itemize}
     \item The guard on $\star$ moves to $\dagger$ along the attacked bridge edge;
     \item The guard on $v_q$ moves to $\star$ along a supplier edge;
     \item The guard on $u_j$ moves to $v_q$ along a structural edge;
     and
     \item The guard on $w$ moves to $u_j$ along a sliding edge.
 \end{itemize}
 
Note that the resulting configuration is a backup vertex cover. 

 Let $w\in D$ be such that $w$ has a guard. The guard on $\star$ moves to $\dagger$ along an attacked bridge edge and the guard on $w$ moves to $\star$ along a sliding edge. The resulting configuration is a backup vertex cover.
\end{proof}
% \nmtodo{See if we can compress ``it is possible to move every guard at most once to a neighboring vertex so that the attack on $e$ is defended''.}
Having argued that any attack on a nice vertex cover can be defended in such a way that the resulting configuration is also a nice vertex cover, we have completed the argument in the forward direction as well. 
\end{proof}

Observe that the instance that we construct in the proof of~\Cref{lem:bipartite} is both bipartite and has diameter at most six. 

It is also easily checked that all the vertex covers used by the defense in the forward direction induced connected subgraphs, since every vertex cover contains all the blue vertices, a dominating set for the blue vertices, and a universal vertex that is adjacent to all the vertices in the dominating set; and any other vertex is adjacent to one of the blue vertices (or the universal vertex). Therefore, the reduction above also serves to demonstrate the hardness of \ECVC{} on bipartite graphs --- note that the argument for the reverse direction is exactly the same since every connected vertex cover is also a vertex cover.

Overall,~\Cref{lem:bipartite} along with~\Cref{prop:rbds-npc} and the remarks above lead to our main result.

\begin{thm1}
Both the \EVC{} and \ECVC{} problems are \NPH{} and do not admit a polynomial compression parameterized by the number of guards (unless $\mathrm{NP} \subseteq \operatorname{coNP}/\text{poly}$), even on bipartite graphs of diameter six. 
\end{thm1}

% Therefore, we have the following.
% Therefore, we have the following. 

% \begin{corollary}
% \EVC{} is \NPH{} and does not admit a polynomial kernel parameterized by the number of guards unless 
% $\mathrm{NP} \subseteq \operatorname{coNP}/\text{poly}$ even on bipartite graphs of diameter six. 
% \end{corollary}

% \begin{corollary}
% \ECVC{} is \NPH{} and does not admit a polynomial kernel parameterized by the number of guards unless $\mathrm{NP} \subseteq \operatorname{coNP}/\text{poly}$ even on bipartite graphs of diameter six. 
% \end{corollary}

% \begin{corollary}
% Both \EVC{} and \ECVC{} do not admit a polynomial kernel parameterized by the the size of a vertex cover unless $\mathrm{NP} \subseteq \operatorname{coNP}/\text{poly}$ even on bipartite graphs of diameter six. 
% \end{corollary}

% Observe that, on bipartite graphs, we can trivially restrict our attention to instances where the budget for the number of guards is at least the vertex cover number. This implies hardness in the vertex cover number parameter as well. 
% (and hence, also when parameterized by the vertex cover number)
% \nmtodo{ETH-based hardness treating RBDS as set cover?}

\section{A Polynomial-time Algorithm for Co-bipartite Graphs}
\label{sec:cobip}
% \nmtodo{Add an explanation justifying why this is a new result.}

In this section, we focus on a proof of Theorem 2.

\begin{thm2}
There is a polynomial-time algorithm for \EVC{} on the class of cobipartite graphs. 
\end{thm2}

To the best of our knowledge, this result is not subsumed by any of the known polynomial-time algorithms for special classes of graphs. In particular, it is easily checked that the class of cobipartite graphs is not contained in any of the following classes: chordal graphs, cactus graphs, and generalized trees\footnote{The notion of generalized trees in the context of eternal vertex cover was considered by~\cite{AFI2015}. Such graphs are characterized by the following property: every block is an elementary bipartite graph or a clique having at most two cut-vertices in it. Note that a cobipartite graph with four vertices in both parts with two disjoint edges across the parts is not a generalized tree.}.

Let $G = (V = A \uplus B, E)$ be a cobipartite graph with bipartition $A,B$. Recall that $G[A]$ and $G[B]$ are cliques. Consider that $A$ has $p$ vertices $\{a_1,a_2,\ldots,a_p\}$ and $B$ has $q$ vertices $\{b_1,b_2,\ldots,b_q\}$. Without loss of generality we assume that $p\leq q$. Without loss of generality, we assume that there are no global vertices in $A$. (If there is some global vertex in $A$, simply shift that vertex to $B$). We also assume throughout that $p \geq 1$ --- if $p = 0$ then $G$ is a clique and $evc(G) = mvc(G) = |V(G)|-1$. 

Since the cliques require $p-1$ and $q-1$ vertices respectively for a vertex cover, we have $mvc(G)\geq p+q-2$. Since $p = |A|\geq 1$, there exists a (non-global) vertex $a_i$ on the $A$ side and therefore it has at least one non-neighbor (say $b_j$) and thus we have a vertex cover of size $p+q-2$ given by $V(G) \setminus \{a_i,b_j\}$. Therefore, $mvc(G)= p+q-2$. We make a note of this fact in the following claim.

\begin{claim}\label{vc}
For any co-bipartite graph $G$ with bipartitions $A$ and $B$ with all the notations as described above, if there are no global vertices in $A$ and $|A|\geq 1$, $mvc(G)=p+q-2$.
\end{claim}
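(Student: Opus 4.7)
The plan is to prove the equality $mvc(G) = p+q-2$ by matching lower and upper bounds, essentially formalizing the remarks made just before the claim statement.

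For the lower bound, I would argue that since $G[A]$ and $G[B]$ are cliques of sizes $p$ and $q$ respectively, any vertex cover $S$ of $G$ must include at least $p-1$ vertices of $A$ and at least $q-1$ vertices of $B$. Indeed, if $S$ omitted two vertices $u,v \in A$, then the edge $uv$ (which exists since $G[A]$ is a clique) would be uncovered, and similarly for $B$. This already gives $|S| \geq (p-1)+(q-1) = p+q-2$, regardless of any assumption on global vertices.

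For the upper bound I would use the assumption that no vertex of $A$ is global. Pick any $a_i \in A$; since $a_i$ is not global, there is some $b_j \in B$ with $(a_i,b_j)\notin E(G)$ (the only possible non-neighbours of $a_i$ lie in $B$, since $G[A]$ is a clique). I claim that $S := V(G)\setminus\{a_i,b_j\}$ is a vertex cover. The only edges possibly uncovered by $S$ would have both endpoints in $\{a_i,b_j\}$, i.e.\ the single edge $a_ib_j$; but this pair is non-adjacent by choice. Every other edge has at least one endpoint in $S$. Hence $mvc(G)\leq |S| = p+q-2$.

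Combining the two bounds yields $mvc(G)=p+q-2$. I do not anticipate any genuine obstacle here: the statement is purely a counting argument exploiting the clique structure on both sides and the existence of a non-neighbour for at least one vertex of $A$. The only subtlety worth flagging explicitly is that the hypothesis $|A|\geq 1$ together with ``no global vertex in $A$'' is exactly what guarantees the existence of the non-adjacent pair $(a_i,b_j)$ needed for the upper bound; without this, $G$ could be a clique and the bound $p+q-2$ would fail.
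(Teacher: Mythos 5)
Your proposal is correct and matches the paper's own argument, which appears in the text immediately preceding the claim: the lower bound comes from each clique forcing all but one of its vertices into any cover, and the upper bound from taking $V(G)\setminus\{a_i,b_j\}$ for a non-adjacent pair guaranteed by $|A|\geq 1$ and the absence of global vertices in $A$. No differences worth noting.
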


Consider a configuration where all the vertices have a guard except one. This will always be a vertex cover because only one vertex is uncovered, no two endpoints of any edge can be uncovered. Now if any edge not adjacent to this particular vertex is attacked, the guards can exchange places. If any edge adjacent to this vertex is attacked, the guard on the other endpoint of the edge will cross the edge and come to this vertex. So, we are again left with a situation where all the vertices have a guard except one. Therefore, if the number of guards is one less than the number of vertices i.e. $p+q-1$, the defender always has a winning strategy. Therefore, $evc(G)\leq p+q-1$.

Thus, for all co-bipartite graphs we have:

$$mvc(G) = p+q-2 \mbox{ and } p+q-2\leq evc(G)\leq p+q-1.$$ 

We will now derive the characterization of co-bipartite graphs which have $evc(G)=mvc(G)=p+q-2$.

% and we will give the value of $evc(G)$ for any co-bipartite graph $G$.

Consider any non-global vertex on the $A$ side (or $B$ side), then its non-neighbour must exist on the $B$ side (respectively $A$ side) because $G[A]$ (respectively $G[B]$) is a clique. For any vertex $a_i$, if $a_ib_j\notin E(G)$, then $b_j$ is said to be a \textit{friend} of $a_i$. We will also use the phrases: ``$a_i$ is a friend of $b_j$'', or ``$a_i$ and $b_j$ are friends'' to mean that $a_i$ and $b_j$ are non-neighbours (and thus belong to different sides of the bipartition). Note that a vertex is non-global if and only if it has at least one friend.

Denote $A\backslash\{a_i\}\cup B\backslash \{b_j\}$ by $S_{ij}$. Note that $S_{ij}$ will be a vertex cover if and only if $a_i$ and $b_j$ are friends. We now argue a series of claims that account for all possible scenarios for co-bipartite graphs. A summary of the cases can be found in~\Cref{fig:overview,overviewcontd}. 

\begin{figure}
    \centering
    \includegraphics[width=\textwidth]{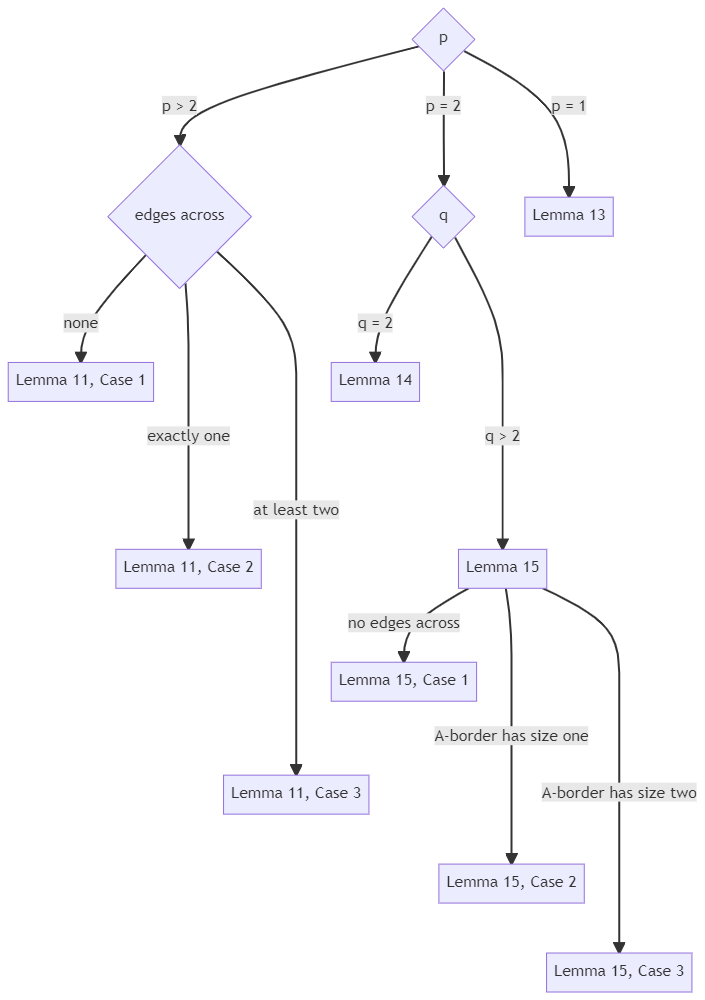}
    \caption{An overview of the proof of Theorem 2.}
    \label{fig:overview}
\end{figure}

\begin{figure}
\centering
\begin{subfigure}[b]{\textwidth}
\centering
   \includegraphics[width=\textwidth]{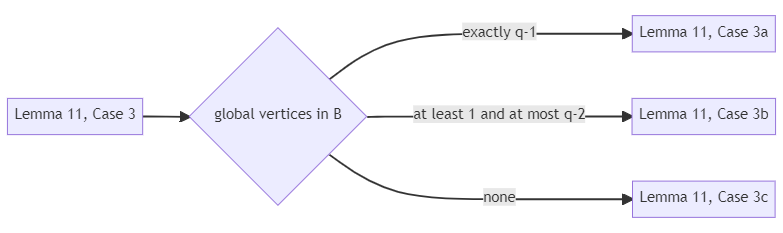}
   \caption{}
   \label{fig:overview1} 
\end{subfigure}
\hfill
\begin{subfigure}[b]{\textwidth}
\centering
   \includegraphics[width=\textwidth]{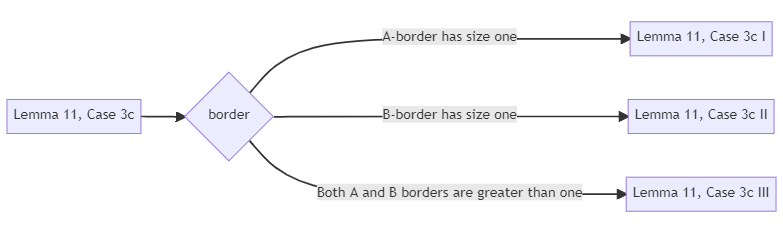}
   \caption{}
   \label{fig:overview2}
\end{subfigure}
\hfill
\begin{subfigure}[b]{\textwidth}
\centering
   \includegraphics[width=\textwidth]{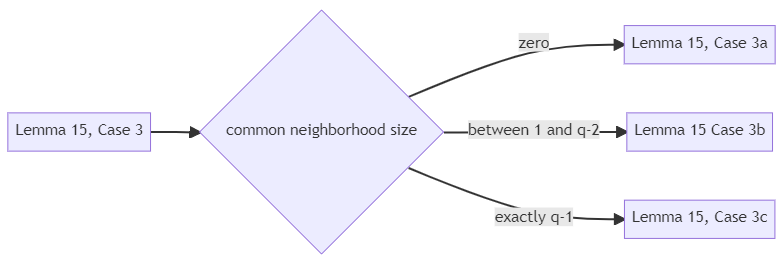}
   \caption{}
   \label{fig:overview3}
\end{subfigure}
\caption{An overview of the cases in the proof of Theorem 2 (continued from~\Cref{fig:overview}).}
\label{overviewcontd}
\end{figure}

% \begin{proof}
% It is clear that $mvc(G)\geq p+q-2$ because both $G[A]$ and $G[B]$ are cliques and require $p-1$ and $q-1$ guards respectively. Since $|A|\geq 1$, there exists a (non-global) vertex $a_1$ on the $A$ side and therefore it has at least one friend (say $b_j$) and thus we have a vertex cover $S_{1j}$ of size $p+q-2$.
% \end{proof}

\begin{lemma}\label{big}
For any co-bipartite graph $G$ with all the notations as described above (with no global vertices on the $A$ side) and $|A|\geq 3$, we have $evc(G)\neq mvc(G)$ if and only if there is exactly one non-global vertex on the $B$ side.
\end{lemma}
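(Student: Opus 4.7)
The plan is to prove both directions of the biconditional by case analysis on guard movements. First, recall from~\Cref{vc} that $mvc(G) = p+q-2$, and observe that every minimum vertex cover must leave exactly two vertices uncovered forming an independent set of size two; in a cobipartite graph any such independent set must consist of a non-adjacent pair $(a_i, b_j)$ with $a_i \in A$ and $b_j \in B$, i.e., a friend pair. Hence every minimum vertex cover is of the form $S_{ij}$ for some friend pair $(a_i, b_j)$, and any defender using exactly $p+q-2$ guards must sit at some such $S_{ij}$ at all times.

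For the ``if'' direction (exactly one non-global $B$-vertex $b_1$ implies $evc(G) \neq mvc(G)$), the key point is that the $B$-side uncovered vertex of any $S_{ij}$ must be non-global to admit an $A$-friend, so every minimum VC is of the form $S_{i,1}$. The attacker then attacks the edge $b_1 b_t$ for any $t \neq 1$, which exists because $q \geq p \geq 3$. The only way to defend is via the move $b_t \to b_1$, which leaves $b_1$ covered in the new configuration. The $B$-side uncovered vertex of the new configuration cannot be $b_1$, and any other $B$-vertex is global with no $A$-friend, so no minimum vertex cover is reachable by defending this attack. Hence $evc(G) \geq p+q-1 > mvc(G)$.

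For the ``only if'' direction (at least two non-global $B$-vertices implies $evc(G) = mvc(G)$), I would describe an eternal strategy using $p+q-2$ guards that preserves the invariant ``the current configuration is some $S_{ij}$ with $(a_i, b_j)$ friends.'' The verification proceeds by case analysis on the attacked edge. Clique edges avoiding the uncovered vertex are defended by swapping guards. An attack on $a_i a_s$ is defended by $a_s \to a_i$, optionally composed with a rotation $b_l \to b_j$ within the $B$-clique where $b_l$ is a friend of $a_s$ different from $b_j$ (which exists since $a_s$ is non-global and, when the direct move fails, is not a friend of $b_j$). An attack on $b_j b_t$ is handled by $b_t \to b_j$, possibly combined with a rotation within the $B$-clique (when $a_i$ has a second friend), a swap $a_k \to a_i$ (when $b_t$ is non-global), or a three-guard defense $b_t \to b_j,\ a_k \to b_t,\ b_l \to a_i$ for suitable $a_k, b_l$. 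Cross-edge attacks $a_i b_t$ and $a_s b_j$ are handled analogously by routing short paths in the guard-movement permutation through the attacked edge, terminating in either a direct swap, a cross-move $a_k \to b_j$ with $b_t \to a_i$, or a three-guard chain through the attacked edge.

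The main obstacle is to certify that the helper indices $(a_k, b_l)$ required by the three-guard defenses can always be chosen. I would handle this by contradiction: if no valid $(a_k, b_l)$ exists in a problematic sub-case (for instance, when $a_i$'s only friend is $b_j$ and $b_t$ is global), then the forced constraints imply that the friends of every $a_k \neq a_i$ are confined to $\{b_j, b_t\}$, and since $b_t$ is global every such $a_k$ has $b_j$ as its sole friend; combined with the analogous property for $a_i$, this makes every $B$-vertex other than $b_j$ have no friends in $A$, hence global, so there is only one non-global $B$-vertex, contradicting the hypothesis. Thus a defense always exists, the strategy maintains the invariant against every attack, and $evc(G) = mvc(G) = p+q-2$.
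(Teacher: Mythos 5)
Your framework (every minimum vertex cover of size $p+q-2$ is $S_{ij}$ for a friend pair) and your argument for the ``if'' direction are correct; indeed your version of that direction --- after the forced move $b_t\to b_1$ the vertex $b_1$ is occupied, so no friend-pair configuration is reachable --- is a cleaner packaging of the paper's Case~3a. (A minor point: for the converse you should also note that ``not exactly one non-global vertex in $B$'' means ``at least two'', since if every vertex of $B$ were global then every vertex of $A$ would be global as well, contradicting the standing assumption.)

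The ``only if'' direction, however, has a genuine gap: the invariant you propose --- that the defender may occupy \emph{any} $S_{ij}$ with $(a_i,b_j)$ a friend pair --- is not maintainable, and your own transition rules can drive the defender into a configuration from which no defence exists even though $B$ has at least two non-global vertices. Take $A=\{a_1,a_2,a_3\}$, $B=\{b_1,b_2,b_3\}$, with cross edges $a_1b_1$ and $a_1b_2$ only; all three vertices of $B$ are non-global, so the lemma promises $evc(G)=mvc(G)=4$. The set $S_{1,3}$ is a legitimate friend-pair configuration, and your rule for an attack on $a_1a_2$ starting from $S_{2,1}$ moves into it ($a_1\to a_2$ followed by the rotation $b_3\to b_1$). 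From $S_{1,3}$, attack $a_1b_1$: the guard on $b_1$ is forced onto $a_1$, leaving $b_1$ and $b_3$ uncovered, and to reach another friend-pair configuration exactly one further guard would have to cross from $A$ into $B$ --- but $a_2$ and $a_3$ have no neighbours in $B$, and the guard newly arrived on $a_1$ cannot move a second time. The attack is therefore undefendable, so your ``no valid $(a_k,b_l)$ implies only one non-global $B$-vertex'' contradiction cannot close in this sub-case. This is exactly why the paper does not use a single uniform invariant: it first splits on the cross-edge structure (Cases 1, 2, 3a--3c) and, in situations such as Case~3c-I, restricts the defender to the configurations $S_{ij}$ with $i\neq 1$, i.e., keeps a guard permanently on the unique vertex of $A$ that has neighbours in $B$ (similarly, Case~2 pins guards on both endpoints of the unique cross edge). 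Your proof needs an analogous structure-dependent restriction of the set of safe configurations before the case analysis on attacks can succeed.
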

\begin{proof}
Using \Cref{vc}, we have $mvc(G)=p+q-2$. We will show that except when $B$ side has exactly one non-global vertex, $p+q-2$ guards are sufficient to defend any sequence of attacks and when $B$ side has exactly one non-global vertex, we show that the attacker has a winning strategy no matter where we place the $p+q-2$ guards.

Note that since $G[A]$ and $G[B]$ are cliques, any vertex cover of size $p+q-2$ will have $p-1$ vertices from $A$ and $q-1$ vertices from $B$. In each of the following cases, we start with an arbitrary vertex cover $S_{ij}$ where $a_i$ and $b_j$ are friends. For each type of attack, we show that it is possible to reconfigure the guards in such a way that at least one guard moves along the attacked edge and the resulting configuration is again a vertex cover.

\paragraph*{Case $1$: There is no edge between $A$ and $B$.}
Let $S_{ij}$ be the initial vertex cover. If any edge which is not incident to either $a_i$ or $b_j$ is attacked, the guards on its endpoints exchange positions along the attacked edge to take care of the attack and we again have the same vertex cover. 

The following types of edges can be attacked which are incident to $a_i$ or $b_j$. 
\begin{enumerate}
    \item \textbf{Some edge $a_ia_r$ on the $A$ side is attacked:}
    
        The guard on $a_r$ moves to $a_i$ along the attacked edge. The resulting vertex cover is $S_{rj}$. We know that $a_r$ is a friend of $b_j$ because there is no edge between $A$ and $B$.  
    \item \textbf{Some edge $b_sb_j$ on the $B$ side is attacked:}
    
        The guard on $b_s$ moves to $b_j$ along the attacked edge. The resulting vertex cover is $S_{is}$. We know that $a_i$ is a friend of $b_s$ because there is no edge between $A$ and $B$.
\end{enumerate}
Thus $p+q-2$ guards are sufficient to defend any attack.

\paragraph*{Case $2$: There is just one edge between $A$ and $B$.}
Let the edge between $A$ and $B$ be $a_1b_1$. We ensure that there will always be a guard on both $a_1$ and $b_1$. Let $S_{ij}$ be the initial vertex cover where $i,j\neq 1$. If any edge which is not incident to either $a_i$ or $b_j$ is attacked, the guards on its endpoints exchange positions along the attacked edge to take care of the attack and we again have the same vertex cover.

The following types of edges can be attacked which are incident to $a_i$ or $b_j$. 

\begin{enumerate}
    \item \textbf{Some edge $a_ia_r$ with $r\neq 1$ on the $A$ side is attacked:}
    
    The guard on $a_r$ moves to $a_i$ along the attacked edge. The resulting vertex cover is $S_{rj}$. We know that $a_r$ is a friend of $b_j$ because $r,j\neq 1$. 
    \item \textbf{Some edge $a_1a_i$ is attacked:}
    
    The guard on $a_1$ moves to $a_i$ and a guard on $a_r$ moves to $a_1$ for some $r\neq 1,i$. The resulting vertex cover is $S_{rj}$ where $a_r$ is a friend of $b_j$ because $r,j\neq 1$~(c.f.~\Cref{fig:case2lemma11}).
    \item \textbf{Some edge $b_sb_j$ on  the $B$ side is attacked:}
    
    The guard on $b_s$ moves to $b_j$ along the attacked edge. The resulting vertex cover is $S_{is}$. We know that $a_i$ is a friend of $b_s$ because $s,i\neq 1$. 
    \item \textbf{Some edge $b_1b_j$ is attacked:}
    
    The guard on $b_1$ moves to $b_j$ along the attacked edge and a guard on $b_s$ moves to $b_1$ for some $s\neq 1,j$. The resulting vertex cover is $S_{is}$ where $a_i$ is a friend of $b_s$ because $i,s\neq 1$.
\end{enumerate}
Thus $p+q-2$ guards are sufficient to defend any attack.

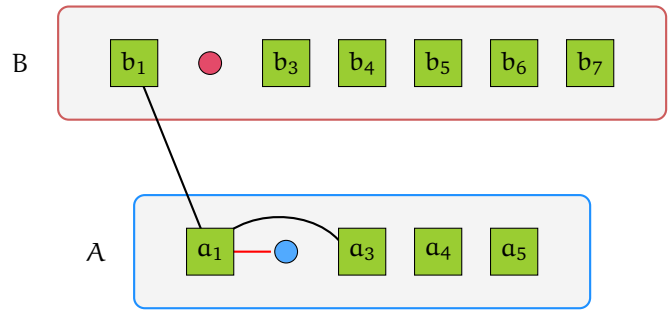
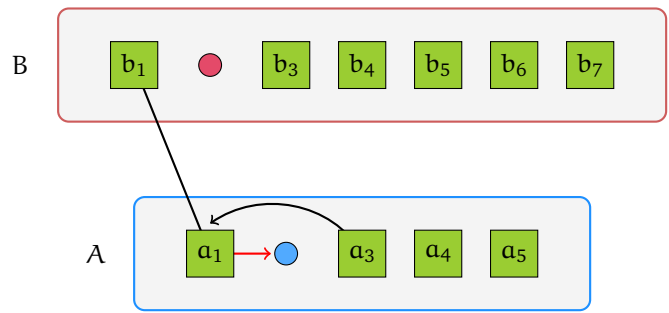
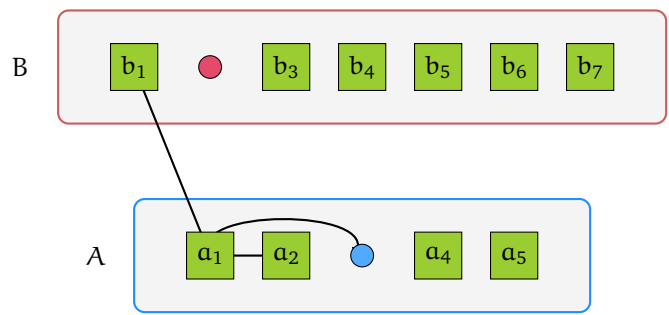
\begin{figure}
    \centering
%%%%%%%%%%%%%%%%%%%%
% Showing the attack
%%%%%%%%%%%%%%%%%%%%

    \begin{subfigure}[b]{\textwidth}
    \centering
    \begin{tikzpicture}[square/.style={regular polygon,regular polygon sides=4}]
    \tikz{ 
    
    \draw [DodgerBlue,thick,rounded corners,fill=LightGray!25] (1,-0.25) rectangle (7,1.25);
    \draw [IndianRed,thick,rounded corners,fill=LightGray!25] (0,2.25) rectangle (8,3.75);

    % Some original edges
    
    \draw[thick] (2,0.5) -- (1,3); 
    % \draw[thick,LightSeaGreen!42] (6,0.5) -- (7,3);
    % \draw[thick,LightSeaGreen!42] (4,0.5) -- (4,3);
    % \draw[thick,LightSeaGreen!42] (3,0.5) -- (2,3);
    % \draw[thick,LightSeaGreen!42] (5,0.5) -- (4,3);
    % \draw[thick,LightSeaGreen!42] (5,0.5) -- (6,3);
    % \draw[thick,LightSeaGreen!42] (2,0.5) -- (6,3);
    % \draw[thick,LightSeaGreen!42] (6,0.5) -- (2,3);
    % \draw[thick,LightSeaGreen!42] (3,0.5) -- (6,3);
    % \draw[thick,LightSeaGreen!42] (3,0.5) -- (1,3);
    % \draw[thick,LightSeaGreen!42] (4,0.5) -- (5,3);
    % \draw[thick,LightSeaGreen!42] (3,0.5) -- (3,3);
    % \draw[thick,LightSeaGreen!42] (4,0.5) -- (1,3);
    % \draw[thick,LightSeaGreen!42] (5,0.5) -- (2,3);
    
    % Attacked edge
    \draw[thick,Red] (2,0.5) -- (2.8,0.5);
    \draw[->,thick] (2,0.5) to [out=60,in=120](3.8,0.5);

    \foreach \x in {2,4,5,6}
        \node at (\x, 0.5) [square,draw,fill=YellowGreen,minimum size=25pt] () {};
        
    \node at (2,0.5) {$a_1$};
    \node at (4,0.5) {$a_3$};
    \node at (5,0.5) {$a_4$};
    \node at (6,0.5) {$a_5$};

    \foreach \x in {1,3,4,5,6,7}
        \node at (\x, 3) [square,draw,fill=YellowGreen,minimum size=25pt] () {};

    \node at (1,3) {$b_1$};
    \node at (3,3) {$b_3$};
    \node at (4,3) {$b_4$};
    \node at (5,3) {$b_5$};
    \node at (6,3) {$b_6$};
    \node at (7,3) {$b_7$};
    
    \draw[fill=DodgerBlue!77] (3, 0.5) circle (0.15cm);
    \draw[fill=Crimson!77] (2, 3) circle (0.15cm);
    
    \node at (0.5,0.5) {$A$};
    \node at (-0.5,3) {$B$};
    
    \node at (0,4) {};

    }
    \end{tikzpicture}
    \caption{This figure demonstrates the case when there is only one edge between $A$ and $B$ and an edge on the $A$ side (say $a_1a_2$) is attacked.}
    \label{fig:lem11case2-1}
\end{subfigure}
%\end{figure}
%\hfill

%%%%%%%%%%%%%%%%%%%%
% Managing the attack
%%%%%%%%%%%%%%%%%%%%
\begin{subfigure}[b]{\textwidth}
    \centering
    %\begin{figure}
    \begin{tikzpicture}[square/.style={regular polygon,regular polygon sides=4}]
    \tikz{ 
    
    \draw [DodgerBlue,thick,rounded corners,fill=LightGray!25] (1,-0.25) rectangle (7,1.25);
    \draw [IndianRed,thick,rounded corners,fill=LightGray!25] (0,2.25) rectangle (8,3.75);

    % Some original edges
    
    \draw[thick] (2,0.5) -- (1,3); 
    % \draw[thick,LightSeaGreen!42] (6,0.5) -- (7,3);
    % \draw[thick,LightSeaGreen!42] (4,0.5) -- (4,3);
    % \draw[thick,LightSeaGreen!42] (3,0.5) -- (2,3);
    % \draw[thick,LightSeaGreen!42] (5,0.5) -- (4,3);
    % \draw[thick,LightSeaGreen!42] (5,0.5) -- (6,3);
    % \draw[thick,LightSeaGreen!42] (2,0.5) -- (6,3);
    % \draw[thick,LightSeaGreen!42] (6,0.5) -- (2,3);
    % \draw[thick,LightSeaGreen!42] (3,0.5) -- (6,3);
    % \draw[thick,LightSeaGreen!42] (3,0.5) -- (1,3);
    % \draw[thick,LightSeaGreen!42] (4,0.5) -- (5,3);
    % \draw[thick,LightSeaGreen!42] (3,0.5) -- (3,3);
    % \draw[thick,LightSeaGreen!42] (4,0.5) -- (1,3);
    % \draw[thick,LightSeaGreen!42] (5,0.5) -- (2,3);
    
    % Attacked edge
    \draw[->,thick,Red] (2,0.5) -- (2.8,0.5);
    \draw[->,thick] (4,0.5) to [out=120,in=35](2,0.9);

    \foreach \x in {2,4,5,6}
        \node at (\x, 0.5) [square,draw,fill=YellowGreen,minimum size=25pt] () {};
        
    \node at (2,0.5) {$a_1$};
    \node at (4,0.5) {$a_3$};
    \node at (5,0.5) {$a_4$};
    \node at (6,0.5) {$a_5$};

    \foreach \x in {1,3,4,5,6,7}
        \node at (\x, 3) [square,draw,fill=YellowGreen,minimum size=25pt] () {};

    \node at (1,3) {$b_1$};
    \node at (3,3) {$b_3$};
    \node at (4,3) {$b_4$};
    \node at (5,3) {$b_5$};
    \node at (6,3) {$b_6$};
    \node at (7,3) {$b_7$};
    
    \draw[fill=DodgerBlue!77] (3, 0.5) circle (0.15cm);
    \draw[fill=Crimson!77] (2, 3) circle (0.15cm);
    
    \node at (0.5,0.5) {$A$};
    \node at (-0.5,3) {$B$};
    
    \node at (0,4) {};

    }
    \end{tikzpicture}
    \caption{This figure demonstrates the defense for the attack in  \Cref{fig:lem11case2-1}. The guard on $a_3$ moves to $a_1$ and the guard on $a_1$ moves to $a_2$.}
    \label{fig:lem11case2-2}
\end{subfigure}
%\end{figure}
%%%%%%%%%%%%%%%%%%%%
% After the attack
%%%%%%%%%%%%%%%%%%%%

%\hfill
%\begin{figure}
\begin{subfigure}[b]{\textwidth}
    \centering
    \begin{tikzpicture}[square/.style={regular polygon,regular polygon sides=4}]
    \tikz{ 
    
    \draw [DodgerBlue,thick,rounded corners,fill=LightGray!25] (1,-0.25) rectangle (7,1.25);
    \draw [IndianRed,thick,rounded corners,fill=LightGray!25] (0,2.25) rectangle (8,3.75);

    % Some original edges
    
    \draw[thick] (2,0.5) -- (1,3); 
    %\draw[thick] (6,0.5) -- (7,3);
    %\draw[thick] (4,0.5) -- (4,3);
    %\draw[thick] (3,0.5) -- (2,3);
    %\draw[thick] (5,0.5) -- (4,3);
    %\draw[thick] (5,0.5) -- (6,3);
    %\draw[thick] (2,0.5) -- (6,3);
    %\draw[thick] (6,0.5) -- (2,3);
    %\draw[thick] (3,0.5) -- (6,3);
    %\draw[thick] (3,0.5) -- (1,3);
    %\draw[thick] (4,0.5) -- (5,3);
    %\draw[thick] (3,0.5) -- (3,3);
    %\draw[thick] (4,0.5) -- (1,3);
    %\draw[thick] (5,0.5) -- (2,3);
    
    % Attacked edge
    \draw[thick] (2,0.5) -- (2.8,0.5);
    \draw[->,thick] (2,0.5) to [out=120,in=60](3.9,0.5);

    \foreach \x in {2,3,5,6}
        \node at (\x, 0.5) [square,draw,fill=YellowGreen,minimum size=25pt] () {};
        
    \node at (2,0.5) {$a_1$};
    \node at (3,0.5) {$a_2$};
    \node at (5,0.5) {$a_4$};
    \node at (6,0.5) {$a_5$};

    \foreach \x in {1,3,4,5,6,7}
        \node at (\x, 3) [square,draw,fill=YellowGreen,minimum size=25pt] () {};

    \node at (1,3) {$b_1$};
    \node at (3,3) {$b_3$};
    \node at (4,3) {$b_4$};
    \node at (5,3) {$b_5$};
    \node at (6,3) {$b_6$};
    \node at (7,3) {$b_7$};
    
    \draw[fill=DodgerBlue!77] (4, 0.5) circle (0.15cm);
    \draw[fill=Crimson!77] (2, 3) circle (0.15cm);
    
    \node at (0.5,0.5) {$A$};
    \node at (-0.5,3) {$B$};
    
    \node at (0,4) {};

    }
    \end{tikzpicture}
    \caption{This figure demonstrates the position of guards after executing the defense in \Cref{fig:lem11case2-2}.}
    \label{fig:lem11case2-3}
\end{subfigure}
\caption{This figure demonstrates Case $2$ from~\Cref{big}.}
\label{fig:case2lemma11}
\end{figure}

\paragraph*{Case $3$: There are at least two edges between $A$ and $B$.}
\paragraph*{Case $a$: There is exactly one vertex on the $B$ side which is not global.}
Let $b_1$ be the vertex on the $B$ side which is not global. All the other vertices on the $B$ side are global. Since by assumption there is no global vertex on $A$ side, $b_1$ is a friend of all the vertices on $A$ side and any vertex on $A$ side does not have any other friend. That is why, any arbitrary vertex cover of size $p+q-2$ is of the form $S_{i1}$ for some $i\in [1,p]$. Let all the vertices of $S_{i1}$ be occupied by guards. We show that the attacker has a winning strategy. 

Attack $b_2b_1$. The guard on $b_2$ is forced to move to $b_1$ along the attacked edge and cannot move any further. So there must be a guard on $b_1$. If some guard from the $A$ side moves to the $B$ side, there will be $p-2$ guards on the $A$ side and thus some edge on the $A$ side will be vulnerable. If some guard on the $B$ side moves to the $A$ side, there will be $q-2$ guards on the $B$ side and thus some edge on the $B$ side will be vulnerable. If there are $p-1$ guards on the $A$ side and $q-1$ guards on the $B$ side after the attack, then some vertex $a_r$ ($r$ may be equal to $i$) must be empty and some vertex $b_s$ where $s\neq1$ must be empty. So now the edge $a_rb_s$ is vulnerable. (This edge exists because $b_s$ is a global vertex). Thus the defender is not able to defend this attack. Thus $p+q-2$ guards are not sufficient and $evc(G)\neq mvc(G)$ (c.f.~\Cref{fig:case3alem11}).

\begin{figure}
    \centering
    \begin{subfigure}[b]{\textwidth}
    \centering
    \begin{tikzpicture}[square/.style={regular polygon,regular polygon sides=4}]
    \tikz{ 
    
    \draw [DodgerBlue,thick,rounded corners,fill=LightGray!25] (1,-0.25) rectangle (7,1.25);
    \draw [IndianRed,thick,rounded corners,fill=LightGray!25] (0,2.25) rectangle (8,3.75);

    % Some original edges
    
    \draw[thick] (2,0.5) -- (2,3);
    \draw[thick] (2,0.5) -- (3,3);
    \draw[thick] (2,0.5) -- (4,3);
    \draw[thick] (2,0.5) -- (5,3);
    \draw[thick] (2,0.5) -- (6,3);
    \draw[thick] (2,0.5) -- (7,3);
    \draw[thick] (3,0.5) -- (2,3);
    \draw[thick] (3,0.5) -- (3,3);
    \draw[thick] (3,0.5) -- (4,3);
    \draw[thick] (3,0.5) -- (5,3);
    \draw[thick] (3,0.5) -- (6,3);
    \draw[thick] (3,0.5) -- (7,3);
    \draw[thick] (4,0.5) -- (2,3);
    \draw[thick] (4,0.5) -- (3,3);
    \draw[thick] (4,0.5) -- (4,3);
    \draw[thick] (4,0.5) -- (5,3);
    \draw[thick] (4,0.5) -- (6,3);
    \draw[thick] (4,0.5) -- (7,3);
    \draw[thick] (5,0.5) -- (2,3);
    \draw[thick] (5,0.5) -- (3,3);
    \draw[thick] (5,0.5) -- (4,3);
    \draw[thick] (5,0.5) -- (5,3);
    \draw[thick] (5,0.5) -- (6,3);
    \draw[thick] (5,0.5) -- (7,3);
    \draw[thick] (6,0.5) -- (2,3);
    \draw[thick] (6,0.5) -- (3,3);
    \draw[thick] (6,0.5) -- (4,3);
    \draw[thick] (6,0.5) -- (5,3);
    \draw[thick] (6,0.5) -- (6,3);
    \draw[thick] (6,0.5) -- (7,3);
    
    % Attacked edge
    \draw[thick,Red] (2,3) -- (1,3);
    \draw[->,thick] (2,0.5) to [out=-50,in=-130](5,0.5);

    \foreach \x in {2,3,4,6}
        \node at (\x, 0.5) [square,draw,fill=YellowGreen,minimum size=20pt] () {};
        
    \node at (2,0.5) {$a_1$};
    \node at (3,0.5) {$a_2$};
    \node at (4,0.5) {$a_3$};
    \node at (6,0.5) {$a_5$};

    \foreach \x in {2,3,4,5,6,7}
        \node at (\x, 3) [square,draw,fill=YellowGreen,minimum size=20pt] () {};

    \node at (2,3) {$b_2$};
    \node at (3,3) {$b_3$};
    \node at (4,3) {$b_4$};
    \node at (5,3) {$b_5$};
    \node at (6,3) {$b_6$};
    \node at (7,3) {$b_7$};
    
    \draw[fill=DodgerBlue!77] (5, 0.5) circle (0.15cm);
    \draw[fill=Crimson!77] (1, 3) circle (0.15cm);
    
    \node at (0.5,0.5) {$A$};
    \node at (-0.5,3) {$B$};
    
    \node at (0,4.5) {};

    }
    \end{tikzpicture}
    \caption{This figure demonstrates the case when there are at least two edges between $A$ and $B$, exactly $1$ non-global vertex and an edge on the $B$ side (say $b_1b_2$) is attacked.}
    \label{fig:lem11case3a-1}
\end{subfigure}
%\end{figure}
%\hfill

%%%%%%%%%%%%%%%%%%%%
% Managing the attack
%%%%%%%%%%%%%%%%%%%%
\begin{subfigure}[b]{\textwidth}
    \centering
    %\begin{figure}
    \begin{tikzpicture}[square/.style={regular polygon,regular polygon sides=4}]
    \tikz{ 
    
    \draw [DodgerBlue,thick,rounded corners,fill=LightGray!25] (1,-0.25) rectangle (7,1.25);
    \draw [IndianRed,thick,rounded corners,fill=LightGray!25] (0,2.25) rectangle (8,3.75);

    % Some original edges
    
    \draw[->,thick] (2,0.5) -- (2,2.7);
    \draw[thick] (2,0.5) -- (3,3);
    \draw[thick] (2,0.5) -- (4,3);
    \draw[thick] (2,0.5) -- (5,3);
    \draw[thick] (2,0.5) -- (6,3);
    \draw[thick] (2,0.5) -- (7,3);
    \draw[thick] (3,0.5) -- (2,3);
    \draw[thick] (3,0.5) -- (3,3);
    \draw[thick] (3,0.5) -- (4,3);
    \draw[thick] (3,0.5) -- (5,3);
    \draw[thick] (3,0.5) -- (6,3);
    \draw[thick] (3,0.5) -- (7,3);
    \draw[thick] (4,0.5) -- (2,3);
    \draw[thick] (4,0.5) -- (3,3);
    \draw[thick] (4,0.5) -- (4,3);
    \draw[thick] (4,0.5) -- (5,3);
    \draw[thick] (4,0.5) -- (6,3);
    \draw[thick] (4,0.5) -- (7,3);
    \draw[thick] (5,0.5) -- (2,3);
    \draw[thick] (5,0.5) -- (3,3);
    \draw[thick] (5,0.5) -- (4,3);
    \draw[thick] (5,0.5) -- (5,3);
    \draw[thick] (5,0.5) -- (6,3);
    \draw[thick] (5,0.5) -- (7,3);
    \draw[thick] (6,0.5) -- (2,3);
    \draw[thick] (6,0.5) -- (3,3);
    \draw[thick] (6,0.5) -- (4,3);
    \draw[thick] (6,0.5) -- (5,3);
    \draw[thick] (6,0.5) -- (6,3);
    \draw[thick] (6,0.5) -- (7,3);
    
    % Attacked edge
    \draw[->,thick,Red] (1.8,3) -- (1.2,3);
    \draw[->,thick] (5,0.5) to [out=-40,in=-130](2,0.5);

    \foreach \x in {2,3,4,6}
        \node at (\x, 0.5) [square,draw,fill=YellowGreen,minimum size=20pt] () {};
        
    \node at (2,0.5) {$a_1$};
    \node at (3,0.5) {$a_2$};
    \node at (4,0.5) {$a_3$};
    \node at (6,0.5) {$a_5$};

    \foreach \x in {2,3,4,5,6,7}
        \node at (\x, 3) [square,draw,fill=YellowGreen,minimum size=20pt] () {};

    \node at (2,3) {$b_2$};
    \node at (3,3) {$b_3$};
    \node at (4,3) {$b_4$};
    \node at (5,3) {$b_5$};
    \node at (6,3) {$b_6$};
    \node at (7,3) {$b_7$};
    
    \draw[fill=DodgerBlue!77] (5, 0.5) circle (0.15cm);
    \draw[fill=Crimson!77] (1, 3) circle (0.15cm);
    
    \node at (0.5,0.5) {$A$};
    \node at (-0.5,3) {$B$};
    
    \node at (0,4.5) {};

    }
    \end{tikzpicture}
    \caption{This figure demonstrates the defense for the attack in \Cref{fig:lem11case3a-1}. The guard on $b_2$ moves to $b_1$, and the guard on $a_1$ moves to $b_2$.}
    \label{fig:lem11case3a-2}
\end{subfigure}
%\end{figure}
%%%%%%%%%%%%%%%%%%%%
% After the attack
%%%%%%%%%%%%%%%%%%%%

%\hfill
%\begin{figure}
\begin{subfigure}[b]{\textwidth}
    \centering
    \begin{tikzpicture}[square/.style={regular polygon,regular polygon sides=4}]
    \tikz{ 
    
    \draw [DodgerBlue,thick,rounded corners,fill=LightGray!25] (1,-0.25) rectangle (7,1.25);
    \draw [IndianRed,thick,rounded corners,fill=LightGray!25] (0,2.25) rectangle (8,3.75);

    % Some original edges
    
   \draw[thick] (2,0.5) -- (2,3);
    \draw[thick] (2,0.5) -- (3,3);
    \draw[thick] (2,0.5) -- (4,3);
    \draw[thick] (2,0.5) -- (5,3);
    \draw[thick] (2,0.5) -- (6,3);
    \draw[thick] (2,0.5) -- (7,3);
    \draw[thick] (3,0.5) -- (2,3);
    \draw[thick] (3,0.5) -- (3,3);
    \draw[thick] (3,0.5) -- (4,3);
    \draw[thick] (3,0.5) -- (5,3);
    \draw[thick] (3,0.5) -- (6,3);
    \draw[thick] (3,0.5) -- (7,3);
    \draw[thick] (4,0.5) -- (2,3);
    \draw[thick] (4,0.5) -- (3,3);
    \draw[thick] (4,0.5) -- (4,3);
    \draw[thick] (4,0.5) -- (5,3);
    \draw[thick] (4,0.5) -- (6,3);
    \draw[thick] (4,0.5) -- (7,3);
    \draw[thick] (5,0.5) -- (2,3);
    \draw[thick] (5,0.5) -- (3,3);
    \draw[thick] (5,0.5) -- (4,3);
    \draw[thick] (5,0.5) -- (5,3);
    \draw[thick] (5,0.5) -- (6,3);
    \draw[thick] (5,0.5) -- (7,3);
    \draw[thick] (6,0.5) -- (2,3);
    \draw[thick] (6,0.5) -- (3,3);
    \draw[thick] (6,0.5) -- (4,3);
    \draw[thick] (6,0.5) -- (5,3);
    \draw[thick] (6,0.5) -- (6,3);
    \draw[thick] (6,0.5) -- (7,3);
    
    % Attacked edge
    \draw[thick] (1.8,3) -- (1.2,3);
   \draw[->,thick] (2,0.5) to [out=-40,in=-130](5,0.5);

    \foreach \x in {3,4,6}
        \node at (\x, 0.5) [square,draw,fill=YellowGreen,minimum size=25pt] () {};

    \node at (3,0.5) {$a_2$};
    \node at (4,0.5) {$a_3$};
    \node at (6,0.5) {$a_5$};

    \foreach \x in {1,2,3,4,5,6,7}
        \node at (\x, 3) [square,draw,fill=YellowGreen,minimum size=25pt] () {};

    \node at (1,3) {$b_1$};
    \node at (2,3) {$b_2$};
    \node at (3,3) {$b_3$};
    \node at (4,3) {$b_4$};
    \node at (5,3) {$b_5$};
    \node at (6,3) {$b_6$};
    \node at (7,3) {$b_7$};
    
    \draw[fill=DodgerBlue!77] (5, 0.5) circle (0.15cm);
    \draw[fill=DodgerBlue!77] (2, 0.5) circle (0.15cm);
    
    \node at (0.5,0.5) {$A$};
    \node at (-0.5,3) {$B$};
    
    \node at (0,4.5) {};
    }
    \end{tikzpicture}
    \caption{This figure demonstrates the position of guards after executing the defense in \Cref{fig:lem11case3a-2}. The edge $a_1a_4$ is now vulnerable.}
    \label{fig:lem11case3a-3}
\end{subfigure}
\caption{This figure demonstrates Case $3a$ from \Cref{big}.}
\label{fig:case3alem11}
\end{figure}

\paragraph*{Case $b$: There is at least one and at most $q-2$ global vertices on the $B$ side.}
Let $B=B_1\cup B_2$ where $B_1=\{b_1,b_2\ldots,b_k\}$ is the set of non-global vertices and $B_2=\{b_{k+1},b_{k+2},\ldots,b_{q}\}$ is the set of global vertices. Note that $k\geq 2$ and $q\geq k+1$. Thus each vertex in $A$ has at least one friend in $B_1$ and each vertex in $B_1$ has at least one friend in $A$. Hence, any arbitrary vertex cover of size $p+q-2$ will be of the form $S_{ij}$ where $i\in [1,p]$ and $j\in [1,k]$. We show how to defend any attack when guards are occupying a vertex cover of this type.

Suppose some edge whose both endpoints have guards on them is attacked, then the guards exchange positions and we again have the same vertex cover.

The following types of edges can be attacked which are incident to $a_i$ or $b_j$. 
\begin{enumerate}
    \item \textbf{Some edge $a_ra_i$ is attacked:}
    
    The guard on $a_r$ moves to $a_i$ along the attacked edge. If $b_j$ is a friend of $a_r$, we are done and we have a new vertex cover $S_{rj}$. If not, then there exists $s\in[1,k]$ such that $s\neq j$ and $b_s$ is a friend of $a_r$. In this case, the guard on $b_s$ moves to $b_j$. The resulting vertex cover will be $S_{rs}$.
    \item \textbf{Some edge $b_sb_j$ is attacked where $s\leq k$:}
    
    The guard on $b_s$ moves to $b_j$ along the attacked edge. If $a_i$ is a friend of $b_s$, we are done and we have a new vertex cover $S_{is}$. If not, then there exists $r\in[1,p]$ such that $r\neq i$ and $a_r$ is a friend of $b_s$. In this case, the guard on $a_r$ moves to $a_i$. The resulting vertex cover will be $S_{rs}$.
    \item \textbf{Some edge $b_sb_j$ is attacked where $s>k$:}
    
    The guard on $b_s$ moves to $b_j$ along the attacked edge. If there exists $t\in[1,k]$ and $t\neq j$ such that $a_i$ is a friend of $b_t$, the guard on $b_t$ moves to $b_s$. The resulting vertex cover will be $S_{it}$. If not, then consider any $t\in[1,k]$ and $t\neq j$. There exists $r\in[1,p]$ and $r\neq i$ such that $a_r$ is a friend of $b_t$. The guard on $b_t$ moves to $b_s$ and the guard on $a_r$ moves to $a_i$. The resulting vertex cover will be $S_{rt}$.
    \item \textbf{Some edge $a_ib_s$ is attacked where $s\leq k$:}
    
    The guard on $b_s$ moves to $a_i$ along the attacked edge. There exists $r\in[1,p]$ such that $a_r$ is a friend of $b_s$ and $r\neq i$ (because $a_i$ is a neighbour of $b_s$ and $b_s$ is non-global). Now, the guard on $b_{k+1}$ moves to $b_s$ and the guard on $a_r$ moves to $b_{k+1}$. The resulting vertex cover will be $S_{rs}$.
    \item \textbf{Some edge $a_ib_s$ is attacked where $s>k$:}
    
    The guard on $b_s$ moves to $a_i$ along the attacked edge. If there exists $r\in[1,p]$ and $r\neq i$ such that $a_r$ is a friend of $b_j$, the guard on $a_r$ moves to $b_s$. The resulting vertex cover will be $S_{rj}$. If not, consider any $r\neq i$ and $r\in[1,p]$. There exists $t\in [1,k]$ and $t\neq j$ such that $b_t$ is a friend of $a_r$. The guard on $b_t$ moves to $b_j$ and the guard on $a_r$ moves to $b_s$. The resulting vertex cover will be $S_{rt}$.
    \item \textbf{Some edge $a_rb_j$ is attacked:}
    
    The guard on $a_r$ moves to $b_j$ along the attacked edge. Since $k\geq 2$, there exists $t\neq j$ such that $b_t$ is also non-global. If $b_t$ is a friend of $a_r$, the guard on $b_{k+1}$ moves to $a_i$, the guard on $b_t$ moves to $b_{k+1}$. The resulting vertex cover is $S_{rt}$. If not, and if $b_t$ is a friend of $a_i$, the guard on $b_t$ moves to $a_r$. The resulting vertex cover is $S_{rt}$. Otherwise $\exists s\in[1,p]$ such that $s\neq i,r$ and $a_s$ is a friend of $b_t$. In this case, the guard on $b_t$ moves to $a_r$ and the guard on $a_s$ moves to $a_i$. The resulting vertex cover is $S_{st}$.
\end{enumerate}
Therefore, in each case we can defend any attack using $p+q-2$ guards and thus $evc(G)=mvc(G)=p+q-2$ (c.f.~\Cref{fig:lem11case3b}). 
\begin{figure}
    \centering
    \begin{subfigure}[b]{\textwidth}
    \centering
    \begin{tikzpicture}[square/.style={regular polygon,regular polygon sides=4}]
    \tikz{ 
    
    \draw [DodgerBlue,thick,rounded corners,fill=LightGray!25] (1,-0.25) rectangle (7,1.25);
    \draw [IndianRed,thick,rounded corners,fill=LightGray!25] (0,2.25) rectangle (8,3.75);

    % Some original edges
    
    \draw[thick] (2,0.5) -- (3,3);
    \draw[thick] (2,0.5) -- (4,3);
    \draw[thick] (2,0.5) -- (7,3);
    \draw[thick] (3,0.5) -- (2,3);
    \draw[thick] (3,0.5) -- (3,3);
    \draw[thick] (3,0.5) -- (4,3);
    \draw[thick] (3,0.5) -- (7,3);
    \draw[thick] (4,0.5) -- (1,3);
    \draw[thick] (4,0.5) -- (5,3);
    \draw[thick] (4,0.5) -- (6,3);
    \draw[thick] (4,0.5) -- (7,3);
    \draw[thick] (5,0.5) -- (2,3);
    \draw[thick] (5,0.5) -- (3,3);
    \draw[thick] (5,0.5) -- (7,3);
    \draw[thick] (6,0.5) -- (5,3);
    \draw[thick] (6,0.5) -- (6,3);
    \draw[thick] (6,0.5) -- (7,3);
    
    % Attacked edge
    \draw[thick,Red] (2,0.5) -- (1,3);
    \draw[->,thick] (5,3) to [out=80,in=100](7,3);

    \foreach \x in {2,3,4,6}
        \node at (\x, 0.5) [square,draw,fill=YellowGreen,minimum size=20pt] () {};
        
    \node at (2,0.5) {$a_1$};
    \node at (3,0.5) {$a_2$};
    \node at (4,0.5) {$a_3$};
    \node at (6,0.5) {$a_5$};

    \foreach \x in {2,3,4,5,6,7}
        \node at (\x, 3) [square,draw,fill=YellowGreen,minimum size=20pt] () {};

    \node at (2,3) {$b_2$};
    \node at (3,3) {$b_3$};
    \node at (4,3) {$b_4$};
    \node at (5,3) {$b_5$};
    \node at (6,3) {$b_6$};
    \node at (7,3) {$b_7$};
    
    \draw[fill=DodgerBlue!77] (5, 0.5) circle (0.15cm);
    \draw[fill=Crimson!77] (1, 3) circle (0.15cm);
    
    \node at (0.5,0.5) {$A$};
    \node at (-0.5,3) {$B$};
    
    \node at (0,4) {};

    }
    \end{tikzpicture}
    \caption{This figure demonstrates the case when there are at least two edges between $A$ and $B$, between $2$ and $6$ non-global vertices on the $B$ side and an edge between the $A$ side and the $B$ side (say $a_1b_1$) is attacked.}
    \label{fig:lem11case3b-1}
\end{subfigure}
%\end{figure}
%\hfill

%%%%%%%%%%%%%%%%%%%%
% Managing the attack
%%%%%%%%%%%%%%%%%%%%
\begin{subfigure}[b]{\textwidth}
    \centering
    %\begin{figure}
    \begin{tikzpicture}[square/.style={regular polygon,regular polygon sides=4}]
    \tikz{ 
    
    \draw [DodgerBlue,thick,rounded corners,fill=LightGray!25] (1,-0.25) rectangle (7,1.25);
    \draw [IndianRed,thick,rounded corners,fill=LightGray!25] (0,2.25) rectangle (8,3.75);

    % Some original edges
    
    \draw[thick] (2,0.5) -- (3,3);
    \draw[thick] (2,0.5) -- (4,3);
    \draw[thick] (2,0.5) -- (7,3);
    \draw[thick] (3,0.5) -- (2,3);
    \draw[thick] (3,0.5) -- (3,3);
    \draw[thick] (3,0.5) -- (4,3);
    \draw[thick] (3,0.5) -- (7,3);
    \draw[thick] (4,0.5) -- (1,3);
    \draw[thick] (4,0.5) -- (5,3);
    \draw[thick] (4,0.5) -- (6,3);
    \draw[thick] (5,0.5) -- (2,3);
    \draw[thick] (5,0.5) -- (3,3);
    \draw[thick] (4,0.5) -- (7,3);
    \draw[thick] (6,0.5) -- (5,3);
    \draw[thick] (6,0.5) -- (6,3);
    \draw[thick] (6,0.5) -- (7,3);
    
    % Attacked edge
    \draw[->,thick,Red] (2,0.5) -- (1,2.8);
    \draw[->,thick] (5,3) to [out=60,in=150](7,3.4);
    \draw[->,thick] (7,3)--(5.1,0.65);
    
    \foreach \x in {2,3,4,6}
        \node at (\x, 0.5) [square,draw,fill=YellowGreen,minimum size=25pt] () {};
        
    \node at (2,0.5) {$a_1$};
    \node at (3,0.5) {$a_2$};
    \node at (4,0.5) {$a_3$};
    \node at (6,0.5) {$a_5$};

    \foreach \x in {2,3,4,5,6,7}
        \node at (\x, 3) [square,draw,fill=YellowGreen,minimum size=25pt] () {};

    \node at (2,3) {$b_2$};
    \node at (3,3) {$b_3$};
    \node at (4,3) {$b_4$};
    \node at (5,3) {$b_5$};
    \node at (6,3) {$b_6$};
    \node at (7,3) {$b_7$};
    
    \draw[fill=DodgerBlue!77] (5, 0.5) circle (0.15cm);
    \draw[fill=Crimson!77] (1, 3) circle (0.15cm);
    
    \node at (0.5,0.5) {$A$};
    \node at (-0.5,3) {$B$};
    
    \node at (0,4) {};

    }
    \end{tikzpicture}
    \caption{This figure demonstrates the defense for the attack in \Cref{fig:lem11case3b-1}. The guard on $a_1$ moves to $b_1$, the guard on $b_7$ moves to $a_4$ and the guard on $b_5$ moves to $b_7$.}
    \label{fig:lem11case3b-2}
\end{subfigure}
%\end{figure}
%%%%%%%%%%%%%%%%%%%%
% After the attack
%%%%%%%%%%%%%%%%%%%%

%\hfill
%\begin{figure}
\begin{subfigure}[b]{\textwidth}
    \centering
    \begin{tikzpicture}[square/.style={regular polygon,regular polygon sides=4}]
    \tikz{ 
    
    \draw [DodgerBlue,thick,rounded corners,fill=LightGray!25] (1,-0.25) rectangle (7,1.25);
    \draw [IndianRed,thick,rounded corners,fill=LightGray!25] (0,2.25) rectangle (8,3.75);

    % Some original edges
    \draw[thick] (2,0.5) -- (3,3);
    \draw[thick] (2,0.5) -- (4,3);
    \draw[thick] (2,0.5) -- (7,3);
    \draw[thick] (3,0.5) -- (2,3);
    \draw[thick] (3,0.5) -- (3,3);
    \draw[thick] (3,0.5) -- (4,3);
    \draw[thick] (3,0.5) -- (7,3);
    \draw[thick] (4,0.5) -- (1,3);
    \draw[thick] (4,0.5) -- (5,3);
    \draw[thick] (4,0.5) -- (6,3);
    \draw[thick] (4,0.5) -- (7,3);
    \draw[thick] (5,0.5) -- (2,3);
    \draw[thick] (5,0.5) -- (3,3);
    \draw[thick] (5,0.5) -- (7,3);
    \draw[thick] (6,0.5) -- (5,3);
    \draw[thick] (6,0.5) -- (6,3);
    \draw[thick] (6,0.5) -- (7,3);
    
    % Attacked edge
    \draw[thick,Red] (2,0.5) -- (1,3);
    \draw[->,thick] (5,3) to [out=80,in=100](7,3);

    \foreach \x in {3,4,5,6}
        \node at (\x, 0.5) [square,draw,fill=YellowGreen,minimum size=25pt] () {};

    \node at (3,0.5) {$a_2$};
    \node at (4,0.5) {$a_3$};
    \node at (5,0.5) {$a_4$};
    \node at (6,0.5) {$a_5$};

    \foreach \x in {1,2,3,4,6,7}
        \node at (\x, 3) [square,draw,fill=YellowGreen,minimum size=25pt] () {};

    \node at (1,3) {$b_1$};
    \node at (2,3) {$b_2$};
    \node at (3,3) {$b_3$};
    \node at (4,3) {$b_4$};
    \node at (6,3) {$b_6$};
    \node at (7,3) {$b_7$};
    
    \draw[fill=DodgerBlue!77] (2, 0.5) circle (0.15cm);
    \draw[fill=Crimson!77] (5, 3) circle (0.15cm);
    
    \node at (0.5,0.5) {$A$};
    \node at (-0.5,3) {$B$};
    
    \node at (0,4) {};

    }
    \end{tikzpicture}
    \caption{This figure demonstrates the position of guards after executing the defense in \Cref{fig:lem11case3b-2}.}
    \label{fig:lem11case3b-3}
\end{subfigure}
\caption{This figure demonstrates Case $3b$ from \Cref{big}.}
\label{fig:lem11case3b}
\end{figure}
\paragraph*{Case $c$: There are no global vertices on the $B$ side.}
\textbf{Case I: Only one vertex on the $A$ side has neighbours on the $B$ side.}

Let $a_1$ be the vertex on the $A$ side which has neighbours on the $B$ side. Since there are at least two edges between $A$ and $B$, $a_1$ has at least two neighbours on the $B$ side, let $\{b_1,b_2,\ldots,b_k\}$ be the neighbours of $a_1$ on the $B$ side. Note that $k\geq 2$ and $k<q$ (which is because $a_1$ is not global). Consider a vertex cover of size $p+q-2$ given by $S_{ij}$ where $i\neq 1$. We show that every attack can be defended and we can again get a vertex cover of the same type, i.e., we ensure that $a_1$ will always have a guard. Suppose some edge whose both endpoints have guards on them is attacked, then the guards exchange positions and we again have the same vertex cover. The following types of edges can be attacked which are incident to $a_i$ or $b_j$. 
\begin{enumerate}
    \item \textbf{Some edge $a_ia_r$ is attacked where $r\neq 1$:}
    
    The guard on $a_r$ moves to $a_i$ along the attacked edge. The new vertex cover will be $S_{rj}$. 
    \item\textbf{Some edge $a_1a_i$ is attacked:}
    
    The guard on $a_1$ moves to $a_i$ along the attacked edge and the guard on some $a_r$ (where $r\neq 1,i$) moves to $a_1$. The new vertex cover will be $S_{rj}$.
    \item \textbf{Some edge $b_jb_s$ is attacked:}
    
    The guard on $b_s$ moves to $b_j$ along the attacked edge. The new vertex cover will be $S_{is}$.
    \item \textbf{Some edge $a_1b_j$ is attacked (for $j\leq k$):}
    
    The guard on $a_1$ moves to $b_j$ along the attacked edge. The guard on some $b_s$ for $s<k,s\neq j$ moves to $a_1$. The new vertex cover will be $S_{is}$.
\end{enumerate}
Thus $p+q-2$ guards are sufficient to defend any attack on $G$ and $evc(G)=mvc(G)=p+q-2$ (c.f.~\Cref{fig:lem11case3cI}).

\begin{figure}
    \centering
    \begin{subfigure}[b]{\textwidth}
    %\centering
    \begin{tikzpicture}[square/.style={regular polygon,regular polygon sides=4}]
    \tikz{ 
    
    \draw [DodgerBlue,thick,rounded corners,fill=LightGray!25] (1,-0.25) rectangle (7,1.25);
    \draw [IndianRed,thick,rounded corners,fill=LightGray!25] (0,2.25) rectangle (8,3.75);

    % Some original edges
    
    \draw[thick] (2,0.5) -- (3,3);
    \draw[thick] (2,0.5) -- (4,3);
    \draw[thick] (2,0.5) -- (5,3);
    \draw[thick] (2,0.5) -- (2,3);
    \draw[thick] (2,0.5) -- (1,3);
    
    % Attacked edge
    \draw[thick,Red] (2,0.5) -- (1,3);

    \foreach \x in {2,3,4,6}
        \node at (\x, 0.5) [square,draw,fill=YellowGreen,minimum size=20pt] () {};
        
    \node at (2,0.5) {$a_1$};
    \node at (3,0.5) {$a_2$};
    \node at (4,0.5) {$a_3$};
    \node at (6,0.5) {$a_5$};

    \foreach \x in {2,3,4,5,6,7}
        \node at (\x, 3) [square,draw,fill=YellowGreen,minimum size=20pt] () {};

    \node at (2,3) {$b_2$};
    \node at (3,3) {$b_3$};
    \node at (4,3) {$b_4$};
    \node at (5,3) {$b_5$};
    \node at (6,3) {$b_6$};
    \node at (7,3) {$b_7$};
    
    \draw[fill=DodgerBlue!77] (5, 0.5) circle (0.15cm);
    \draw[fill=Crimson!77] (1, 3) circle (0.15cm);
    
    \node at (0.5,0.5) {$A$};
    \node at (-0.5,3) {$B$};
    
    \node at (0,4) {};

    }
    \end{tikzpicture}
    \caption{This figure demonstrates the case when there are at least two edges between $A$ and $B$, no global vertices on the $B$ side and an edge between the $A$ side and the $B$ side (say $a_1b_1$) is attacked.}
    \label{fig:lem11case3cI-1}
\end{subfigure}
%\end{figure}

    \begin{subfigure}[b]{\textwidth}
    \begin{tikzpicture}[square/.style={regular polygon,regular polygon sides=4}]
    \tikz{ 
    
    \draw [DodgerBlue,thick,rounded corners,fill=LightGray!25] (1,-0.25) rectangle (7,1.25);
    \draw [IndianRed,thick,rounded corners,fill=LightGray!25] (0,2.25) rectangle (8,3.75);

    % Some original edges
    \draw[thick] (2,0.5) -- (3,3);
    \draw[thick] (2,0.5) -- (2,3);
    \draw[thick] (2,0.5) -- (5,3);
    % Attacked edge
    \draw[->,thick,Red] (2,0.5) -- (1,2.8);
    \draw[->,thick](4,3)--(2.3,0.9);
    
    \foreach \x in {2,3,4,6}
        \node at (\x, 0.5) [square,draw,fill=YellowGreen,minimum size=25pt] () {};
        
    \node at (2,0.5) {$a_1$};
    \node at (3,0.5) {$a_2$};
    \node at (4,0.5) {$a_3$};
    \node at (6,0.5) {$a_5$};

    \foreach \x in {2,3,4,5,6,7}
        \node at (\x, 3) [square,draw,fill=YellowGreen,minimum size=25pt] () {};

    \node at (2,3) {$b_2$};
    \node at (3,3) {$b_3$};
    \node at (4,3) {$b_4$};
    \node at (5,3) {$b_5$};
    \node at (6,3) {$b_6$};
    \node at (7,3) {$b_7$};
    
    \draw[fill=DodgerBlue!77] (5, 0.5) circle (0.15cm);
    \draw[fill=Crimson!77] (1, 3) circle (0.15cm);
    
    \node at (0.5,0.5) {$A$};
    \node at (-0.5,3) {$B$};
    
    \node at (0,4) {};

    }
    \end{tikzpicture}
    \caption{This figure demonstrates the defense for the attack in \Cref{fig:lem11case3cI-1}. The guard on $a_1$ moves to $b_1$ and the guard on $b_4$ moves to $a_1$.}
    \label{fig:lem11case3cI-2}
\end{subfigure}
%\end{figure}
%%%%%%%%%%%%%%%%%%%%
% After the attack
%%%%%%%%%%%%%%%%%%%%

%\hfill
%\begin{figure}
\begin{subfigure}[b]{\textwidth}
    \centering
    \begin{tikzpicture}[square/.style={regular polygon,regular polygon sides=4}]
    \tikz{ 
    
    \draw [DodgerBlue,thick,rounded corners,fill=LightGray!25] (1,-0.25) rectangle (7,1.25);
    \draw [IndianRed,thick,rounded corners,fill=LightGray!25] (0,2.25) rectangle (8,3.75);

    % Some original edges
    \draw[thick] (2,0.5) -- (3,3);
    \draw[thick] (2,0.5) -- (2,3);
    \draw[thick] (2,0.5) -- (5,3);
    \draw[thick] (2,0.5) -- (4,3);

    % Attacked edge
    \draw[thick,Red] (2,0.5) -- (1,3);

    \foreach \x in {2,3,5,6}
        \node at (\x, 0.5) [square,draw,fill=YellowGreen,minimum size=25pt] () {};

    \node at (3,0.5) {$a_2$};
    \node at (2,0.5) {$a_1$};
    \node at (5,0.5) {$a_4$};
    \node at (6,0.5) {$a_5$};

    \foreach \x in {1,2,3,5,6,7}
        \node at (\x, 3) [square,draw,fill=YellowGreen,minimum size=25pt] () {};

    \node at (1,3) {$b_1$};
    \node at (2,3) {$b_2$};
    \node at (3,3) {$b_3$};
    \node at (5,3) {$b_5$};
    \node at (6,3) {$b_6$};
    \node at (7,3) {$b_7$};
    
    \draw[fill=DodgerBlue!77] (4, 0.5) circle (0.15cm);
    \draw[fill=Crimson!77] (4, 3) circle (0.15cm);
    
    \node at (0.5,0.5) {$A$};
    \node at (-0.5,3) {$B$};
    
    \node at (0,4) {};
    \node at (13,0) {};

    }
    \end{tikzpicture}
    \caption{This figure demonstrates the position of guards after executing the defense in \Cref{fig:lem11case3cI-2}.}
    \label{fig:lem11case3cI-3}
\end{subfigure}
\caption{This figure demonstrates Case $3cI$ from \Cref{big}.}
\label{fig:lem11case3cI}
\end{figure}

\textbf{Case II: Only one vertex on the $B$ side has neighbours on the $A$ side.}

Let $b_1$ be the vertex on the $B$ side which has neighbours on the $A$ side. Since there are at least two edges between $A$ and $B$, $b_1$ has at least two neighbours on the $B$ side, let $\{a_1,a_2,\ldots,a_k\}$ be the neighbours of $b_1$ on the $A$ side. Note that $k\geq 2$ and $k<p$ (which is because $b_1$ is not global). Consider a vertex cover of size $p+q-2$ given by $S_{ij}$ where $j\neq 1$. It can be shown every attack on $G$ can be defended and we can again get a vertex cover of the same type, i.e., we can always ensure that $b_1$ will always have a guard and maintain a vertex cover of size $p+q-2$ by using a strategy symmetric to the \textbf{Case I} above. Here, $evc(G)=mvc(G)=p+q-2$. 

\textbf{Case III: Both sides have at least two vertices which have neighbours on the other side.}

Since there are no global vertices, every vertex has at least one friend. Consider an arbitrary vertex cover $S_{ij}$ of size $p+q-2$ where $a_i$ is a friend of $b_j$. We show that any attack can be defended and we again get a vertex cover.

Suppose some edge whose both endpoints have guards on them is attacked, then the guards exchange positions and we again have the same vertex cover.
The following types of edges can be attacked which are incident to $a_i$ or $b_j$. 
\begin{enumerate}
    \item \textbf{Some edge $a_ra_i$ is attacked:}
    
    The guard on $a_i$ moves to $a_r$ along the attacked edge. If $a_r$ is a friend of $b_j$, we are done and we have a vertex cover $S_{rj}$. If not, then there exists some $s\in[1,q]$ such that $b_s$ is a friend of $a_r$. The guard on $b_s$ moves to $b_j$ and we have the vertex cover $S_{rs}$.
    \item \textbf{Some edge $b_jb_s$ is attacked:}
    
    The guard on $b_s$ moves to $b_j$. If $a_i$ is a friend of $b_s$, we are done and we have a vertex cover $S_{is}$. If not, then there exists some $r\in[1,p]$ such that $a_r$ is a friend of $b_s$. The guard on $a_r$ moves to $a_i$ and we have the vertex cover $S_{rs}$.
    \item \textbf{Some edge $a_ib_s$ is attacked ($s\neq j$):}
    
    The guard on $b_s$ moves to $a_i$. Since $a_i$ is not the only vertex on the $A$ side which has a neighbour on the $B$ side, there exists another vertex $a_r$ with $(r\neq a)$ which has a neighbour on the $B$ side.
    
    If there exists an edge $a_rb_s$, the guard from $a_r$ moves to $b_s$. If $a_r$ is a friend of $b_j$, we are done and we have a new vertex cover $S_{rj}$. If not, then there exists $b_t$ with $t\neq s,j$ such that $b_t$ is a friend of $a_r$. The guard on $b_t$ moves to $b_j$ and we have the new vertex cover $S_{rt}$.
    
    If there does not exist an edge $a_rb_s$, then there exists some edge $a_rb_t$ for $t\neq s$. If $t=j$, i.e., there exists some edge $a_rb_j$, the guard on $a_r$ moves to $b_j$ and we have the vertex cover $S_{rs}$. If $t\neq j$, the guard on $b_t$ moves to $b_j$, the guard on $a_r$ moves to $b_t$ and we have the vertex cover $S_{rs}$.
    \item \textbf{Some edge $a_rb_j$ is attacked ($r\neq i$):}
    
    We can show using an argument which is symmetric to the above one that $p+q-2$ guards are able to defend the attack and again form a vertex cover.
\end{enumerate}
Thus $p+q-2$ guards are sufficient to defend any attack on $G$ and $evc(G)=mvc(G)=p+q-2$.
Thus we have shown that except \textbf{Case 3$a$}, $evc(G)=mvc(G)=p+q-2$ and in \textbf{Case 3$a$} $evc(G)=p+q-1$ and $mvc(G)=p+q-2$ which completes the proof of \Cref{big}.
\end{proof}

\begin{figure}
    \centering
    \begin{subfigure}[b]{\textwidth}
    %\centering
    \begin{tikzpicture}[square/.style={regular polygon,regular polygon sides=4}]
    \tikz{ 
    
    \draw [DodgerBlue,thick,rounded corners,fill=LightGray!25] (1,-0.25) rectangle (7,1.25);
    \draw [IndianRed,thick,rounded corners,fill=LightGray!25] (0,2.25) rectangle (8,3.75);

    % Some original edges
    
    \draw[thick] (2,0.5) -- (3,3);
    \draw[thick] (2,0.5) -- (4,3);
    \draw[thick] (3,0.5) -- (2,3);
    \draw[thick] (3,0.5) -- (3,3);
    \draw[thick] (3,0.5) -- (4,3);
    \draw[thick] (3,0.5) -- (7,3);
    \draw[thick] (4,0.5) -- (1,3);
    \draw[thick] (4,0.5) -- (5,3);
    \draw[thick] (4,0.5) -- (6,3);
    \draw[thick] (5,0.5) -- (2,3);
    \draw[thick] (5,0.5) -- (3,3);
    \draw[thick] (5,0.5) -- (4,3);
    \draw[thick] (5,0.5) -- (7,3);
    \draw[thick] (6,0.5) -- (5,3);
    \draw[thick] (6,0.5) -- (6,3);
    \draw[->,thick](4,0.5) to (5,0.5);
    
    % Attacked edge
    \draw[thick,Red] (2,0.5) -- (1,3);

    \foreach \x in {2,3,4,6}
        \node at (\x, 0.5) [square,draw,fill=YellowGreen,minimum size=20pt] () {};
        
    \node at (2,0.5) {$a_1$};
    \node at (3,0.5) {$a_2$};
    \node at (4,0.5) {$a_3$};
    \node at (6,0.5) {$a_5$};

    \foreach \x in {2,3,4,5,6,7}
        \node at (\x, 3) [square,draw,fill=YellowGreen,minimum size=20pt] () {};

    \node at (2,3) {$b_2$};
    \node at (3,3) {$b_3$};
    \node at (4,3) {$b_4$};
    \node at (5,3) {$b_5$};
    \node at (6,3) {$b_6$};
    \node at (7,3) {$b_7$};
    
    \draw[fill=DodgerBlue!77] (5, 0.5) circle (0.15cm);
    \draw[fill=Crimson!77] (1, 3) circle (0.15cm);
    
    \node at (0.5,0.5) {$A$};
    \node at (-0.5,3) {$B$};
    
    \node at (0,4) {};

    }
    \end{tikzpicture}
    \caption{This figure demonstrates the case when there are at least two edges between $A$ and $B$, no global vertices on the $B$ side, only one edge on the $A$ side is connected to the $B$ side and an edge between the $A$ side and the $B$ side (say $a_1b_1$) is attacked.}
    \label{fig:lem11case3cIII-1}
\end{subfigure}
%\end{figure}
%\hfill

%%%%%%%%%%%%%%%%%%%%
% Managing the attack
%%%%%%%%%%%%%%%%%%%%
\begin{subfigure}[b]{\textwidth}
    %\centering
   % \begin{figure}
    \begin{tikzpicture}[square/.style={regular polygon,regular polygon sides=4}]
    \tikz{ 
    
    \draw [DodgerBlue,thick,rounded corners,fill=LightGray!25] (1,-0.25) rectangle (7,1.25);
    \draw [IndianRed,thick,rounded corners,fill=LightGray!25] (0,2.25) rectangle (8,3.75);

    % Some original edges
    
    \draw[thick] (2,0.5) -- (3,3);
    \draw[thick] (3,0.5) -- (2,3);
    \draw[thick] (3,0.5) -- (3,3);
    \draw[thick] (3,0.5) -- (4,3);
    \draw[thick] (3,0.5) -- (7,3);
    \draw[thick] (4,0.5) -- (1,3);
    \draw[thick] (4,0.5) -- (5,3);
    \draw[thick] (4,0.5) -- (6,3);
    \draw[thick] (5,0.5) -- (2,3);
    \draw[thick] (5,0.5) -- (3,3);
    \draw[thick] (5,0.5) -- (4,3);
    \draw[thick] (5,0.5) -- (7,3);
    \draw[thick] (6,0.5) -- (5,3);
    \draw[thick] (6,0.5) -- (6,3);
    
    % Attacked edge
    \draw[->,thick,Red] (2,0.5) -- (1,2.8);
    \draw[->,thick](4,3)--(2.3,0.9);
    \draw[->,thick](4,0.5) to (4.8,0.5);
    \draw[thick] (7,3)--(5,0.5);
    
    \foreach \x in {2,3,4,6}
        \node at (\x, 0.5) [square,draw,fill=YellowGreen,minimum size=25pt] () {};
        
    \node at (2,0.5) {$a_1$};
    \node at (3,0.5) {$a_2$};
    \node at (4,0.5) {$a_3$};
    \node at (6,0.5) {$a_5$};

    \foreach \x in {2,3,4,5,6,7}
        \node at (\x, 3) [square,draw,fill=YellowGreen,minimum size=25pt] () {};

    \node at (2,3) {$b_2$};
    \node at (3,3) {$b_3$};
    \node at (4,3) {$b_4$};
    \node at (5,3) {$b_5$};
    \node at (6,3) {$b_6$};
    \node at (7,3) {$b_7$};
    
    \draw[fill=DodgerBlue!77] (5, 0.5) circle (0.15cm);
    \draw[fill=Crimson!77] (1, 3) circle (0.15cm);
    
    \node at (0.5,0.5) {$A$};
    \node at (-0.5,3) {$B$};
    
    \node at (0,4) {};

    }
    \end{tikzpicture}
    \caption{This figure demonstrates the defense for the attack in \Cref{fig:lem11case3cIII-1}. The guard on $a_1$ moves to $b_1$, the guard on $b_4$ moves to $a_1$ and the guard on $a_3$ moves to $a_4$.}
    \label{fig:lem11case3cIII-2}
\end{subfigure}
%\end{figure}
%%%%%%%%%%%%%%%%%%%%
% After the attack
%%%%%%%%%%%%%%%%%%%%

%\hfill
%\begin{figure}
\begin{subfigure}[b]{\textwidth}
    \begin{tikzpicture}[square/.style={regular polygon,regular polygon sides=4}]
    \tikz{ 
    
    \draw [DodgerBlue,thick,rounded corners,fill=LightGray!25] (1,-0.25) rectangle (7,1.25);
    \draw [IndianRed,thick,rounded corners,fill=LightGray!25] (0,2.25) rectangle (8,3.75);

    % Some original edges
    \draw[thick] (2,0.5) -- (3,3);
    \draw[thick] (2,0.5) -- (4,3);
    \draw[thick] (3,0.5) -- (2,3);
    \draw[thick] (3,0.5) -- (3,3);
    \draw[thick] (3,0.5) -- (4,3);
    \draw[thick] (3,0.5) -- (7,3);
    \draw[thick] (4,0.5) -- (1,3);
    \draw[thick] (4,0.5) -- (5,3);
    \draw[thick] (4,0.5) -- (6,3);
    \draw[thick] (5,0.5) -- (2,3);
    \draw[thick] (5,0.5) -- (3,3);
    \draw[thick] (5,0.5) -- (4,3);
    \draw[thick] (5,0.5) -- (7,3);
    \draw[thick] (6,0.5) -- (5,3);
    \draw[thick] (6,0.5) -- (6,3);
    \draw[->,thick](4,0.5) to (5,0.5);
    
    % Attacked edge
    \draw[thick,Red] (2,0.5) -- (1,3);
    
    % Attacked edge
    \draw[thick,Red] (2,0.5) -- (1,3);

    \foreach \x in {2,3,5,6}
        \node at (\x, 0.5) [square,draw,fill=YellowGreen,minimum size=25pt] () {};

    \node at (3,0.5) {$a_2$};
    \node at (2,0.5) {$a_1$};
    \node at (5,0.5) {$a_4$};
    \node at (6,0.5) {$a_5$};

    \foreach \x in {1,2,3,5,6,7}
        \node at (\x, 3) [square,draw,fill=YellowGreen,minimum size=25pt] () {};

    \node at (1,3) {$b_1$};
    \node at (2,3) {$b_2$};
    \node at (3,3) {$b_3$};
    \node at (5,3) {$b_5$};
    \node at (6,3) {$b_6$};
    \node at (7,3) {$b_7$};
    
    \draw[fill=DodgerBlue!77] (4, 0.5) circle (0.15cm);
    \draw[fill=Crimson!77] (4, 3) circle (0.15cm);
    
    \node at (0.5,0.5) {$A$};
    \node at (-0.5,3) {$B$};
    
    \node at (0,4) {};

    }
    \end{tikzpicture}
    \caption{This figure demonstrates the position of guards after executing the defense in \Cref{fig:lem11case3cIII-2}.}
    \label{fig:lem11case3cIII-3}
\end{subfigure}
\caption{This figure demonstrates Case $3c III$ from \Cref{big}.}
\end{figure}

Now we determine the value of $evc(G)$ for a co-bipartite graph $G$ with all the notations as described above and $p\leq 3$.
\begin{lemma}\label{p=0}
For a co-bipartite graph $G$ with all the notations as described above and $p=0$, we have $evc(G)=mvc(G)=p+q-1$.
\end{lemma}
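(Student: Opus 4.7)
The plan is to observe that when $p=0$, the set $A$ is empty, so $G = G[B]$ is a clique on $q$ vertices, i.e., $G \cong K_q$. The statement then reduces to showing $evc(K_q)=mvc(K_q)=q-1$, which is a well-known fact about cliques, but I will verify it directly in two short steps.

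First I would establish the vertex-cover bound: any $q-1$ vertices of $K_q$ form a vertex cover, since the unique unchosen vertex has no edge leaving it that avoids the chosen set. Conversely, any set of $q-2$ vertices omits two vertices whose connecting edge is uncovered, so $mvc(G) = q-1 = p+q-1$.

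Next I would exhibit a defender strategy using $q-1$ guards to show $evc(G) \le q-1$. Place guards on any $q-1$ vertices; the invariant is that at every point all vertices but exactly one, call it $x$, carry a guard. If an attacked edge $e=uv$ has guards on both endpoints, the two guards exchange positions along $e$ and the configuration is unchanged. If instead one endpoint equals $x$, say $u=x$, then the guard on $v$ slides across $e$ to $u$, and now $v$ is the unique unguarded vertex. In both cases the new configuration again consists of $q-1$ guards covering all but one vertex, so it is a valid vertex cover and the attacked edge has been defended. Combined with the trivial lower bound $evc(G) \ge mvc(G) = q-1$, this yields $evc(G) = q-1 = p+q-1$.

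The main "obstacle" is essentially vacuous here: this is a degenerate base case of the case analysis in \Cref{sec:cobip}, needed mainly to make the bookkeeping for the general classification complete. I would therefore write the proof as a brief two-paragraph verification rather than invoking any of the more involved machinery developed for \Cref{big}.
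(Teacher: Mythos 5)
Your proposal is correct and takes essentially the same route as the paper, which simply observes that $G$ is a clique on $q$ vertices and asserts $evc(G)=mvc(G)=q-1$. Your explicit verification of the clique facts merely spells out the "all vertices but one are guarded" rotation strategy that the paper already presents earlier in the section when establishing $evc(G)\leq p+q-1$ for general co-bipartite graphs.
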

\begin{proof}
If $p=0$, then the graph is a clique with $q$ vertices and thus $evc(G)=mvc(G)=q-1=p+q-1$. 
\end{proof}
\begin{lemma}\label{p=1}
For a co-bipartite graph $G$ with all the notations as described above and $p=1$, $evc(G)=mvc(G)=p+q-2$ if and only if $a_1$ has no neighbours on the $B$ side.
\end{lemma}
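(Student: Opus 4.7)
The plan is to handle both directions of the biconditional separately, and the argument is short because the graph is so constrained when $p = 1$.

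For the forward direction, suppose $a_1$ has no neighbours on the $B$ side. Since $p = 1$ there are no edges inside $G[A]$, so $a_1$ is an isolated vertex and $G$ is the disjoint union of the clique $G[B]$ on $q$ vertices with $\{a_1\}$. An isolated vertex needs no guard, and the eternal vertex cover number of a clique on $q$ vertices is $q-1$, so $evc(G) = q-1 = p+q-2 = mvc(G)$.

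For the backward direction, suppose $a_1$ has at least one neighbour $b_s \in B$. The key step is to characterize all vertex covers of $G$ of size $p+q-2 = q-1$. Since $G[B]$ is a clique on $q$ vertices, every vertex cover of $G$ must contain at least $q-1$ vertices of $B$. Hence a vertex cover of size $q-1$ must consist of exactly $q-1$ vertices of $B$ and exclude $a_1$; moreover, the two excluded vertices (namely $a_1$ and some $b_j$) must be non-adjacent, which forces $b_j$ to be a friend of $a_1$. Therefore every minimum vertex cover has the form $B \setminus \{b_j\}$ for some friend $b_j$ of $a_1$.

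With this characterization in hand, starting from any such initial configuration of $q-1$ guards, I attack the edge $a_1 b_s$. Since $a_1$ has no guard and $b_s$ does (because $b_s \in N(a_1)$ while $b_j \notin N(a_1)$, so $b_s \neq b_j$), the unique guard incident to the attacked edge sits on $b_s$ and is forced to slide to $a_1$. The resulting configuration $S'$ then contains $a_1$ and has size $q-1$, placing at most $q-2$ guards on $B$ and leaving at least two vertices of $B$ unguarded. Since $G[B]$ is a clique, the edge between any two such vertices is uncovered, so $S'$ fails to be a vertex cover and the attacker wins. This shows $evc(G) > q-1 = mvc(G)$, and combined with the general upper bound $evc(G) \leq p+q-1$ we get $evc(G) = p+q-1$.

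No step here is subtle; the main obstacle, if any, is simply to recognize that all minimum vertex covers avoid $a_1$, so any defense of an attack on an edge incident to $a_1$ is forced to place a guard on $a_1$, which is incompatible with maintaining a vertex cover of size $q-1$.
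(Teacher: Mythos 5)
Your proof is correct and follows essentially the same route as the paper: when $a_1$ is isolated the graph reduces to a clique plus an isolated vertex, and otherwise every minimum vertex cover must exclude $a_1$ (and a friend of $a_1$), so an attack on an edge from $a_1$ to a neighbour forces a guard onto $a_1$, leaving only $q-2$ guards to cover the clique $G[B]$. The only cosmetic difference is that the paper separates out the degenerate case $q=1$ explicitly, which your argument handles implicitly since the convention that $a_1$ is non-global rules out $q=1$ in the backward direction.
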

\begin{proof}
Since $q\geq p$ and $p=1$, we have $q\geq 1$. If $q=1$, then if $a_1b_1\in E(G)$, then $a_1$ will be a global vertex which is not allowed according to the convention. Therefore $a_1$ has no neighbours on the $B$ side. Since there are no edges, no guards are required to protect $G$. Therefore $evc(G)=mvc(G)=0=p+q-2$.

Now consider the case where $q>1$.
If $a_1$ has no neighbours on the $B$ side, then $evc(G)=mvc(G)=q-1=p+q-2$ because $B$ is a clique of size $q$ and $A$ is an isolated vertex.

If $a_1$ has at least one neighbour on the other side, then we show that $p+q-2=q-1$ guards are not sufficient to protect $G$. Let $\{b_1,b_2,\ldots,b_k\}$ be the neighbours of $a_1$ on the $B$ side where $1\leq k<q$. A vertex cover of $G$ of size $q-1$ will be of the form $S_{1j}$ where $j>k$. Now attack $a_1b_1$. To defend this attack, the guard on $b_1$ is forced to move to $a_1$. So there are only $q-2$ guards on $B$ to protect a clique of size $q$. Therefore some edge must be left vulnerable. Thus we have $evc(G)=p+q-1=q$ and $evc(G)\neq mvc(G)$.
\end{proof}
\begin{lemma}\label{p=q=2}
For a co-bipartite graph $G$ with all the notations as described above and $p=q=2$, we have $evc(G)=mvc(G)=p+q-2=2$ if and only if $a_1$ and $a_2$ each have one neighbour on the other side which is unique or $a_1$ and $a_2$ have no neighbour on the other side.
\end{lemma}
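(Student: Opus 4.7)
The plan is to observe that by~\Cref{vc} we have $\operatorname{mvc}(G)=p+q-2=2$, so the question reduces to deciding when exactly $\operatorname{evc}(G)=2$ versus $\operatorname{evc}(G)=3$. Since $|A|=|B|=2$ and there are no global vertices in $A$, each $a_i$ has either zero or exactly one neighbor in $B$. Enumerating possibilities (up to renaming $b_1\leftrightarrow b_2$) gives four cases: (i) no cross edges at all, (ii) exactly one cross edge, (iii) $a_1$ and $a_2$ share a common cross-neighbor, (iv) $a_1$ and $a_2$ have distinct unique cross-neighbors. The claim asserts that (i) and (iv) are precisely the cases where $\operatorname{evc}(G)=\operatorname{mvc}(G)$.

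For the \emph{if} direction, I would handle the two good cases constructively. In case (i), $G$ is the disjoint union of the edges $a_1a_2$ and $b_1b_2$; starting from any $S_{ij}$, every attack can be defended by having the two guards on the endpoints of the attacked edge exchange positions (if the attacked edge has both endpoints occupied), or by swapping across within the affected clique. In case (iv), say $a_1b_1,a_2b_2\in E(G)$ and $a_1b_2,a_2b_1\notin E(G)$, so $G$ is a $4$-cycle on $a_1,b_1,a_2,b_2$; the only size-$2$ vertex covers are $S_{2,1}=\{a_1,b_2\}$ and $S_{1,2}=\{a_2,b_1\}$. I would start from $\{a_1,b_2\}$ and exhibit, edge by edge (for each of the four edges of the $4$-cycle), a simultaneous legal move of both guards that defends the attack and lands on $\{a_2,b_1\}$; the symmetric analysis handles attacks from $\{a_2,b_1\}$.

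For the \emph{only if} direction, I would show that cases (ii) and (iii) force $\operatorname{evc}(G)\ge p+q-1=3$ by exhibiting a winning attacker strategy against any initial placement of two guards. In case (iii), WLOG $a_1b_1,a_2b_1\in E(G)$ and $a_1b_2,a_2b_2\notin E(G)$, so every size-$2$ vertex cover must contain $b_1$ (otherwise at least one of the cross edges is uncovered). Attacking $b_1b_2$ forces the guard on $b_1$ to move to $b_2$; after this move, whichever way the second guard is placed, some edge among $\{a_1b_1,a_2b_1,a_1a_2\}$ is uncovered (the guard at $b_2$ cannot return via a cross edge since neither $a_1b_2$ nor $a_2b_2$ exists), giving the attacker a winning next move. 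In case (ii), WLOG $a_2b_1$ is the only cross edge and the size-$2$ vertex covers are exactly $\{a_2,b_1\},\{a_2,b_2\},\{a_1,b_1\}$; I would present a short fixed sequence of attacks (for instance $a_1a_2$, then $b_1b_2$, then $a_2b_1$) that, by exhaustive check of legal responses, necessarily exposes one of $a_1a_2$, $b_1b_2$, or $a_2b_1$ in the final configuration.

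The main obstacle is bookkeeping rather than insight: since there are only four vertices the case analysis is small, but in the attacker-wins arguments one must carefully enumerate \emph{all} legal movements of the (up to) two guards in response to each attack — including moves that do not change the guard count on a side — and verify that every resulting configuration fails to be a vertex cover. The cleanest way to organize this is to first pin down which vertices must be occupied in any size-$2$ vertex cover of the particular case, then show that some attack removes one such mandatory occupant with no way to restore coverage.
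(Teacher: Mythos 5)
Your decomposition into four cases and your treatment of three of them coincide with the paper's proof: case (i) (no cross edges) and case (iv) (two disjoint cross edges, giving a $4$-cycle whose two minimum vertex covers $\{a_1,b_2\}$ and $\{a_2,b_1\}$ are toggled to defend any attack) are handled exactly as in the paper's Cases 1 and 2, and your argument for case (iii) (a common cross-neighbour $b_1$, so every size-$2$ vertex cover contains $b_1$ and the attack on $b_1b_2$ cannot be answered) is the paper's Case 3 essentially verbatim. One cosmetic slip: in case (iv) the cyclic order of the $4$-cycle is $a_1,b_1,b_2,a_2$ rather than $a_1,b_1,a_2,b_2$; this is harmless since you identify the correct pair of vertex covers.

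The one step that fails is the fixed attack sequence you propose for case (ii). With your labelling (only cross edge $a_2b_1$; vertex covers $V_1=\{a_2,b_1\}$, $V_2=\{a_2,b_2\}$, $V_3=\{a_1,b_1\}$), the oblivious sequence $a_1a_2$, then $b_1b_2$, then $a_2b_1$ does \emph{not} necessarily win: starting from $V_3$, the defender answers the attack on $a_1a_2$ by sending the guard on $a_1$ to $a_2$ and the guard on $b_1$ to $b_2$, reaching $V_2$; she answers the attack on $b_1b_2$ by sending the guard on $b_2$ back to $b_1$ and leaving the other guard on $a_2$, reaching $V_1$; and she answers the final attack on $a_2b_1$ by swapping the two guards, which is again a vertex cover. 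So the claim that this sequence ``necessarily exposes'' one of the three edges is false. The conclusion of case (ii) is of course still correct, but the attacker must adapt to the defender's responses (or use a different sequence): attacking $a_2b_1$ first defeats $V_2$ and $V_3$ outright (the forced move leaves $a_1a_2$, respectively $b_1b_2$, uncovered), and only $V_1$ survives, by swapping; from $V_1$, attacking $a_1a_2$ forces the configuration to $V_3$, after which a second attack on $a_2b_1$ wins. This adaptive strategy is what the paper's Case 4 does, conditioning the choice of attack on the current vertex cover. As written, your case (ii) argument would not survive the exhaustive check of legal responses that you yourself prescribe, so this portion needs to be reworked along the lines above.
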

\begin{proof}
Since $p=q=2$ and no vertex on the $A$ side is global according to the convention, we have the following possibilities:
\begin{enumerate}
    \item $a_1$ and $a_2$ have no neighbour on the $B$ side.
    \item $a_1$ and $a_2$ have one neighbour each on the $B$ side which is unique.
    \item $a_1$ and $a_2$ have one common neighbour on the $B$ side.
    \item Exactly one of $a_1$ and $a_2$ has a neighbour on the $B$ side.
\end{enumerate}
\paragraph*{Case $1$: $a_1$ and $a_2$ have no neighbour on the $B$ side.}
Since we have two disconnected edges, i.e., two disjoint cliques of size $1$, so we have $evc(G)=mvc(G)=2=p+q-2$.
\paragraph*{Case $2$: $a_1$ and $a_2$ have one neighbour each on the $B$ side which is unique.}
Let $b_1$ be the neighbour of $a_1$ and $b_2$ be the neighbour of $a_2$. The two vertex covers of size $p+q-2$ are $S_{12}$ and $S_{21}$. Without loss of generality we start with the vertex cover $S_{12}$ and show that any attack can be defended.

If any edge with guards on both of its endpoints is attacked, the guards exchange positions and we again have the same vertex cover.

The following types of edges can be attacked which are incident to $a_i$ or $b_j$. 
\begin{enumerate}
    \item \textbf{The edge $a_1a_2$ is attacked:}
    
    The guard on $a_2$ moves to $a_1$ along the attacked edge and the guard on $b_1$ moves to $b_2$ and thus the resultant vertex cover is $S_{21}$.
    \item \textbf{The edge $b_1b_2$ is attacked:}
    
    The guard on $b_1$ moves to $b_2$ and the guard on $a_2$ moves to $a_1$ and the resultant vertex cover is $S_{21}$.
    \item \textbf{The edge $a_1b_1$ is attacked:}
    
    The guard on $b_1$ moves to $a_1$ and the guard on $a_2$ moves to $b_2$. The resultant vertex cover is $S_{21}$. 
    \item \textbf{The edge $a_2b_2$ is attacked:}
    
    The guard on $a_2$ moves to $b_2$ and the guard on $b_1$ moves to $a_1$. The resultant vertex cover is $S_{21}$.
\end{enumerate}

Thus we have $evc(G)=mvc(G)=2$ (c.f.~\Cref{fig:lem14case2}).

\begin{figure}
\centering
\begin{subfigure}[b]{\textwidth}
\begin{tikzpicture}[square/.style={regular polygon,regular polygon sides=4}]
    \tikz{ 
    
    \draw [DodgerBlue,thick,rounded corners,fill=LightGray!25] (1.5,-0.25) rectangle (4,1.25);
    \draw [IndianRed,thick,rounded corners,fill=LightGray!25] (1.5,2.25) rectangle (4,3.75);

    % Some original edges
    \draw[thick] (3,0.5) -- (3,3);

    % Attacked edge
    \draw[thick,Red] (2,0.5) -- (2,3);
    \draw[thick] (2,0.5) -- (3,0.5);
    \draw[thick] (2,3) -- (3,3);

    \foreach \x in {3}
        \node at (\x, 0.5) [square,draw,fill=YellowGreen,minimum size=20pt] () {};

    \node at (3,0.5) {$a_2$};

    \foreach \x in {2}
        \node at (\x, 3) [square,draw,fill=YellowGreen,minimum size=20pt] () {};

    \node at (2,3) {$b_1$};
    
    \draw[fill=DodgerBlue!77] (2, 0.5) circle (0.15cm);
    \draw[fill=Crimson!77] (3, 3) circle (0.15cm);
    
    \node at (0.5,0.5) {$A$};
    \node at (0.5,3) {$B$};
    
    \node at (0,4) {};

    }
    \end{tikzpicture}
   \caption{This figure demonstrates Case $2$, \Cref{p=q=2} when the edge $a_1b_1$ is attacked.}
   \label{fig:lem14case2-1} 
\end{subfigure}

\begin{subfigure}[b]{\textwidth}
\begin{tikzpicture}[square/.style={regular polygon,regular polygon sides=4}]
    \tikz{ 
    
    \draw [DodgerBlue,thick,rounded corners,fill=LightGray!25] (1.5,-0.25) rectangle (4,1.25);
    \draw [IndianRed,thick,rounded corners,fill=LightGray!25] (1.5,2.25) rectangle (4,3.75);

    % Some original edges
    \draw[->,thick] (3,0.5) -- (3,2.8);

    % Attacked edge
    \draw[->,thick,Red] (2,3) -- (2,0.7);
    \draw[thick] (2,0.5) -- (3,0.5);
    \draw[thick] (2,3) -- (3,3);

    \foreach \x in {3}
        \node at (\x, 0.5) [square,draw,fill=YellowGreen,minimum size=20pt] () {};

    \node at (3,0.5) {$a_2$};

    \foreach \x in {2}
        \node at (\x, 3) [square,draw,fill=YellowGreen,minimum size=20pt] () {};

    \node at (2,3) {$b_1$};
    
    \draw[fill=DodgerBlue!77] (2, 0.5) circle (0.15cm);
    \draw[fill=Crimson!77] (3, 3) circle (0.15cm);
    
    \node at (0.5,0.5) {$A$};
    \node at (0.5,3) {$B$};

    \node at (0,4) {};

    }
    \end{tikzpicture}
   \caption{This figure shows the defense for the attack in \Cref{fig:lem14case2-1}. Here, the guard on $b_1$ moves to $a_1$ and the guard on $a_2$ moves to $b_2$.}
   \label{fig:lem14case2-2}
\end{subfigure}

\begin{subfigure}[b]{\textwidth}
\begin{tikzpicture}[square/.style={regular polygon,regular polygon sides=4}]
    \tikz{ 
    
    \draw [DodgerBlue,thick,rounded corners,fill=LightGray!25] (1.5,-0.25) rectangle (4,1.25);
    \draw [IndianRed,thick,rounded corners,fill=LightGray!25] (1.5,2.25) rectangle (4,3.75);

    % Some original edges
    \draw[thick] (3,0.5) -- (3,3);

    % Attacked edge
    \draw[thick,Red] (2,0.5) -- (2,3);
    \draw[thick] (2,0.5) -- (3,0.5);
    \draw[thick] (2,3) -- (3,3);

    \foreach \x in {2}
        \node at (\x, 0.5) [square,draw,fill=YellowGreen,minimum size=20pt] () {};

    \node at (2,0.5) {$a_1$};

    \foreach \x in {3}
        \node at (\x, 3) [square,draw,fill=YellowGreen,minimum size=20pt] () {};

    \node at (3,3) {$b_2$};
    
    \draw[fill=DodgerBlue!77] (3, 0.5) circle (0.15cm);
    \draw[fill=Crimson!77] (2, 3) circle (0.15cm);
    
    \node at (0.5,0.5) {$A$};
    \node at (0.5,3) {$B$};

    \node at (0,4) {};

    }
    \end{tikzpicture}
   \caption{This figure shows the state of the graph after the defense in figure \ref{fig:lem14case2-2} has been executed.}
   \label{fig:lem14case2-3}
\end{subfigure}

\caption{A figure corresponding to Case $2$ from \Cref{p=q=2}.
\label{fig:lem14case2}
}
\end{figure}

\paragraph*{Case $3$: $a_1$ and $a_2$ have one common neighbour on the $B$ side.}
Let $b_1$ be the common neighbour of $a_1$ and $a_2$. Since both $a_1$ and $a_2$ are non-global, we know that $a_1$ and $a_2$ are not adjacent with $b_2$. Any vertex cover of size $p+q-2$ will be of the form $S_{i2}$ where $i\in\{1,2\}$. Now attack $b_1b_2$, the guard on $b_1$ is forced to move to $b_2$. If the guard on $a_{3-i}$ moves to $b_1$, $a_1a_2$ will be unprotected, otherwise one of $a_1b_1$ and $a_2b_1$ will be unprotected.

Therefore, $p+q-2$ guards are not sufficient and $evc(G)\neq mvc(G)$.
\paragraph*{Case $4$: Exactly one of $a_1$ and $a_2$ has a neighbour on the $B$ side.}
Let $a_1$ be the vertex which has a neighbour (say $b_1$) on the $B$ side. The vertex covers of size $2$ will be $S_{12},S_{21}$ and $S_{22}$. If we have $S_{12}$ as the initial vertex cover, attack $a_1b_1$. The guard on $b_1$ is forced to move to $a_1$. The guard on $a_2$ cannot move anywhere and thus $b_1b_2$ is left unprotected. Similarly, if we have $S_{21}$ as the initial vertex cover, attack $a_1b_1$. The guard on $a_1$ is forced to move to $b_1$. The guard on $b_2$ cannot move anywhere and thus $a_1a_2$ is left unprotected. If we have $S_{22}$ as the initial vertex cover, attack $a_1a_2$. The guard on $a_1$ is forced to move to $a_2$ and to maintain a vertex cover, the guard on $b_1$ is forced to remain in his position. So we have the vertex cover $S_{12}$. We have already shown that it is possible for the attacker to win from the vertex cover $S_{12}$.

Thus $p+q-2$ guards are not sufficient to protect the graph and $evc(G)\neq mvc(G)$.
\end{proof}
\begin{lemma}\label{p=2q>2}
For a co-bipartite graph $G$ with all the notations as described above and $p=2,q>2$, we have $evc(G)\neq mvc(G)$ if and only if exactly one vertex from $A$ is adjacent to at least one vertex from the $B$ side, or there is exactly one non-global vertex on the $B$ side.
\end{lemma}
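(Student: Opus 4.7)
The plan is to follow the structure of the proof of~\Cref{big}, with case analysis adapted to the smaller $A$-side (here $p=2$). By~\Cref{vc}, $\operatorname{mvc}(G) = q$, and since $G[B]$ is a $q$-clique, any vertex cover of size $q$ takes the form $S_{ij}$ with $a_i$ a friend of $b_j$. I would split into cases based on (a) how many of $a_1, a_2$ have at least one neighbour in $B$, and (b) the number of non-global vertices in $B$. Along the way I would observe that if both $a_1$ and $a_2$ have $B$-neighbours, then $B$ must contain at least one non-global vertex, for otherwise each $b \in B$ would be adjacent to both $a_1, a_2$, making both $a_i$ global.

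For the cases where $\operatorname{evc}(G) = \operatorname{mvc}(G)$, there are two subcases to handle. First, when neither $a_1$ nor $a_2$ has a $B$-neighbour, $G$ is the disjoint union $K_2 \cup K_q$, and the defender wins with $q$ guards by placing $|C|-1$ guards on each clique $C$ and using the standard single-hole strategy. Second, when both $a_1, a_2$ have $B$-neighbours and $B$ contains at least two non-global vertices, I would maintain the invariant that the configuration is some $S_{ij}$ with $b_j \in B_1$ a friend of $a_i$, and exhibit a defence for each attack type (intra-$A$, intra-$B$, or cross-edge) by concrete move sequences in the style of~\Cref{big} Cases $3b$ and $3c\mathrm{III}$. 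The key difference from~\Cref{big} is that, since $p=2$, any move that evacuates the $A$-guard must send it into $B$, which is possible precisely because $a_r$ has $B$-neighbours by the defining assumption of this subcase.

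For the cases where $\operatorname{evc}(G) \neq \operatorname{mvc}(G)$, I would again give two subcases with attacker strategies. First, when exactly one of $a_1, a_2$ (say $a_1$) has $B$-neighbours, the only neighbour of $a_2$ is $a_1$, so no guard in $A$ can be routed into $B$; the attacker wins by first attacking $a_1 a_2$ (if needed) to force the $A$-guard onto $a_2$, and then attacking some $a_1 b_k$ with $b_k \in N(a_1) \cap B$, which pins two guards in $A$ and leaves only $q-2$ guards to cover the $q$-clique $B$, yielding an exposed edge in $B$. Second, when both $a_1, a_2$ have $B$-neighbours but $B$ contains exactly one non-global vertex $b_1$, I would argue that $b_1$ must be a friend of both $a_1, a_2$ (else one of the $a_i$ would be global), so the only size-$q$ vertex covers are $S_{11}$ and $S_{21}$; the attacker then wins by attacking $b_1 b_2$, and the no-defence argument parallels that of~\Cref{big} Case $3a$.

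The main obstacle will be the defender strategy in the second $\operatorname{evc}=\operatorname{mvc}$ subcase, specifically the response to cross-edge attacks $a_i b_s$ when $a_i$ is currently the uncovered $A$-vertex. Here the defender must simultaneously pull a guard along $b_s \to a_i$ and evacuate $a_r$ into $B$, while ensuring the resulting configuration is again some $S_{i'j'}$. The correct routing depends on the friendship relations of $b_j$ and $b_s$ with respect to $a_r$---namely, whether each lies in $B_1^1$ (friend of $a_1$ only), $B_1^2$ (friend of $a_2$ only), or $B_1^{12}$ (friend of both)---and several such sub-subcases must be handled separately. The hypothesis $|B_1| \geq 2$ is what guarantees the existence of an intermediate vertex $b_k \in B_1 \setminus \{b_j, b_s\}$ serving as a safe landing for a cascading move when the primary routing fails; verifying correctness across all these sub-subcases will form the bulk of the proof.
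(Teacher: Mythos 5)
Your proposal follows essentially the same route as the paper: the same four-way case split (no $A$--$B$ edges; exactly one $A$-vertex with $B$-neighbours; both with $B$-neighbours and at least two non-global $B$-vertices; both with $B$-neighbours and exactly one non-global $B$-vertex), the same attacker strategies in the two hardness cases, and the same invariant-maintenance defence in the tractable ones (the paper merely splits your third case further according to whether any $B$-vertex is adjacent to both $a_1$ and $a_2$). Your attack sequence in the ``exactly one $A$-vertex adjacent to $B$'' case --- attack $a_1a_2$ first to force the guard off $a_1$, then attack a cross edge at $a_1$ --- is if anything a slightly cleaner rendering of the paper's argument there.
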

\begin{proof}
Since $p=2$, by \Cref{vc} we have $mvc(G)=p+q-2=2+q-2=q$. 
\paragraph*{Case $1$: There is no edge between $A$ and $B$.}
Since $G$ consists of two disjoint cliques of size $2$ and $q$, we have $evc(G)=mvc(G)=2-1+q-1=q=p+q-2$.

\paragraph*{Case $2$: Exactly one vertex from $A$ is adjacent to at least one vertex from the $B$ side.}
Let $a_1$ be the vertex from the $A$ side which is adjacent to at least one vertex from the $B$ side. Let $\{b_1,b_2,\ldots,b_k\}$ be the vertices on the $B$ side which are adjacent to $a_1$ where $1\leq k<q$. Any vertex cover of size $q$ will be of the form $S_{1j}$ where $j>k$ or $S_{2i}$ where $i\in[1,q]$. Suppose we have an initial vertex cover $S_{2i}$, now attack $b_1b_2$. The guard on $b_1$ is forced to move to $b_2$. So now we have a vertex cover which contains $b_2$, i.e., the guards will rearrange to get a vertex cover of the form $S_{1j}$ where $j>k$. Therefore, we can assume without loss of generality that the initial vertex cover is of the form $S_{1j}$ where $j>k$. Now attack $a_1b_1$. The guard on $b_1$ is forced to move to $a_1$. The guard on $a_2$ cannot move to the $B$ side. So there are $q-2$ guards on $B$ side to protect a clique of size $q$, thus some edge is unprotected. Thus $q$ guards are not sufficient to protect $G$ and $evc(G)\neq mvc(G)$.
\paragraph*{Case $3$: Both $a_1$ and $a_2$ have at least one neighbour from the $B$ side}
Let $B=B_1\cup B_2 \cup B_3 \cup B_4$ where $B_1$ is the set of vertices which are adjacent to $a_1$ but not adjacent to $a_2$, $B_2$ is the set of vertices which are adjacent to $a_2$ but not adjacent to $a_1$, $B_3$ is the set of vertices which are adjacent to both $a_1$ and $a_2$; and $B_4$ is the set of vertices which are neither adjacent to $a_1$ nor adjacent to $a_2$.

Since both $a_1$ and $a_2$ have at least one neighbour from the $B$ side, we have $|B_1\cup B_3|\geq 1$ and $|B_2\cup B_3|\geq 1$. Since both $a_1$ and $a_2$ are non-global, they both have at least one friend,i.e., $|B_2\cup B_4|\geq 1$ and $|B_1\cup B_4|\geq 1$. Therefore, $|B_3|<q$.

Any vertex cover of size $p+q-2$ is given by $S_{ij}$ where $b_j\in B_{3-i}\cup B_4$. 
\paragraph*{Case $a$: $B_3=\phi$}
Since $B_3=\phi$, we have $|B_1|\geq 1$ and $|B_2|\geq 1$.
Without loss of generality we start with a vertex cover $S_{1j}$ where $b_j \in B_2\cup B_4$. We show that each type of attack on this vertex cover can be defended and we will again get a vertex cover of size $p+q-2$. 

If any edge whose both endpoints have guards on them is attacked, then the guards exchange positions and we again have the same vertex cover.

The following types of edges can be attacked which are incident to $a_1$ or $b_j$.
\begin{enumerate}
    \item \textbf{The edge $a_1a_2$ is attacked:}
    
    The guard on $a_2$ moves to $a_1$ along the attacked edge. If $b_j \in B_4$, we are done and we have the new vertex cover $S_{2j}$. If not, then $b_j\in B_2$ and there exists $b_k\in B_1$ (because $|B_1|\geq 1$). Now the guard on $b_k$ moves to $b_j$ and we have the new vertex cover $S_{2k}$.
    \item \textbf{The edge $b_kb_j$ is attacked for $b_k\in B_2\cup B_4$:}
    
    The guard on $b_k$ moves to $b_j$ along the attacked edge. The resultant vertex cover will be $S_{1k}$.
    \item \textbf{The edge $b_kb_j$ is attacked for $b_k \in B_1$:}
    
    The guard on $b_k$ moves to $b_j$ along the attacked edge and the guard on $a_2$ moves to $a_1$. The resultant vertex cover will be $S_{2k}$.
    \item \textbf{The edge $a_1b_k$ is attacked for $b_k \in B_1$:}
    
    The guard on $b_k$ moves to $a_1$ along the attacked edge. If $b_j\in B_2$, the guard on $a_2$ moves to $b_j$ and we have the vertex cover $S_{2k}$. If $b_j \in B_4$, since $|B_2|\geq 1$, there exists $b_r\in B_2$. The guard on $b_r$ moves to $b_k$ and the guard on $a_2$ moves to $b_r$. The new vertex cover will be $S_{2j}$.
    \item \textbf{The edge $a_2b_j$ is attacked for $b_j\in B_2$:}
    
    The guard on $a_2$ moves to $b_j$ along the attacked edge. There exists $b_k\in B_1$ because $|B_1|\geq 1$. The guard on $b_k$ moves to $a_1$ and we have the new vertex cover $S_{2k}$. 
\end{enumerate}
Thus we have shown that each type of attack can be defended on a vertex cover of the form $S_{1j}$ where $b_j\in B_2\cup B_4$. So, $evc(G)=mvc(G)=p+q-2=q$.

\paragraph*{Case $b$: $1\leq|B_3|\leq q-2$}

Without loss of generality we start with a vertex cover $S_{1j}$ where $b_j \in B_2\cup B_4$. We show that each type of attack on this vertex cover can be defended and we will again get a vertex cover of size $p+q-2$. 

If any edge whose both endpoints have guards on them is attacked, the guards exchange positions along the attacked edge and we again have the same vertex cover.

The following types of edges can be attacked which are incident to $a_1$ or $b_j$.
\begin{enumerate}
    \item \textbf{The edge $a_1a_2$ is attacked:}
    
    The guard on $a_2$ moves to $a_1$ along the attacked edge. If $b_j \in B_4$, we are done and we have the new vertex cover $S_{2j}$. If not, then $b_j\in B_2$ and there exists $b_k\in B_1\cup B_4$ (because $|B_1\cup B_4|\geq 1$). Now the guard on $b_k$ moves to $b_j$ and we have  the new vertex cover $S_{2k}$.
    \item \textbf{The edge $b_kb_j$ is attacked for $b_k\in B_2\cup B_4$:}
    
    The guard on $b_k$ moves to $b_j$. The resultant vertex cover will be $S_{1k}$.
    \item \textbf{The edge $b_kb_j$ is attacked for $b_k \in B_1$:}
    
    The guard on $b_k$ moves to $b_j$ along the attacked edge and the guard on $a_2$ moves to $a_1$. The resultant vertex cover will be $S_{2k}$. If $b_kb_j$ is attacked for $b_k\in B_3$, guard on $b_k$ moves to $b_j$, there exists $b_r\in B_1\cup B_2\cup B_4$ where $r\neq j$ because $|B_3|\leq q-2$. Suppose there exists $b_r\in B_2\cup B_4$ with $r\neq j$, then guard on $b_r$ moves to $b_k$ and we have the vertex cover $S_{1r}$. If not, then there exists $b_r\in B_1$. Then the guard on $b_r$ moves to $b_k$ and the guard on $a_2$ moves to $a_1$. The resultant vertex cover will be $S_{2r}$.
    \item \textbf{The edge $a_1b_k$ is attacked for $b_k \in B_1$:}
    
    The guard on $b_k$ moves to $a_1$ along the attacked edge. If $b_j\in B_2$, the guard on $a_2$ moves to $b_j$ and we have the vertex cover $S_{2k}$. If $b_j \in B_4$, since $|B_2\cup B_3|\geq 1$, there exists $b_r\in B_2\cup B_3$. The guard on $b_r$ moves to $b_k$ and the guard on $a_2$ moves to $b_r$. The new vertex cover will be $S_{2j}$.
    \item \textbf{The edge $a_1b_k$ is attacked for $b_k \in B_3$:}
    
    The guard on $b_k$ moves to $a_1$ along the attacked edge. If $b_j\in B_4$, the guard on $a_2$ moves to $b_k$ and we have the vertex cover $S_{2j}$. Otherwise, $b_j\in B_2$ and since $|B_1\cup B_4|\geq 1$, there exists $b_r\in B_1\cup B_4$. Now the guard on $b_r$ moves to $b_k$ and the guard on $a_2$ moves to $b_j$, thus we have the vertex cover $S_{2r}$.
    \item \textbf{The edge $a_2b_j$ is attacked for $b_j\in B_2$:}
    
    The guard on $a_2$ moves to $b_j$ along the attacked edge. There exists $b_k\in B_1\cup B_4$ because $|B_1\cup B_4|\geq 1$. If there exists $b_k\in B_1$, the guard on $b_k$ moves to $a_1$ and we have the new vertex cover $S_{2k}$. Otherwise, $b_k\in B_4$. Consider some $b_r\in B_3$. The guard on $b_r$ moves to $a_2$ and the guard on $b_k$ moves to $b_r$. The new vertex cover will be $S_{1k}$.
\end{enumerate}

Thus we have shown that each type of attack can be defended on a vertex cover of the form $S_{1j}$ where $b_j\in B_2\cup B_4$. So, $evc(G)=mvc(G)=p+q-2=q$ (c.f.~\Cref{fig:lem15case3b}).

\begin{figure}
    \centering
    \begin{subfigure}[b]{\textwidth}
    %\centering
    \begin{tikzpicture}[square/.style={regular polygon,regular polygon sides=4}]
    \tikz{ 
    
    \draw [DodgerBlue,thick,rounded corners,fill=LightGray!25] (1.5,-0.25) rectangle (4,1.25);
    \draw [IndianRed,thick,rounded corners,fill=LightGray!25] (1.5,2.25) rectangle (8,3.75);

    % Some original edges
    \draw[thick] (2,0.5) -- (2,3);
    \draw[thick] (3,0.5) -- (3,3);
    \draw[thick] (3,0.5) -- (4,3);
    \draw[thick] (3,0.5) -- (5,3);
    \draw[thick] (2,0.5) -- (6,3);
    \draw[thick] (3,0.5) -- (6,3);

    % Attacked edge
    \draw[thick,Red] (2,3) -- (3,3);
    \draw[thick] (2,0.5) -- (3,0.5);
   
    \foreach \x in {3}
        \node at (\x, 0.5) [square,draw,fill=YellowGreen,minimum size=20pt] () {};

    \node at (3,0.5) {$a_2$};

    \foreach \x in {2,4,5,6,7}
        \node at (\x, 3) [square,draw,fill=YellowGreen,minimum size=20pt] () {};

    \node at (2,3) {$b_1$};
    \node at (4,3) {$b_3$};
    \node at (5,3) {$b_4$};
    \node at (6,3) {$b_5$};
    \node at (7,3) {$b_6$};
    
    \draw[fill=DodgerBlue!77] (2, 0.5) circle (0.15cm);
    \draw[fill=Crimson!77] (3, 3) circle (0.15cm);
    
    \node at (0.5,0.5) {$A$};
    \node at (0.5,3) {$B$};
    
    \node at (0,4) {};

    }
    \end{tikzpicture}
    \caption{This figure demonstrates the case when $p=2,q>2$, between $1$ and $q-2$ global vertices on the $B$ side and an edge on the $B$ side (say $b_1b_2$) is attacked.}
    \label{fig:lem15case3b-1}
\end{subfigure}
%\end{figure}
    \begin{subfigure}[b]{\textwidth}
    \begin{tikzpicture}[square/.style={regular polygon,regular polygon sides=4}]
    \tikz{ 
    \draw [DodgerBlue,thick,rounded corners,fill=LightGray!25] (1.5,-0.25) rectangle (4,1.25);
    \draw [IndianRed,thick,rounded corners,fill=LightGray!25] (1.5,2.25) rectangle (8,3.75);

    % Some original edges
    \draw[thick] (2,0.5) -- (2,3);
    \draw[thick] (3,0.5) -- (3,3);
    \draw[thick] (3,0.5) -- (4,3);
    \draw[thick] (3,0.5) -- (5,3);

    % Attacked edge
    \draw[->,thick,Red] (2,3) -- (2.8,3);
    \draw[thick] (2,0.5) -- (3,0.5);
    \draw[->,thick] (6,3) -- (2.1,0.7);
    \draw[->,thick] (3,0.5) -- (5.7,2.7);
   
    \foreach \x in {3}
        \node at (\x, 0.5) [square,draw,fill=YellowGreen,minimum size=20pt] () {};

    \node at (3,0.5) {$a_2$};

    \foreach \x in {2,4,5,6,7}
        \node at (\x, 3) [square,draw,fill=YellowGreen,minimum size=20pt] () {};

    \node at (2,3) {$b_1$};
    \node at (4,3) {$b_3$};
    \node at (5,3) {$b_4$};
    \node at (6,3) {$b_5$};
    \node at (7,3) {$b_6$};
    
    \draw[fill=DodgerBlue!77] (2, 0.5) circle (0.15cm);
    \draw[fill=Crimson!77] (3, 3) circle (0.15cm);
    
    \node at (0.5,0.5) {$A$};
    \node at (0.5,3) {$B$};

    \node at (0,4) {};

    }
    \end{tikzpicture}
    \caption{This figure demonstrates the defense for the attack in \Cref{fig:lem15case3b-1}. The guard on $b_1$ moves to $b_2$, the guard on $b_5$ moves to $a_1$ and the guard on $a_2$ moves to $b_5$.}
    \label{fig:lem15case3b-2}
\end{subfigure}
%\end{figure}
%%%%%%%%%%%%%%%%%%%%
% After the attack
%%%%%%%%%%%%%%%%%%%%
%\hfill
%\begin{figure}
\begin{subfigure}[b]{\textwidth}
    \centering
    \begin{tikzpicture}[square/.style={regular polygon,regular polygon sides=4}]
    \tikz{ 
    \draw [DodgerBlue,thick,rounded corners,fill=LightGray!25] (1.5,-0.25) rectangle (4,1.25);
    \draw [IndianRed,thick,rounded corners,fill=LightGray!25] (1.5,2.25) rectangle (8,3.75);
    % Some original edges
    \draw[thick] (2,0.5) -- (2,3);
    \draw[thick] (3,0.5) -- (3,3);
    \draw[thick] (3,0.5) -- (4,3);
    \draw[thick] (3,0.5) -- (5,3);
    \draw[thick] (2,0.5) -- (6,3);
    \draw[thick] (3,0.5) -- (6,3);

    % Attacked edge
    \draw[thick,Red] (2,3) -- (3,3);
    \draw[thick] (2,0.5) -- (3,0.5);
   
    \foreach \x in {2}
        \node at (\x, 0.5) [square,draw,fill=YellowGreen,minimum size=20pt] () {};

    \node at (2,0.5) {$a_1$};

    \foreach \x in {3,4,5,6,7}
        \node at (\x, 3) [square,draw,fill=YellowGreen,minimum size=20pt] () {};

    \node at (3,3) {$b_2$};
    \node at (4,3) {$b_3$};
    \node at (5,3) {$b_4$};
    \node at (6,3) {$b_5$};
    \node at (7,3) {$b_6$};
    
    \draw[fill=DodgerBlue!77] (3, 0.5) circle (0.15cm);
    \draw[fill=Crimson!77] (2, 3) circle (0.15cm);
    
    \node at (0.5,0.5) {$A$};
    \node at (0.5,3) {$B$};

    \node at (0,4) {};
     
    \node at (14,4) {};

    }
    \end{tikzpicture}
    \caption{This figure demonstrates the position of guards after executing the defense in \Cref{fig:lem15case3b-2}.}
    \label{fig:lem15case3b-3}
\end{subfigure}
\caption{This figure demonstrates Case $3b$ from \Cref{p=2q>2}.}
\label{fig:lem15case3b}
\end{figure}

\paragraph*{Case $c$: $|B_3|=q-1$}
Let $b_1\notin B_3$. Therefore, $b_1$ has at least one friend, say $a_1$ (without loss of generality). So, $b_1\in B_2\cup B_4$. Suppose $b_1\in B_2$. Then $a_2$ will be a global vertex which is not possible according to the convention. So $b_1 \in B_4$. Thus any vertex cover of size $p+q-2$ will be of the form $S_{i1}$ where $i\in \{1,2\}$. Now attack $b_1b_2$. The guard on $b_2$ will be forced to move to $b_1$. If some guard from the $A$ side comes to the $B$ side, $a_1a_2$ will be unprotected. If not, then there are $q-1$ guards on the $B$ side out of which one is on $b_1$. Therefore some $b_j\in B_3$ is left empty and since some $a_r$ from the $A$ side will be empty ($r$ may be equal to $i$), the edge $a_rb_j$ will be vulnerable.

Thus $p+q-2$ guards are not sufficient to protect $G$ and thus $evc(G)\neq mvc(G)$.
\end{proof}
 
It follows from~\Cref{big,p=0,p=1,p=q=2,p=2q>2} that given a co-bipartite graph, its eternal vertex cover number can be calculated based on the number of vertices and degree of each vertex. Thus we have a polynomial time algorithm for \EVC{} on co-bipartite graphs. This concludes the proof of Theorem 2.
% {\color{IndianRed}\section{Improved FPT Algorithms for Bipartite Graphs}}

% {\color{IndianRed}\section{Structural Aspects of Eternal Vertex Cover on Bipartite Graphs}}

\section{Concluding Remarks}
\label{sec:concl}

We established the hardness of~\EVC{} on bipartite graphs of constant diameter. We also showed that, under standard complexity-theoretic assumptions, the problem does not admit a polynomial compression on these graph classes when parameterized by the number of guards. Because of the relationship between $mvc(G)$ and $evc(G)$, this also implies hardness when paramterized by the vertex cover number. In the light of these developments, it will be interesting to pursue improved FPT and approximation algorithms for these classes of graphs. It is also unclear if~\EVC{} is in NP even on these classes of graphs. 

\newpage 

\bibliography{refs}

% \printbibliography[type=book,heading=subbibliography,title={Books}]

\end{document}